\newtheorem{theorem}{Theorem}
\newtheorem{proposition}[theorem]{Proposition}
\newcommand{\ba}{\begin{array}}
\newcommand{\ea}{\end{array}}
\newcommand{\RR}{{ {\rm I} \! \kern -.05em {\rm R} }}
\begin{document}
\bibliographystyle{abbrv}
\title{From regional pulse vaccination to global disease eradication: insights from a mathematical model of Poliomyelitis
}
%\title{Pulse Vaccination in a Disease Metapopulation Model with Application to Polio Eradication}
%\title{Pulse Vaccination in a Disease Metapopulation Model:  Application to Polio Eradication}
\date{}
\author{Cameron Browne$^1$, Robert Smith?$^2$, Lydia Bourouiba$^3$}
\maketitle

\noindent
1. Department of Mathematics, Vanderbilt University, Nashville TN, cameron.j.browne@vanderbilt.edu

\noindent
2. Department of Mathematics and Faculty of Medicine, The University of Ottawa, Ottawa ON, \\
rsmith43@uottawa.ca 
(to whom correspondence should be addressed)

\noindent
3. Department of Mathematics, Massachusetts Institute of Technology, Boston MA, lbouro@mit.edu

%\linenumbers

\begin{abstract}

\noindent
Mass-vaccination campaigns are an important strategy in the global fight 
against poliomyelitis and  measles.  The large-scale logistics required for these mass immunisation campaigns magnifies the need for research into the effectiveness and optimal deployment of pulse vaccination.  In order to better understand this control strategy, we propose a mathematical model accounting for the disease dynamics in connected 
regions, incorporating seasonality, environmental reservoirs 
 and independent periodic pulse vaccination schedules in each region.  The effective reproduction number, $R_e$, is defined and proved to be a global threshold for persistence of the disease.  Analytical and numerical calculations show the importance of synchronising the pulse vaccinations in connected regions and the timing of the pulses with respect to 
the pathogen circulation seasonality. Our results indicate that it may be crucial for mass-vaccination programs, such as national immunisation days, to be synchronised across different regions.  In addition, simulations show that a migration imbalance can increase $R_e$ and alter how pulse vaccination should be optimally distributed among the patches, similar to results found with constant-rate vaccination.  Furthermore, contrary to the case of constant-rate vaccination, the fraction of environmental transmission affects the value of $R_e$ when pulse vaccination is present.

 \end{abstract}

\section{Introduction}

There are three criteria for the eradication of an infectious disease: 1.~biological and technical feasibility; 2.~costs and benefits; and 3.~societal and political considerations \cite{AlywardHennessey}. 
Despite eradication hopes for malaria, yaws and yellow fever in the twentieth century, smallpox remains the only human disease eradicated \cite{AlywardHennessey}. Current eradication programs include poliomyelitis (polio) \cite{WHOpolio}, leprosy \cite{KealeySmith?} and guinea worm disease \cite{GWD}. Measles, rubella, and hepatitis A and B are also biologically and technically feasible candidates for eradication \cite{Losos}. 
Despite strong biological, technical and cost-benefit arguments for infectious-disease eradication, securing societal and political commitments is a substantial challenge \cite{AlywardHennessey}.

With communities more connected than ever, the control or eradication of an infectious disease requires coordinated efforts at many levels, from cities to nations.  For vaccine-preventable diseases, public health authorities plan immunisation strategies across varying regions with limited resources.  The World Health Organization (WHO) has helped to organise global immunisation efforts, leading to significant global reduction in polio and measles cases \cite{WHOpolio}.  
One vaccination strategy that has been utilised in the global fight against polio and measles is mass immunisation, which may be regarded as a pulse vaccination \cite{NID}.  The complex logistics required for these mass-immunisation campaigns magnifies the need for research into the effectiveness and optimal deployment of pulse vaccination \cite{ZipurskyBoualam}.

Pulse vaccination has been investigated in several mathematical models, often in disease models with seasonal transmission.  Many diseases show seasonal patterns in circulation; thus inclusion of seasonality may be crucial.  Agur et al. (1993) argued for pulse vaccination using a model of seasonal measles transmission, conjecturing that the pulses may antagonise the periodic disease dynamics and achieve control 
at a reduced cost of vaccination \cite{agur}.  Shulgin et al. (1998) investigated the local stability 
of the disease-free periodic solution in a seasonally forced population model with three groups: susceptible (S), infected (I) and recovered (R). They considered 
 pulse vaccination and explicitly found the threshold pulsing period \cite{shulgin}.   Recently, Onyango and M\"{u}ller considered optimal periodic vaccination strategies in the seasonally forced SIR model and found that a well-timed pulse is optimal, but its effectiveness is often close to that of constant-rate vaccination \cite{optimalvac}.  

In addition to seasonality, spatial structure has been recognised as an important factor for disease dynamics and control \cite{ChinaHIV}.  
Heterogeneity in the population movement, along with the patchy distribution of populations, suggests the use of
metapopulation models describing disease transmission in patches or 
spatially structured populations or regions. Mobility can be incorporated and 
tracked in these models in various forms. Common models include 
 linear constant fluxes representing long-term population motion (e.g. migration 
\cite{mig}) and  nonlinear mass-action representing  
short-term mobility \cite{lloyd}.  Liu et al. (2009) and Burton et al. (2012) 
considered epidemic models with both types of movement \cite{burton,liu}.   A possible inherent advantage of pulse vaccination in a spatially structured setting, discussed by Earn et al. (1998), is that the disease dynamics in coupled regions can become synchronised by pulse vaccination, thereby increasing the probability of global disease eradication \cite{Earn}.  Earn et al. presented simulations of patch synchronisation after simultaneous pulse vaccinations in a seasonal SEIR metapopulation model in which the patch population dynamics were initially out of phase. Here an additional population 
class of Exposed (E) was considered.    

Coordinating simultaneous pulse vaccination campaigns in connected regions may be vital for successful employment of pulse-vaccination strategies.  Indeed, employing synchronised pulse vaccinations across large areas in the form of National Immunisation Days (NIDs) and, on an international scale, with simultaneous NIDs, has been successful in fighting polio \cite{NID}.  An example of large-scale coordination among nations is Operation MECACAR (the coordinated poliomyelitis eradication efforts in Mediterranean, Caucasus and central Asian republics), which were initiated in 1995.  The project was viewed as a  success and an illustration of international coordination in disease control \cite{cdc}.  However, public health, including the control of infectious diseases and epidemics, has usually been managed on a national or regional scale, despite the potential impact of population movement \cite{cdc}.    

 Recently, pulse vaccination has been analysed in epidemic metapopulation models \cite{terry, yang}.  Terry (2010) \cite{terry} presented a sufficient condition for eradication in an SIR patch model with periodic pulse vaccinations independently administered in each patch with linear migration rates, but left open the problem of finding a threshold quantity for eradication and evaluating the effect of pulse synchronisation and seasonality.  Yang and Xiao (2010) \cite{yang} conducted a global analysis of an SIR patch model with synchronous periodic pulse vaccinations and linear migration rates; however, they did not allow for the different patches to administer the pulses at distinct times and seasons.  

A major poliovirus transmission route in Africa and the
Middle East is fecal-to-oral transmission. 
In this indirect route, often facilitated by inadequate water management, 
water  plays an analogous role to that of a reservoir, although such an environmental 
reservoir does not allow the pathogen to reproduce. Nevertheless, its effect on the pathogen dispersal   can dramatically 
modify epidemic  patterns 
\cite{BrebanDrake, BTW11}.  The competition 
between the direct and indirect transmission  routes was examined 
in the case of 
highly pathogenic avian influenza H5N1 \cite{BTW11}, showing that indirect
fecal-to-oral transmission could lead to a higher death toll than that
associated with direct contact transmission.

In this article, we consider an SIR metapopulation model with both short- and long-term mobility, direct and indirect (environmental) transmission, seasonality and independent periodic pulse vaccination in each patch.  The primary objectives are to find the effective reproduction number, $R_e$, prove that it provides a sharp eradication threshold and assess the optimal timing of pulse vaccinations in the sense of minimising  $R_e$.   
Our mathematical model and analysis allow us to evaluate how pulse synchronisation across connected patches affects the efficacy of the overall pulse-vaccination strategy.  We also determine how different movement scenarios affect the optimal deployment of vaccinations across the patches, along with considering how environmental transmission affects results. Finally, we discuss how 
pulse vaccination and constant-rate vaccination strategies compare when considering the goal of poliomyelitis global eradication.

This paper is organised as follows.  In Section \ref{sec2}, we describe and give motivation for the mathematical model.  In Section \ref{sec3}, we analyse the disease-free system, which is necessary to characterise the dynamics of the model.  In Section \ref{sec4}, $R_e$ is defined.  In Section \ref{sec5}, we prove that if $R_e<1$, the disease dies out, and if $R_e>1$, then it is uniformly persistent.  In Section \ref{sec6}, we consider a two-patch example, which provides insight into the optimal timing of pulse vaccinations, the effect of mobility and environmental transmission parameters on $R_e$, and a comparison of pulse vaccination to constant-rate vaccination in this setting.  In this section, we also prove that pulse synchronization is optimal for a special case of the model and provide numerical simulations to illustrate this result in more general settings.  Finally, in Section \ref{sec7}, we provide a discussion of the implications of our results and future work to consider.

\section{The mathematical model} \label{sec2}

We consider a variant of an SIR metapopulation model with $N$ patches, each with populations of susceptible, infected and recovered denoted by $S_j$, $I_j$ and $R_j$ for each patch $1\leq j\leq N$.   All three groups migrate from patch $j$ to patch $i$, $i\neq j$, at the rates $m_{ij}S_j$, $k_{ij}I_j$ and $l_{ij}R_j$.  The per capita rates at which susceptible, infected and recovered leave patch $i$ are $m_{ii}=-\sum_{j\neq i} m_{ji}$, $k_{ii}=-\sum_{j\neq i} k_{ji}$ and $l_{ii}=-\sum_{j\neq i} l_{ji}$, respectively.  The effect of short-term mobility on infection dynamics is modelled by mass-action coupling terms; for example, $\beta_{ij}(t)I_jS_i$.  For this infection rate, infected individuals from patch $j$ are assumed to travel to patch $i$, infect some susceptibles in patch $i$ and then return to patch $j$ on a shorter timescale than that of the disease dynamics. Conversely, susceptibles from patch $i$ can travel to patch $j$, become infected and return to patch $i$ on the shorter timescale.

 In the model for poliomyelitis presented herein, both direct contact
 and indirect environmental routes are considered. The environmental
 contamination of the virus in each patch $j$ is described by a
 state variable, denoted by $G_j$.  Infected individuals in patch $j$
 ($I_j$) shed the virus into the environmental reservoir $G_j$ at the
 rate $\xi_j(t)I_j$.  The virus in the environmental reservoir cannot
 reproduce outside of the host and decays at the rate $\nu_j(t)G_j$.
 The virus in the environmental reservoir $j$, $G_j$, contributes to
 the infected population in patch $I_i$ through the mass-action term
 $\epsilon_{ij}(t)S_iG_j$.  Direct transmission contributing to $I_i$
 is represented by the mass-action term $S_i\sum_j \beta_{ij}(t)I_j$.
 Due to possible seasonality of poliovirus circulation
 \cite{WHOpolio}, both direct and environmental transmission
 parameters $\beta_{ij}(t)$, $\epsilon_{ij}(t)$, $\xi_{i}(t)$ and
 $\nu_i(t)$ are assumed to be periodic with a period of one year.

Pulse vaccination is modelled through impulses on the system occurring at fixed times.  First, we consider a general pulse vaccination scheme with no periodicity.  For each patch $i$,  pulse vaccinations occur at times $t_{i,n}$ where $n=1,2,\dots$  At time $t_{i,n}$, a fraction $\psi_{i,n}$ of the susceptible population $S_i$ is instantly immunised and transferred to the recovered class $R_i$.  Therefore $S_i\left((t_{i,n})^+\right)=\left(1-\psi_{i,n}\right)S_i\left((t_{i,n})^-\right)$ and $R_i\left((t_{i,n})^+\right)=\psi_{i,n}S_i\left((t_{i,n})^-\right)$, where $S_i\left((t_{i,n})^+\right)$ and $S_i\left((t_{i,n})^-\right)$ denote limits from the right-hand side and left-hand side, respectively.  

For each patch $i$, demography is modeled with constant birth rate, $b_i$, into the susceptible class and a per capita death rate, $\mu_i$.  The parameter $p_i$ represents the fraction of newborns who are successfully vaccinated.  The parameter $\gamma_i$ is the recovery rate.  Note that both recovery from infection and successful vaccination induce perfect life-long immunity.  All parameters are assumed to be non-negative, and the  parameters $\nu_i(t)$, $\mu_i$ and $b_i$ are additionally assumed to be positive.  We thus arrive at the following mathematical model:
\begin{equation}\label{fmod}
\begin{aligned}
\displaystyle \frac{dS_i}{dt}&=\displaystyle (1-p_i)b_i-\mu_iS_i-S_i\sum_j \beta_{ij}(t)I_j-S_i\sum_j \epsilon_{ij}(t)G_j+\sum_{j}m_{ij}S_j   & t&\neq t_{i,n}\\ 
\displaystyle\frac{dI_i}{dt}&=\displaystyle S_i\sum_j \beta_{ij}(t)I_j+S_i\sum_j \epsilon_{ij}(t)G_j-(\mu_i+\gamma_i)I_i +\sum_{j}k_{ij}I_j   & t&\neq t_{i,n} \\ 
\displaystyle \frac{dG_i}{dt}&=\displaystyle \xi_i(t)I_i-\nu_i(t)G_i  \phantom{+S_i\sum_j \epsilon_{ij}(t)G_j}  & t&\neq t_{i,n}\\ 
\displaystyle \frac{dR_i}{dt}&=\displaystyle p_ib_i+\gamma_iI_i-\mu_iR_i+\sum_{j}l_{ij}R_j    & t&\neq t_{i,n} \\
S_i\left(t_{i,n}^+\right)&=\left(1-\psi_{i,n}\right)S_i\left(t_{i,n}^-\right)  & t&=t_{i,n} \\
R_i\left(t_{i,n}^+\right)&=\psi_{i,n}S_i\left(t_{i,n}^-\right) & t&=t_{i,n}.
\end{aligned}
\end{equation}

Consider the non-negative cone of $\mathbb{R}^{4N}$, denoted by $X=\mathbb{R}^{4N}_+$.  The following theorem shows existence and uniqueness of solutions to (\ref{fmod}), the invariance of $\mathbb{R}^{4N}_+$ and ultimate uniform boundedness of solutions.

\begin{theorem}\label{P1}
 For any initial condition $x_0\in \mathbb{R}^{4N}_+$, there exists a unique solution to system (\ref{fmod}), $\varphi(t,x_0)$, which is smooth for all $t\neq t_{i,n}$ and the flow $\varphi(t,x)$ is continuous with respect to initial condition $x$.  Moreover, the non-negative quadrant $\mathbb{R}^{4N}_+$ is invariant and there exists $M>0$ such that $\limsup_{t\rightarrow\infty}\left\|\varphi(t,x)\right\|\leq M$ for all $x\in \mathbb{R}^{4N}_+$.
\end{theorem}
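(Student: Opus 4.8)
The plan is to treat (\ref{fmod}) as an impulsive differential system, handling the continuous flow and the jump maps separately and then patching them together. I will assume throughout that the union of all pulse times $\{t_{i,n}\}$ forms a strictly increasing sequence with no finite accumulation point, so that between two consecutive pulses the system is a genuine (non-impulsive) ODE. On each such interpulse interval the right-hand side of (\ref{fmod}) is a polynomial in the states $(S_j,I_j,G_j,R_j)$ whose coefficients are the continuous, one-year-periodic functions $\beta_{ij}(t),\epsilon_{ij}(t),\xi_i(t),\nu_i(t)$ together with constants; hence it is $C^\infty$ in the state and continuous in $t$, and in particular locally Lipschitz. Picard--Lindel\"{o}f then yields a unique solution on each interpulse interval, smooth there. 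The jump conditions are the explicit affine maps $S_i\mapsto(1-\psi_{i,n})S_i$ and $R_i\mapsto\psi_{i,n}S_i$ (with $I_i,G_i$ unchanged); composing the interpulse flow with these maps produces a unique global solution $\varphi(t,x_0)$, smooth for $t\neq t_{i,n}$. Continuity of $\varphi(t,\cdot)$ follows because the interpulse flow depends continuously on initial data (standard ODE dependence) and the jump maps are affine, hence continuous, so a finite composition of continuous maps is continuous on each compact time window.

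For invariance of $X=\mathbb{R}^{4N}_+$, I would verify that the vector field is subtangential to each boundary face and that the jumps preserve $X$. On the face $\{S_i=0\}$ one has $dS_i/dt=(1-p_i)b_i+\sum_{j\neq i}m_{ij}S_j\ge 0$, since $m_{ij}\ge 0$ for $j\neq i$ and the term $m_{ii}S_i$ vanishes; on $\{I_i=0\}$, $dI_i/dt=S_i\sum_j\beta_{ij}I_j+S_i\sum_j\epsilon_{ij}G_j+\sum_{j\neq i}k_{ij}I_j\ge 0$; on $\{G_i=0\}$, $dG_i/dt=\xi_i(t)I_i\ge 0$; and on $\{R_i=0\}$, $dR_i/dt=p_ib_i+\gamma_iI_i+\sum_{j\neq i}l_{ij}R_j\ge 0$. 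Each expression is non-negative whenever the remaining coordinates lie in $X$, so by the standard Nagumo/quasi-positivity criterion the continuous flow leaves $X$ invariant. Because $\psi_{i,n}\in[0,1]$, the jumps send non-negative $S_i,R_i$ to non-negative values, so $X$ is invariant under the full impulsive flow.

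The substantive step is ultimate uniform boundedness. The key observation is that the migration coefficients satisfy the column-sum identity $\sum_i m_{ij}=0$ (and likewise for $k,l$), which follows directly from $m_{ii}=-\sum_{j\neq i}m_{ji}$, so all migration terms cancel when the host equations are summed over patches. Setting $P(t)=\sum_i\big(S_i+I_i+R_i\big)$ and adding the $S_i,I_i,R_i$ equations, the transmission terms cancel in pairs, giving for $t\neq t_{i,n}$
\begin{equation*}
\frac{dP}{dt}=\sum_i b_i-\sum_i\mu_i\big(S_i+I_i+R_i\big)\le \Big(\sum_i b_i\Big)-\mu_{\min}P,\qquad \mu_{\min}:=\min_i\mu_i>0 .
\end{equation*}
At a pulse the quantity $S_i+R_i$ is not increased by the jump (it is conserved, or at worst decreased, depending on the reading of the $R_i$ update), so $P$ never jumps upward; the differential inequality therefore governs the asymptotics across pulses and yields $\limsup_{t\to\infty}P(t)\le \big(\sum_i b_i\big)/\mu_{\min}=:\bar P$. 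Since every host coordinate is non-negative and bounded by $P$, each $S_i,I_i,R_i$ is ultimately bounded by $\bar P$.

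Finally I bootstrap to the reservoir variables. Using $\limsup_{t\to\infty} I_i(t)\le\bar P$ and the positivity and periodicity of $\nu_i,\xi_i$, set $\bar\xi_i=\max_t\xi_i(t)$ and $\underline\nu_i=\min_t\nu_i(t)>0$; then for large $t$ one has $dG_i/dt\le \bar\xi_i\bar P-\underline\nu_i G_i$, with $G_i$ continuous across pulses, whence $\limsup_{t\to\infty}G_i(t)\le \bar\xi_i\bar P/\underline\nu_i$. Collecting all the bounds produces a single $M>0$, independent of $x$, with $\limsup_{t\to\infty}\|\varphi(t,x)\|\le M$. The main obstacle I anticipate is bookkeeping rather than conceptual: one must confirm the migration cancellation precisely and handle the infinitely many pulse instants carefully so that the $\limsup$ estimate survives the jumps. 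Once the column-sum identity is in hand, however, the boundedness argument is just a two-step comparison, first for the host total $P$ and then for each reservoir variable $G_i$.
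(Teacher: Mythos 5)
Your proposal is correct and follows essentially the same route as the paper: cone invariance via sign checks on the boundary of $\mathbb{R}^{4N}_+$, ultimate boundedness by summing the host equations (using the cancellation of migration terms), a comparison principle for the total population, and then a bootstrap to bound the reservoir variables $G_i$. The only differences are presentational --- the paper cites the impulsive-ODE literature for existence, uniqueness and regularity where you argue directly via piecewise Picard--Lindel\"{o}f plus affine jump maps, and your handling of the pulse times and of the ambiguity in the $R_i$ jump is somewhat more careful than the paper's.
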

\begin{proof}
The existence, uniqueness, and regularity for non-impulse times come from results that can be found in \cite{impulsetheory}.  In order to show the invariance of $\mathbb{R}^{4N}_+$, consider the set $\partial Y^0\equiv  \left\{x\in \mathbb{R}^{4N}_+: x_i=0, \  N+1\leq i\leq 3N \right\}$.  On this set, $\frac{dI_i}{dt}=\frac{dG_i}{dt}=0$, so $\partial Y^0$ is invariant.  Also, notice that $\frac{dS_i}{dt}\geq 0$ if $S_i=0$.  Then, by uniqueness of solutions, we find that $\mathbb{R}^{4N}_+$ is invariant.

To show ultimate boundedness, consider the total population of individuals, $N\equiv \sum_i \left(S_i+I_i+R_i\right)$.   Then, adding all the appropriate equations of (\ref{fmod}), we obtain that 
$$\frac{dN(t)}{dt}\leq b-\mu N (t)\qquad \forall t\geq 0, $$
where $b=\max(b_1,\dots,b_N)$ and $\mu=\min(\mu_1,\dots,\mu_N)$.  A simple comparison principle yields $\limsup_{t\rightarrow\infty}N(t)\leq \frac{b}{\mu}$.  This implies that $\limsup_{t\rightarrow\infty}G(t)\leq \frac{\xi b}{\nu \mu}$ where $\xi=\max(\xi_1,\dots,\xi_n)$ and $\nu=\min(\nu_1,\dots,\nu_N)$.  Therefore, if $\varphi(t,x)$ denotes the family of solutions, then there exists $M>0$ such that $\limsup_{t\rightarrow\infty}\left\|\varphi(t,x)\right\|\leq M$ for all $x\in \mathbb{R}^{4N}_+$.
\end{proof}

In order to analyse the asymptotic dynamics of the system, we assume some periodicity in the impulses.  According to WHO guidelines, countries threatened by wild poliovirus should hold NIDs twice a year with 4--6 weeks separating the immunisation campaigns within a year \cite{NID}.  Hence we consider a sufficiently flexible schedule in order to cover this guideline.  Suppose that, for patch $i\in\left\{1,2,\dots,N\right\}$, the pulse vaccinations occur on a periodic schedule of period $\tau_i$. Assume that there exists $\tau\in \mathbb{N}$ such that $\tau =n_1\tau_1=\cdots=n_N\tau_N$, where $n_1,\dots,n_N\in\mathbb{N}$; i.e., there exists a common period $\tau$ for pulse vaccinations among the patches.  For each patch $i\in\left\{1,2,\dots,N\right\}$, we assume that there are $L_i$ pulse vaccinations that occur within the period $\tau_i$.  More precisely, the pulse vaccinations for patch $i$ occur at times $t=n\tau_i+\phi_i^{k}$, where $0 \leq \phi_i^k <\tau_i$, $n\in\mathbb{N}$ and $k\in\left\{1,..,L_i\right\}$.  Note that the recovered (or removed) classes are decoupled from the remaining system and can thus be neglected.  We obtain the following model:
\begin{equation}\label{mod} 
\begin{aligned}
\displaystyle\frac{dS_i}{dt}&=(1-p_i)b_i-\mu_iS_i-S_i\sum_j \beta_{ij}(t)I_j-S_i\sum_j \epsilon_{ij}(t)G_j+\sum_{j}m_{ij}S_j & t&\neq n\tau_i+\phi_i^{k} \\
\displaystyle\frac{dI_i}{dt}&=S_i\sum_j \beta_{ij}(t)I_j+S_i\sum_j \epsilon_{ij}(t)G_j-(\mu_i+\gamma_i)I_i  +\sum_{j}k_{ij}I_j & t&\neq n\tau_i+\phi_i^{k} \\
\displaystyle\frac{dG_i}{dt}&=\xi_i(t)I_i-\nu_i(t)G_i 
& t&\neq n\tau_i+\phi_i^{k} \\
S_i\left((n\tau_i+\phi_i^{k})^+\right)&=\left(1-\psi_i^{k}\right)S_i\left((n\tau_i+\phi_i^{k})^-\right)  & t&=n\tau_i+\phi_i^{k}. 
\end{aligned}
\end{equation}

Model (\ref{mod}) will be analysed in the ensuing sections.

\section{Disease-free system} \label{sec3}
In order to obtain a reproduction number, we need to determine the dynamics of the susceptible population in the absence of infection.  With this in mind, consider the following characterisation of the vaccinations.
Within the time interval $(0,\tau]$, there are $L_1n_1\cdot L_2n_2 \cdots L_Nn_N$ impulses.   Some of these impulses may occur at the same time.  Let $p\leq L_1n_1\cdot L_2n_2\cdots L_Nn_N$ be the number of distinct impulse times.  We label these impulse times in increasing order as follows: $0=t_0\leq t_1< t_2 <\cdots< t_p< t_{p+1}=\tau$.   For $1\leq i \leq N$ and $1\leq \ell \leq p$, let
\begin{align*}
\alpha_i^{\ell}&=
\begin{cases}
   1-\psi_i^k        & \text{if } t_{\ell}=n\tau_i+\phi_i^k  \\
   1        & \text{otherwise.}
  \end{cases}
 \end{align*}
In the absence of infection, we obtain the linear impulsive system:
\begin{align}
\frac{dx(t)}{dt}&=Ax(t)+b & t&\neq n\tau+t_{\ell} & \text{for} \ \ell \in\left\{1,\dots,p\right\}, n\in\mathbb{N}  \label{linear}\\
x\left((n\tau+t_{\ell})^+\right)&=D_{\ell} x\left((n\tau+t_{\ell})^-\right),\nonumber
\end{align}
where
\begin{align*}
x(t)&=\begin{pmatrix} x_1(t) \\ x_2(t) \\ \vdots \\ x_N(t) \end{pmatrix}, \qquad
A = \begin{pmatrix}
  -\mu_1+m_{11}& m_{12} & \cdots & m_{1n} \\
  m_{21} & -\mu_2+m_{22} & \cdots & m_{2n} \\
  \vdots  & \vdots  & \ddots & \vdots  \\
  m_{N1} & m_{N2} & \cdots & -\mu_N+m_{NN}
 \end{pmatrix}, \qquad  b =\begin{pmatrix}(1-p_1)b_1 \\ (1-p_2)b_2 \\ \vdots \\ (1-p_N)b_N \end{pmatrix} \\ \\
D_{\ell}&= {\rm diag}\left(\alpha_1^{\ell},\alpha_2^{\ell},\dots,\alpha_N^{\ell} \right), \qquad \text{and} \qquad m_{ii}=-\sum_{j\neq i} m_{ji}.
\end{align*}

The solution to the linear differential equation $\dot{x}=Ax+b$ (the no-impulse version of (\ref{linear})) is 
\begin{align}
\theta(t,x)=e^{tA}x+(e^{tA}-I)A^{-1}b, \quad \text{where} \ \ \theta(0,x)=x. \label{LinSys}
\end{align}
 We define the period map $F:\mathbb{R}^N\rightarrow\mathbb{R}^N$ for the impulsive system (\ref{linear}).  Here $F(x)=\zeta(\tau,x)$ where $\zeta(t,x)$ is the unique solution of the impulsive system (\ref{linear}) with $\zeta(0,x)=x$.  The interested reader is referred to the work of Bainov and Simeonov \cite{BS1,BS2,BS3} for more information on the theory of impulsive differential equations.
 
 The period map, $F(x)$, can be calculated through recursive relations.  Let $x_0=x$ and iteratively calculate $F(x)$ as follows:
\begin{align}
x_{j}&=D_j\theta(t_j-t_{j-1},x_{j-1}) \qquad \text{for } \ j=1,\dots,p  \label{recursion} \\
F(x)&=\theta(\tau-t_p,x_p).  \notag
\end{align}
Define the following matrices $C_j, \ 1\leq j \leq p+1$ and $C$:
\begin{align*}
C_{p+1}&=I \\
C_p&=e^{(\tau-t_p)A}\cdot D_{p} \\
C_j&=C_{j+1}\cdot e^{(t_{j+1}-t_j)A}\cdot D_j   \quad \text{for }1\leq j\leq p-1 \\
C&=C_1\cdot e^{t_1A}
\end{align*}
Utilising (\ref{LinSys}) to explicitly express the recursive relations in (\ref{recursion}), we obtain the following formula for $F(x)$:
\begin{align}
F(x)&= Cx +\sum_{j=1}^{p} C_{j+1} \left(e^{(t_{j+1}-t_{j})A}-I\right)A^{-1}b. \label{F1} 
\end{align}
This formula will be used to explicitly calculate the periodic solution obtained in the following theorem.  An example of the periodic solution is displayed in Figure \ref{fig:dpu}.

\begin{proposition}\label{lineargas}
The disease-free system (\ref{linear}) has a unique globally asymptotically stable $\tau$-periodic solution $\overline{S}(t)$.
\end{proposition}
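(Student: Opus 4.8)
The plan is to identify $\tau$-periodic solutions of \eqref{linear} with fixed points of the period map $F$, and to reduce global asymptotic stability of such a solution to global convergence of the iterates of $F$ to its fixed point. Since $F$ is affine, $F(x)=Cx+d$ with $d=\sum_{j=1}^{p}C_{j+1}\bigl(e^{(t_{j+1}-t_j)A}-I\bigr)A^{-1}b$ the constant vector in \eqref{F1}, the $n$-fold iterate is $F^{n}(x)=C^{n}x+\bigl(\sum_{k=0}^{n-1}C^{k}\bigr)d$. Thus everything hinges on the spectral radius of $C$: if $\rho(C)<1$, then $I-C$ is invertible, $F$ has the unique fixed point $x^{*}=(I-C)^{-1}d$, the Neumann series $\sum_{k\ge0}C^{k}$ converges to $(I-C)^{-1}$, and $C^{n}\to0$, so $F^{n}(x)\to x^{*}$ for every $x$. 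Flowing $x^{*}$ forward under \eqref{linear} produces the unique, globally attracting $\tau$-periodic solution $\overline S(t)$, and global asymptotic stability follows from the geometric convergence of the period map together with continuity of the between-impulse flow.

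The crux is therefore to prove $\rho(C)<1$. First I would record the structure of $A$: its off-diagonal entries $m_{ij}$ are non-negative, so $A$ is a Metzler matrix and $e^{tA}\ge0$ entrywise for all $t\ge0$; moreover, using $m_{ii}=-\sum_{j\neq i}m_{ji}$, the sum of column $i$ of $A$ equals $-\mu_i$. Writing $w(t)=\mathbf 1^{\top}e^{tA}$ for the row vector of column sums and $\mu_{\min}=\min_i\mu_i>0$, I would differentiate to get $w'(t)=\mathbf 1^{\top}Ae^{tA}=-(\mu_1,\dots,\mu_N)e^{tA}\le-\mu_{\min}\,\mathbf 1^{\top}e^{tA}=-\mu_{\min}w(t)$, where the inequality is entrywise and uses $e^{tA}\ge0$. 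A componentwise comparison (Gronwall) argument then gives $w(t)\le e^{-\mu_{\min}t}\mathbf 1^{\top}$; that is, every column sum of $e^{tA}$ is at most $e^{-\mu_{\min}t}$. This is also where invertibility of $A$ enters: since $-A$ has non-positive off-diagonal entries and strictly positive column sums $\mu_i$, it is a non-singular $M$-matrix, so $A^{-1}$ exists and $d$ is well defined.

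Next I would propagate this bound through the product defining $C$. Unfolding the recursion gives $C=e^{(\tau-t_p)A}D_p\,e^{(t_p-t_{p-1})A}D_{p-1}\cdots e^{(t_2-t_1)A}D_1\,e^{t_1A}$, a product of exponential factors whose exponents are non-negative and sum to $\tau$, interlaced with the diagonal matrices $D_\ell=\operatorname{diag}(\alpha_1^\ell,\dots,\alpha_N^\ell)$ whose entries lie in $[0,1]$. Since all factors are non-negative, $C\ge0$, and I can estimate its column sums by multiplying $\mathbf 1^{\top}$ through from the left: each exponential factor $e^{sA}$ contracts column sums by at least $e^{-\mu_{\min}s}$, while each $D_\ell$ can only decrease them because $0\le\alpha_i^\ell\le1$. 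Hence $\mathbf 1^{\top}C\le e^{-\mu_{\min}\tau}\mathbf 1^{\top}$ entrywise, so the maximum column sum of the non-negative matrix $C$ is at most $e^{-\mu_{\min}\tau}$. Because the spectral radius is dominated by the induced $1$-norm, which for a non-negative matrix equals its largest column sum, we conclude $\rho(C)\le\|C\|_1\le e^{-\mu_{\min}\tau}<1$.

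I expect the main obstacle to be this column-sum estimate and its clean propagation through the impulsive product: one must track left- versus right-multiplication carefully, exploit the Metzler and non-negativity structure so that the inequalities survive the products, and verify that the diagonal impulse factors never increase column sums. Once $\rho(C)<1$ is secured, the existence, uniqueness and global asymptotic stability of $\overline S(t)$ are immediate from the affine structure of the period map.
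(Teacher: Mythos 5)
Your proposal is correct, and its skeleton matches the paper's: both identify $\tau$-periodic solutions of (\ref{linear}) with fixed points of the affine period map $F(x)=Cx+d$, reduce the whole proposition to the claim $\rho(C)<1$, and then obtain existence, uniqueness and global attraction from the resulting contraction of the iterates together with boundedness of the between-impulse evolution. The genuine difference is how $\rho(C)<1$ is proved. The paper argues spectrally: since $A$ is quasi-positive, its stability modulus $s(A)$ is an eigenvalue with a non-negative eigenvector, and pairing that eigenvector with the column-sum identity $\vec{1}A=(-\mu_1,\dots,-\mu_N)$ gives $s(A)\le-\mu<0$; this is then converted into a decay estimate $\left\|e^{tA}\right\|<Ke^{-\mu t}$ with an unspecified constant $K$, leading to $\left\|C^n\right\|\le\left(K^pe^{-\mu\tau}\right)^n$. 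You instead prove a quantitative contraction directly in the induced $\ell^1$ norm: the Metzler structure gives $e^{tA}\ge0$, a componentwise Gronwall argument on $\mathbf{1}^{\top}e^{tA}$ bounds every column sum of $e^{tA}$ by $e^{-\mu_{\min}t}$, the diagonal impulse factors $D_\ell$ can only shrink column sums, and propagating this through the product gives $\left\|C\right\|_1\le e^{-\mu_{\min}\tau}<1$. This buys you two things. First, rigour: as written, the paper's bound $\left(K^pe^{-\mu\tau}\right)^n$ tends to zero only if $K^pe^{-\mu\tau}<1$, which is not justified since standard theory only yields some $K\ge1$; your computation shows that in the $\ell^1$ norm one may take $K=1$ (and $\left\|D_\ell\right\|_1\le1$ holds there too), which is exactly what is needed to close that step. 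Second, you get an explicit contraction rate $e^{-\mu_{\min}\tau}$, and your M-matrix remark cleanly supplies the invertibility of $A$ used in the $A^{-1}b$ terms (the paper gets this from $s(A)<0$). The one place you are terser than the paper is in upgrading $F^n(x)\to x^*$ to $\zeta(t,x)\to\overline{S}(t)$ for all $t$: the paper writes the solution at time $n\tau+s$ as the between-impulse evolution applied to $F^n(x)-\overline{x}$ and uses a uniform bound on those operators over $s\in[0,\tau)$; with your $\ell^1$ estimates that uniform bound is $1$, so this is a presentational omission rather than a gap.
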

\begin{proof}
A $\tau$-periodic solution of (\ref{linear}) corresponds to a fixed point of the period map.  Hence we consider the equation $\overline{x}=F\left(\overline{x}\right)$.
\begin{align*}
\overline{x}&=C\overline{x}+\sum_{j=1}^{p} C_{j+1} \left(e^{(t_{j+1}-t_{j})A}-I\right)A^{-1}b. 
\end{align*}
We claim that $\rho(C)<1$.  To prove this claim, consider the matrix $A$ and define the stability modulus of $A$, $s(A)$, as $s(A)\equiv \max\left\{{\rm Re}(\lambda): \lambda \ \text{is an eigenvalue of } A\right\}$.
Since $A$ is quasi-positive, $A$ has an eigenvalue $\lambda$ such that $\lambda=s(A)$ with an associated non-negative eigenvector $v$ \cite{Thieme}.  Then $w(t)=e^{\lambda t}v$ is a solution to $\dot{x}=Ax$.  Let $\vec{1}=\left(1,1,\dots,1\right)$ be the row vector of ones for $\mathbb{R}^N$.  Define $u(t)=\vec{1}ve^{\lambda t}$.  Then:
\begin{align*}
\frac{du}{dt}&=\vec{1}(\lambda v) e^{\lambda t} =\vec{1}Av e^{\lambda t} =(-\mu_1,-\mu_2,\dots,-\mu_N)ve^{\lambda t}\leq -\mu \vec{1}ve^{\lambda t},
\end{align*}
where $\mu=\min \left\{ \mu_i:  1\leq i \leq N \right\}$. Thus 
$\frac{du}{dt}= -\mu u(t)$, which implies $u(t)\leq \vec{1}ve^{-\mu t}$.  Thus  $\lambda \leq -\mu<0$,  so $s(A)\leq -\mu$.  Therefore, by the standard theory of linear differential equations, there exists a constant $K>0$ such that $\left\|e^{tA}\right\|<Ke^{-\mu t}$.  Then
\begin{align*}
\left\|C^n\right\|&\leq \left\| C\right\|^n  \\
                           & \leq \left(\left\|D_p\right\| || e^{(\tau-t_p)A}|| \left\| D_{p-1} \right\| \cdots \cdots \left\|e^{t_1A}\right\| \right)^n \\
                           &=\left(  || e^{(\tau-t_p)A}|| \prod_{i=1}^p ||e^{(t_i-t_{i-1})A}||\left\|D_i\right\|  \right)^n \\
                           & \leq (K^p e^{-\mu\tau})^n, \quad \text{since $\left\|D_i\right\|\leq 1$ and $||e^{(t_i-t_{i-1})A}||<Ke^{-\mu (t_i-t_{i-1})}$ for each $i$.}
\end{align*}
Thus  $\left\| C^n \right\| \rightarrow 0$ as $n\rightarrow\infty$.  By a standard equivalence result for linear discrete systems, we conclude that $\rho(C)<1$. It follows that the matrix $(I-C)$ is invertible. 
 Hence there is a unique fixed point $\overline{x}$ of the function $F(x)$, given by
\begin{align*}
\overline{x}&=(I-C)^{-1}\sum_{j=1}^{p} C_{j+1} \left(e^{(t_{j+1}-t_{j})A}-I\right)A^{-1}b.
\end{align*}
It follows that $\overline{S}(t)=\zeta(t,\overline{x})$ is a $\tau$-periodic solution of the impulsive system (\ref{linear}).

To show that $\overline{S}(t)$ is globally asymptotically stable, consider the solution $\zeta(t,x)$ to (\ref{linear}) with initial condition $x\in\mathbb{R}^N$.  Suppose $t\in[n\tau,(n+1)\tau)$; i.e., $t=n\tau+s$ for $s\in[0,\tau)$.  Then
\begin{align*}
\left|\zeta(t,x)-\overline{S}(t)\right|&=\left| \zeta(s,F^n(x))-\zeta(s,F^n(\overline{x})) \right| \\
&= \left| e^{(s-t_k)A}D_ke^{(t_k-t_{k-1})A}D_{k-1}\cdots \cdots e^{t_1A}\cdot (C)^n \left(x-\overline{x}\right) \right| \qquad \text{where} \ k\in\left\{0,1,2,\dots,p\right\}  \\
&\leq K \left|(C)^n \left(x-\overline{x}\right) \right|  \quad \text{for some constant } K>0.
\end{align*}
Since $(C)^n\rightarrow 0$ as $n\rightarrow\infty$, the above inequality implies that $\zeta(t,x)\rightarrow \overline{S}(t)$ as $t\rightarrow\infty$.  The above inequality also shows local stability of $\overline{S}(t)$.  Therefore $\overline{S}(t)$ is globally asymptotically stable for the linear impulsive system (\ref{linear}).
\end{proof}

\begin{figure}[t]
\centering
\includegraphics[width=9.5cm]{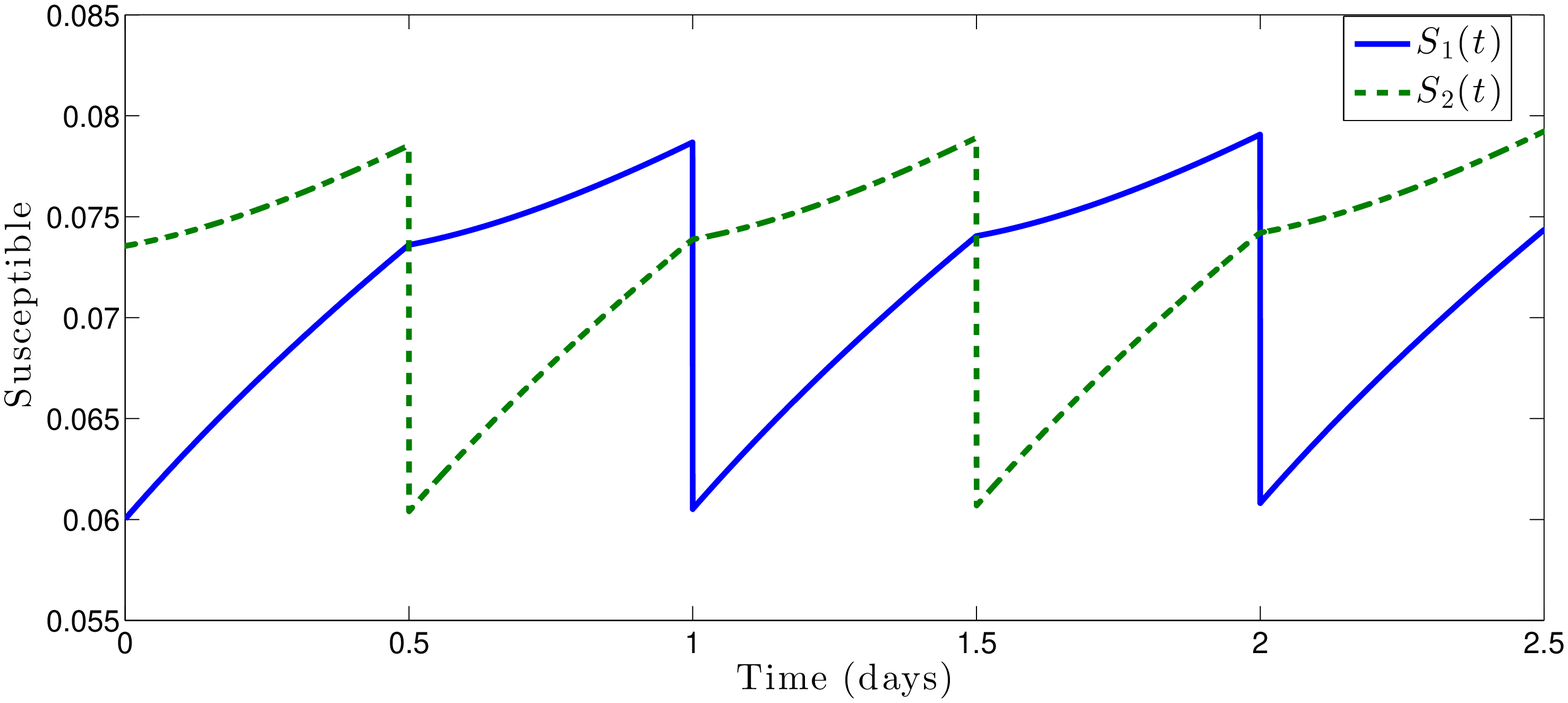}
\caption{The disease-free periodic orbit for the case of two patches with pulses that are administered once a year for each patch out of phase.  Parameters are: $b_1=b_2=\mu_1=\mu_2=1/50$, $\tau_1=\tau_2=\tau=1$, $\psi_1=\psi_2=0.231$, $\phi_1^1=0$ and $\phi_2^1=0.5$.}
  \label{fig:dpu}
  \end{figure}

\section{Reproduction number}\label{sec4}

A threshold between disease eradication and persistence can often be found by utilising the basic reproduction number, $R_0$ \cite{Heffernan}.  This number measures the average number of secondary infections in a wholly susceptible population in the most simple cases or, more generally, the per generation asymptotic growth factor \cite{bacaer2}.  In populations that are not wholly susceptible, such as those that have a significant number of vaccinated individuals, the effective reproductive ratio $R_e$ is used instead.
In many mathematical models, $R_e$ is simply calculated as a local stability threshold. On their own, such local thresholds may not measure the generational asymptotic growth rate and do not account for the possibility of backward bifurcations or other phenomena that may cause the disease to persist when $R_e<1$ \cite{LiBlakeley}.   It is thus crucial that the global dynamics be established and an appropriate definition be instilled for $R_e$ to be meaningful.

The definition of $R_e$ for a general class of periodic population dynamic models was first introduced by Bacaer and Guernaoui in 2006 \cite{bacaer1}.  While a threshold quantity can be often found using Floquet theory, a challenge for defining $R_e$ in periodic non-autonomous models is that the number of secondary cases caused by an infectious individual depends on the season.  The advantage of Bacaer's definition of $R_e$ is that it can be interpreted as the asymptotic ratio of total infections in two successive generations of the infected population and has the threshold properties of the dominant Floquet multiplier.  Wang and Zhao established an equivalent definition of $R_e$ for the case of compartmental periodic ordinary differential equation models \cite{wang}, which we will utilise.  

Define $\overline{S}_i(t), \ i=1,2,\dots,N$ as the components of the $\tau$-periodic orbit $\overline{S}(t)$.  Recall the full periodic model (\ref{mod}).  Define the disease-free $\tau$-periodic orbit, $z(t)$, and the point $\overline{z}\in\mathbb{R}^{3N}_+$ as follows:
\begin{align*}
z(t)&=\left(\overline{S}_1(t),\cdots,\overline{S}_N(t),0,\cdots \cdots,0\right) \\
\overline{z}&=\left(\overline{S}_1(0),\cdots,\overline{S}_N(0),0,\cdots \cdots,0\right).
\end{align*}
Now consider the first variational equation of the ``infectious class'' subsystem along the disease-free periodic orbit $\overline{z}(t)$:
\begin{equation}\label{imod}
\begin{aligned}
\frac{dI_i}{dt}&=\overline{S}_i(t)\sum_j \beta_{ij}(t)I_j+\overline{S}_i(t)\sum_j \epsilon_{ij}(t)G_j-(\mu_i+\gamma_i)I_i  +\sum_{j}k_{ij}I_j  \\
\frac{dG_i}{dt}&=\xi_i(t)I_i-\nu_i(t)G_i 
\end{aligned}
\end{equation}
Subsystem (\ref{imod}) can be written as follows:
\begin{align*}
\frac{dx}{dt}&= (F(t)-V(t))x, 
\end{align*}
where $$x(t)=\left(I_1(t),\dots, I_N(t),G_1(t),\dots,G_N(t) \right)^T \quad \text{and} \quad F_{ij}(t)=\overline{S}_i(t)\left(\beta_{ij}(t)+\epsilon_{ij}(t) \right) \ \text{ for } \ 1\leq i,j\leq N.$$
Here the matrix $F(t)$ represents the ``new infections''.  The matrix $V(t)$ consists of the removal and transition parameters in subsystem (\ref{imod}).

Subsystem (\ref{imod}) is a piecewise-continuous $\tau$-periodic linear differential equation on $\mathbb{R}^{2N}$.  Consider the principal fundamental solution to (\ref{imod}), denoted by $\Phi_{F-V}(t)$.  The Floquet multipliers of the linear system (\ref{imod}) are the eigenvalues of $\Phi_{F-V}(\tau)$.  It can be shown that there is a dominant Floquet multiplier, $r$, which is the spectral radius of $\Phi_{F-V}(\tau)$, i.e.
\begin{align}
r=\rho(\Phi_{F-V}(\tau)). \label{repnum}
\end{align}

Following \cite{wang}, let $Y(t,s)$, $t\geq s$, be the evolution operator of the linear $\tau$-periodic system
\begin{align}
\frac{dy}{dt}&=-V(t)y . \label{vsys}
\end{align}
The principal fundamental solution, $\Phi_{-V}(t)$, of \eqref{vsys} is $Y(t,0)$.  Clearly, $\rho\left(\Phi{-V}(\tau)\right)<1$.  Hence  there exists $K>0$ and $\alpha>0$ such that 
\begin{align*}
\left\| Y(t,s)\right\| \leq Ke^{-\alpha(t-s)}, \quad \forall t\geq s, \quad s\in\mathbb{R}.
\end{align*}
Thus 
\begin{align*}
\left\| Y(t,t-a)F(t-a) \right\| \leq K\left\|F(t-a)\right\|e^{-\alpha a}, \quad \forall t\in\mathbb{R}, \quad a\in [0,\infty).
\end{align*}

Let $\phi(s)$, $\tau$-periodic in $s$, be an initial periodic distribution of infectious individuals.  Given $t\geq s$, $Y(t,s)F(s)\phi(s)$ gives the distribution of those infected individuals who were newly infected at time $s$ and remain in the infected compartments at time $t$.  Then
$$\psi(t)\equiv \int_{-\infty}^t Y(t,s) F(s)\phi(s) \,ds=\int_0^{\infty} Y(t,t-a)F(t-a)\phi(t-a)\,da $$
is the distribution of cumulative new infections at time $t$ produced by all those infected individuals $\phi(s)$ introduced at times earlier than $t$.  

Let $C_{\tau}$ be the ordered Banach space of all $\tau$-periodic piecewise continuous functions from $\mathbb{R}\rightarrow\mathbb{R}^{2N}$, which is equipped with the maximum norm $\left\|\cdot\right\|$.  Define the linear operator $L:C_{\tau}\rightarrow C_{\tau}$ by 
\begin{align*}
(L\phi)(t)=\int_0^{\infty} Y(t,t-a)F(t-a)\phi(t-a)\,da, \quad \forall t\in\mathbb{R}, \ \phi\in C_{\tau}.
\end{align*}
As in \cite{wang}, we label $L$ the next-infection operator and define the spectral radius of $L$ as the effective reproduction number:
\begin{align}
R_e\equiv \rho(L). 
\end{align}
There is a useful characterisation of $R_e$ as follows.  Consider the following linear $\tau$-periodic system
\begin{align}
\frac{dw}{dt}&=\left[-V(t)+\frac{F(t)}{\lambda}\right]w, \quad t\in\mathbb{R} \label{rcomp}
\end{align}
with parameter $\lambda\in (0,\infty)$.  Denote the principal fundamental solution of (\ref{rcomp}) by $\Phi(t,\lambda)$.  Then the following holds
\begin{align}
\rho(\Phi(\tau,\lambda))=1 \Leftrightarrow \lambda=R_e \label{rc}.
\end{align}

Although Wang and Zhao \cite{wang} considered the case where $F(t)$ is continuous, all of the arguments presented in their article apply to the case when $F(t)$ is piecewise continuous, the situation encountered for our system.  In particular,
(\ref{rc}) holds, as do the following equivalences:
\begin{align}
R_e<1& \Leftrightarrow r<1 \label{r01} \\
R_e>1& \Leftrightarrow r>1 \label{r0g1}
\end{align}
where $r=\rho(\Phi_{F-V}(\tau))$.

\section{Threshold dynamics}\label{sec5}

We will show that $R_e$ is a threshold quantity that determines whether the disease dies out or uniformly persists.  First, utilising an asymptotic comparison argument, we prove that the disease-free periodic orbit, $\overline{z}(t)$, is globally attracting for system (\ref{mod}) when $R_e<1$.

\begin{theorem}\label{pgas}
Consider the flow $\varphi(t,x)$ of system (\ref{mod}).  If $R_e<1$ and $x\in\mathbb{R}^{3N}_+$, then $\varphi(t,x)\rightarrow z(t)$ as $t\rightarrow\infty$.  Thus  the disease-free periodic orbit is globally attracting.
\end{theorem}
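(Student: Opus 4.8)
The plan is to establish the result by a three-stage asymptotic comparison argument, exploiting the cooperativity (monotonicity) of the infectious subsystem together with the global stability of the disease-free orbit supplied by Proposition~\ref{lineargas}. First I would obtain an asymptotic upper bound for the susceptibles. Since the infection terms $-S_i\sum_j\beta_{ij}(t)I_j$ and $-S_i\sum_j\epsilon_{ij}(t)G_j$ are non-positive on $\mathbb{R}^{3N}_+$, each $S_i$ is a subsolution of the disease-free impulsive system~(\ref{linear}); note that the impulse factors $(1-\psi_i^k)\le 1$ are identical in both systems and occur at the same times $n\tau_i+\phi_i^k$, so the impulses are respected by the comparison. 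The comparison principle for impulsive systems, combined with the global asymptotic stability of $\overline{S}(t)$ from Proposition~\ref{lineargas}, then yields that for every $\eta>0$ there is a time $T_\eta$ with $S_i(t)\le\overline{S}_i(t)+\eta$ for all $t\ge T_\eta$ and all $i$.

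Next I would feed this bound into the infectious subsystem. For $t\ge T_\eta$ the pair $(I,G)$ satisfies the differential inequality $\dot{x}\le (F_\eta(t)-V(t))x$, where $F_\eta$ is the ``new-infection'' matrix of~(\ref{imod}) with each $\overline{S}_i(t)$ replaced by $\overline{S}_i(t)+\eta$. The crucial structural observation is that $F(t)-V(t)$ is quasi-positive: its off-diagonal entries consist of the non-negative quantities $\overline{S}_i(t)\beta_{ij}(t)$, $\overline{S}_i(t)\epsilon_{ij}(t)$, the migration rates $k_{ij}$ with $i\neq j$, and the shedding rates $\xi_i(t)$, so the generated flow is monotone. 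Hence the comparison principle applies and $x(t)$ is dominated by the solution of the linear cooperative $\tau$-periodic system $\dot{y}=(F_\eta(t)-V(t))y$. Because $R_e<1$ is equivalent, by~(\ref{r01}), to $r=\rho(\Phi_{F-V}(\tau))<1$, and because the dominant Floquet multiplier depends continuously on the perturbation parameter, I would fix $\eta$ small enough that $\rho(\Phi_{F_\eta-V}(\tau))<1$ as well. Floquet theory then forces $y(t)\to 0$, whence $I(t),G(t)\to 0$ as $t\to\infty$.

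Finally, with $I(t),G(t)\to 0$ and the $S$-components bounded by Theorem~\ref{P1}, the infection terms in the $S_i$ equations tend uniformly to zero, so the susceptible subsystem becomes an asymptotically vanishing perturbation of the disease-free impulsive system~(\ref{linear}). Sandwiching $S(t)$ between solutions of~(\ref{linear}) forced by $b-\varepsilon\vec{1}$ and $b+\varepsilon\vec{1}$, and once more invoking the global stability of $\overline{S}(t)$ from Proposition~\ref{lineargas} together with the arbitrariness of $\varepsilon$, gives $S(t)\to\overline{S}(t)$. Combined with $I(t),G(t)\to 0$, this is exactly $\varphi(t,x)\to z(t)$, as claimed.

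The main obstacle will be the rigorous justification of the second stage: verifying the quasi-positivity of the piecewise-continuous $\tau$-periodic generator $F-V$ so that the impulse-free comparison principle genuinely yields the domination $x(t)\le y(t)$, and establishing the continuity of the dominant Floquet multiplier $\rho(\Phi_{F_\eta-V}(\tau))$ in $\eta$ so that the strict inequality $r<1$ survives the $\eta$-perturbation of $F$. Particular care is needed because $F$ is only piecewise continuous (inherited from the impulsive orbit $\overline{S}(t)$), so the standard continuous-forcing results of~\cite{wang} must be checked to transfer to this setting, as was already noted for the equivalence~(\ref{r01}).
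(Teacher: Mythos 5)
Your proposal follows essentially the same three-stage asymptotic comparison argument as the paper's proof: bound $S$ above by the disease-free impulsive system and invoke Proposition \ref{lineargas}, dominate $(I,G)$ by the $\epsilon$-perturbed cooperative linear $\tau$-periodic system whose monodromy spectral radius stays below one by continuity and (\ref{r01}), then squeeze $S(t)$ back to $\overline{S}(t)$ by a second comparison using continuous dependence of the perturbed periodic orbit. The only cosmetic difference is that the paper builds the dominating solution explicitly from a Perron--Frobenius eigenvector while you invoke Floquet decay of all solutions directly; the technical caveats you flag (quasi-positivity, piecewise continuity of $F$, continuity of the multiplier in the perturbation) are precisely the ones the paper addresses.
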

\begin{proof}
 Let $x\in\mathbb{R}^{3N}_+$.  Consider the solution $\varphi(t,x)=\left(S_1(t),\dots,S_N(t),I_1(t),\dots,I_N(t),G_1(t),\dots,G_N(t)\right)$ of system (\ref{mod}).  By the non-negativity of $I_i(t)$ and $G_i(t)$, we obtain the following:
\begin{align*}
\frac{dS_i}{dt}&\leq (1-p_i)b_i-\mu_iS_i-m_{ii}S_i+\sum_{j\neq i}m_{ij}S_j,  \qquad t\neq n\tau_i+\phi_i \\
S_i\left((n\tau_i+\phi_i)^+\right)&=\left(1-\psi_i\right)S_i\left((n\tau_i+\phi_i)^- \right).
\end{align*}
Define $S(t)=\left(S_1(t),S_2(t),\dots,S_N(t)\right)^T$ and $x(t)$ to be the solution to the disease-free system (\ref{linear}) with $x(0)=S(0)$.  Then, by the above inequality system, we find that $S(t)\leq x(t)$ for all $t\geq 0$.  By Proposition \ref{lineargas}, the $\tau$-periodic solution $\overline{S}(t)$ is globally asymptotically stable for the disease-free system (\ref{linear}).   In particular, $x(t)\rightarrow\overline{S}(t)$ as $t\rightarrow\infty$.  Fix $\epsilon>0$.  Then there exists $t_1>0$ such that $x_i(t)\leq\overline{S}_i(t)+\epsilon \ \ \forall t\geq t_1$.  Thus  $S_i(t)\leq \overline{S}_i(t)+\epsilon$ for all $t\geq t_1$.  Hence, for all $t\geq t_1$,
\begin{equation}\label{rhs}
\begin{aligned}
\frac{dI_i}{dt}&\leq (\overline{S}_i(t)+\epsilon)\sum_j \beta_{ij}(t)I_j+(\overline{S}_i(t)+\epsilon)\sum_j \epsilon_{ij}(t)G_j-(\mu_i+\gamma_i)I_i  -k_{ii}I_i+\sum_{j\neq i}k_{ij}I_j  \\
\frac{dG_i}{dt}&=\xi_i(t)I_i-\nu_i(t)G_i. 
\end{aligned}
\end{equation}

Consider the principal fundamental solution of the right-hand side of system \eqref{rhs} as a function of $\epsilon$:  $\Phi(t,\epsilon)$.  Then $\rho(\Phi(\tau,0))=r$.  The periodic solution $\overline{S}(t)$ is piecewise continuous with a total of $p$ points of discontinuity on the interval $(0,\tau]$.  Thus  the same is true for the right-hand side of (\ref{rhs}).  On each piece, solutions have continuous dependence on parameters.  Therefore we can conclude that solutions will have continuous dependence on parameters for the whole interval $[0,\tau)$.  Hence $\Phi(\tau,\epsilon)$ is continuous with respect to $\epsilon$.  So, for $\epsilon$ sufficiently small, $r(\epsilon)=\rho(\Phi(\tau,\epsilon))<1$ since $r(0)=r<1$ by (\ref{r01}).  The matrix $B(t,\epsilon)$, where $B(t,\epsilon)$ represents the right-hand side of (\ref{rhs}) as a linear vector field, is quasi-positive.  Without loss of generality, we can assume the non-diagonal entries of $B(t,\epsilon)$ are positive.  If any are zero, add a sufficiently small constant to that entry and the spectral radius of interest will still fall below unity, and inequality (\ref{rhs}) will still hold.  Thus  the matrix $\Phi(\tau,\epsilon)$ will be strictly positive (since the vector field will point away from the boundary).  Then, by the Perron--Frobenius Theorem, we find that $r(\epsilon)$ is a simple eigenvalue with strictly positive eigenvector $v$.  Hence $y(t)\equiv \Phi(t,\epsilon)v=q(t)e^{\alpha t}$ where $\alpha=\frac{1}{\tau}\ln(r(\epsilon))$ and $q(t)$ is $\tau$-periodic.  So $y(t)\rightarrow 0$ as $t\rightarrow\infty$.  Since $B(t,\epsilon)$ is quasi-positive, subsystem (\ref{rhs}) forms a comparison system using Theorem 1.2 in \cite{comparison}.  Choose a constant $c$ such that $cv\geq x_1$, where $x_1=\varphi(t_1,x)$.  Then $cy(t)\geq \varphi(t,x_1)$ for all $t\geq 0$.  Hence $I_i(t)\rightarrow 0$ and $G_i(t)\rightarrow 0$ as $t\rightarrow\infty$ for $i=1,\dots,N$ since $cy(t)\rightarrow 0$.  Then, for any $\epsilon>0$ and sufficiently large time $t$,
\begin{align*}
(1-p_i)b_i-\mu_iS_i-m_{ii}S_i+\sum_{j\neq i}m_{ij}S_j-\epsilon\leq\frac{dS_i}{dt}&\leq (1-p_i)b_i-\mu_iS_i-m_{ii}S_i+\sum_{j\neq i}m_{ij}S_j,  \qquad t\neq n\tau_i+\phi_i  \\
S_i\left((n\tau_i+\phi_i)^+\right)&=\left(1-\psi_i\right)S_i\left((n\tau_i+\phi_i)^-\right)
\end{align*}

The impulsive system representation of the left-hand side of the above inequality also has a globally stable $\tau$-periodic solution $\overline{S}^{\epsilon}(t)=\left(\overline{S}_1^{\epsilon}(t),\dots,\overline{S}_N^{\epsilon}(t)\right)$.  Another application of the comparison system principle yields $\overline{S}_i^{\epsilon}(t)\leq S_i(t)\leq \overline{S}_i(t)$ for $t$ sufficiently large.  Continuous dependence on parameters implies that $\overline{S}_i^{\epsilon}(t)$ can be made arbitrarily close to $\overline{S}_i(t)$ as $\epsilon\rightarrow 0$.  Clearly, the fixed point equation $F(x)=x$ (from the proof of (\ref{lineargas})) depends continuously on the matrix $A$.  Thus $\overline{S}^{\epsilon}_i(0)>\overline{S}(0)-\epsilon_1$ where $\epsilon_1$ is arbitrary and $\epsilon$ is chosen sufficiently small.  The functions $\overline{S}^{\epsilon}(t)$ and $\overline{S}(t)$ are uniformly continuous for $t\neq t_{\ell}$ where $1\leq \ell\leq p$.   It follows that if $\epsilon$ is chosen sufficiently small, then $\overline{S}^{\epsilon}(t)>\overline{S}(t)-\epsilon_2$ for all $t\in[0,\tau)$ where $\epsilon_2$ is arbitrary. The result follows.
\end{proof}

We now turn our attention the dynamics when $R_e>1$.  In order to prove that the disease is uniformly persistent in all patches when $R_e>1$, we need to make extra assumptions on the $N\times N$ $\tau$-periodic matrix $M(t)\equiv\left(\beta_{ij}(t)+k_{ij}+\xi_i(t)\epsilon_{ij}(t)\right)_{1\leq i,j\leq N}$.  Assume that: 
\begin{itemize}
\item[(A1)] There exists $\theta \in [0,\tau)$ such that $M(\theta)$ is irreducible.
\end{itemize}
Biologically, this irreducibility assumption means that, at some time during a period, the patches have the property that infection in an arbitrary patch can cause infection in any other patch through some chain of transmissions or migrations among a subset of patches.  If this assumption is satisfied, then the system is uniformly persistent, detailed in the following theorem.
\begin{theorem} \label{persist}
Suppose that $R_e>1$ and (A1) holds.  Then the system (\ref{mod}) is uniformly persistent; i.e., there exists $\delta>0$ such that
if $\beta_{ij}I_j(0)>0$ or $\epsilon_{ij}G_j(0)>0$, for some $1\leq i,j\leq N$,  then  
$$ \liminf_{t\rightarrow\infty}I_i(t) > \delta \quad \forall i=1,\dots, N.$$
\end{theorem}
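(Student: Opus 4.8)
The plan is to recast the period-$\tau$ dynamics of (\ref{mod}) as the Poincar\'e map $P\colon X\to X$, $P(x)=\varphi(\tau,x)$, on $X=\mathbb{R}^{3N}_+$, and to apply the acyclicity-based theory of uniform persistence for maps. By Theorem \ref{P1}, $P$ is point dissipative and admits a global compact attractor, so the abstract machinery applies. I would split the state space as $X_0=\{(S,I,G)\in X:\sum_i(I_i+G_i)>0\}$ (infection present) and $\partial X_0=\{(S,I,G)\in X: I_i=G_i=0,\ \forall i\}$ (disease free). On $\partial X_0$ the flow reduces to the linear disease-free system (\ref{linear}), so $\partial X_0$ is positively invariant; positive invariance of $X_0$ follows from the cooperative (quasi-positive) structure of the $(I,G)$ equations together with (A1), which force all $I_i(t),G_i(t)>0$ for $t>0$ once the initial-data hypothesis places the orbit in $X_0$.

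Next I would identify the boundary dynamics. By Proposition \ref{lineargas}, within $\partial X_0$ every orbit converges to the disease-free periodic orbit, so the maximal invariant set of $P$ in $\partial X_0$ is the single fixed point $\overline{z}$. A single fixed point is automatically isolated and acyclic, so the only nontrivial hypothesis of the persistence theorem left to check is that $\overline{z}$ is a \emph{uniform weak repeller} for $X_0$, i.e. there is $\eta>0$ with $\limsup_{n\to\infty}\| P^n(x)-\overline{z}\|\ge \eta$ for every $x\in X_0$.

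Establishing this weak-repeller property is the crux, and I expect it to be the main obstacle. I would argue by contradiction, mirroring the proof of Theorem \ref{pgas} but reversing every inequality. If some orbit from $X_0$ stayed within $\eta$ of $\overline{z}$ for all large time, then $S_i(t)\ge \overline{S}_i(t)-\epsilon$ eventually, which yields a \emph{lower} comparison system for $(I,G)$ whose vector field $B^{\epsilon}(t)$ is a quasi-positive perturbation of $F(t)-V(t)$ from (\ref{imod}). Continuity of the monodromy operator in $\epsilon$ together with the equivalence (\ref{r0g1}) gives $\rho(\Phi_{B^{\epsilon}}(\tau))>1$ for $\epsilon$ small; assumption (A1) with quasi-positivity makes $\Phi_{B^{\epsilon}}(\tau)$ irreducible and, after the usual perturbation of off-diagonal entries, strictly positive, so the Perron--Frobenius theorem supplies a strictly positive eigenvector $v$ with eigenvalue exceeding $1$ and an exponentially growing solution $q(t)e^{\alpha t}$, $\alpha>0$. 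A comparison from below (Theorem~1.2 of \cite{comparison}) with initial data dominating a positive multiple of $v$, available since the orbit lies in $X_0$ with all components positive, forces $I_i(t)\to\infty$, contradicting the ultimate boundedness from Theorem \ref{P1}. The delicate points here are the continuity of the dominant Floquet multiplier under the $\epsilon$-perturbation and arranging the strong positivity that Perron--Frobenius requires.

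With dissipativity, the acyclic covering $\{\overline{z}\}$, and the weak-repeller property in hand, the persistence theorem gives uniform persistence of $P$: $\liminf_{n\to\infty} d(P^n(x),\partial X_0)\ge \delta_0>0$ for $x\in X_0$. Continuity of $\varphi$ and $\tau$-periodicity then upgrade this to $\liminf_{t\to\infty} d(\varphi(t,x),\partial X_0)\ge \delta_1>0$, hence $\liminf_{t\to\infty}\sum_i I_i(t)>0$. Finally I would promote this total bound to the per-patch statement: a lower bound $\liminf_{t\to\infty} S_i(t)>0$ (from the birth input $b_i$ and the boundedness of the infective classes) together with (A1) bounds $(I,G)$ below by an irreducible cooperative system, so positivity of the total infection spreads to every patch and yields $\liminf_{t\to\infty} I_i(t)>\delta$ for all $i$, as claimed.
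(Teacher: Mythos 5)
Your proposal is correct in substance and runs on the same engine as the paper's proof: the Poincar\'e map $P(x)=\varphi(\tau,x)$, the acyclic-covering persistence theorem from \cite{zhao}, point dissipativity from Theorem \ref{P1}, identification of the boundary attractor as the singleton $\left\{\overline{z}\right\}$ via Proposition \ref{lineargas}, and a weak-repeller argument by contradiction in which $S_i(t)\geq \overline{S}_i(t)-\epsilon$ eventually, the $(I,G)$-subsystem is bounded below by a quasi-positive $\tau$-periodic comparison system whose dominant Floquet multiplier $r(\epsilon)$ exceeds one for small $\epsilon$ (using \eqref{r0g1} and continuity in $\epsilon$), and a Perron--Frobenius eigenvector solution forces exponential growth, contradicting boundedness. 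The genuine difference is the decomposition. The paper takes $X^0=\left\{x: x_i>0,\ N+1\leq i\leq 3N\right\}$ (\emph{all} infection components positive), so the abstract theorem directly yields the per-patch conclusion; the price is that $\partial X^0$ contains partially infected states, so identifying $M_{\partial}$ with the disease-free set and proving isolation of $\left\{\overline{z}\right\}$ takes work, which the paper does using (A1), $\rho(DP(\overline{z}))=r>1$ and the stable manifold theorem. You take $X_0=\left\{x:\sum_i(I_i+G_i)>0\right\}$, so $\partial X_0$ is exactly the disease-free set, the boundary dynamics are immediately the linear impulsive system (\ref{linear}) and the boundary attractor is trivially $\left\{\overline{z}\right\}$; the price is that the abstract theorem then only gives persistence of \emph{total} infection, and you must add your final spreading step --- a lower bound on $S_i$, (A1), and an irreducible cooperative comparison over a period --- to recover $\liminf_{t\to\infty} I_i(t)>\delta$ for every $i$. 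Both routes work; yours shifts the use of (A1) from the boundary analysis to the endgame, and arguably handles $M_{\partial}$ more cleanly than the paper does.

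Two points need tightening. First, ``a single fixed point is automatically isolated and acyclic'' is not true: isolation in $X$ is a hypothesis of the persistence theorem and must be verified. In your setup it does follow --- any invariant set in a small neighbourhood of $\overline{z}$ either meets $X_0$, which your weak-repeller contradiction forbids, or lies in $\partial X_0$, where the affine boundary dynamics contract to $\overline{z}$ and backward orbits diverge (the paper's observation that all eigenvalues of $C^{-1}$ exceed unity, which is also exactly what rules out a homoclinic chain and gives acyclicity) --- but this needs to be said rather than assumed. Second, (A1) only gives irreducibility of $M(\theta)$ at some time $\theta\in[0,\tau)$, so positivity spreads to all components only after the orbit passes $\theta$ (mod $\tau$), not instantly for all $t>0$; your invariance claim and the final spreading step should therefore be phrased over a full period, e.g.\ there exists $c>0$ such that every component of $(I,G)$ at time $t_0+2\tau$ is at least $c\left\|(I,G)(t_0)\right\|$, which is all the argument requires.
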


\begin{proof}
We intend to use the approach of acyclic coverings to prove uniform persistence.  We will use Theorem 1.3.1 from \cite{zhao}.   Let $X\equiv \mathbb{R}^{3N}_+$, $ X^0\equiv \left\{x\in X: x_i>0 \ N+1\leq i\leq 3N \right\}$ and $\partial X^0=X \setminus X^0$. Define the Poincar\'{e}map $P(x)=\varphi(\tau,x)$, where $\varphi(t,x)$ is a solution to the full system (\ref{mod}).  Note that $P: X\rightarrow X$ is a continuous map on the complete metric space $X$.  In addition, $X^0$ is forward invariant under the semiflow $\varphi(t,x)$ and hence $P(X^0)\subset X^0$.  Define the maximal forward invariant set inside $\partial X^0$ by  $M_{\partial}\equiv \left\{x\in\partial X^0:  P^n(x)\in\partial X^0 \ \forall n\in\mathbb{N} \right\}$.  
First, we show that $P$ is uniformly persistent; i.e., there exists $\epsilon>0$ such that, for all $x\in X^0$, $\liminf_{n\rightarrow\infty} d(P^n(x),\partial X^0)>\epsilon$.
Note that $P$ is a compact map and is point dissipative by Proposition \ref{P1}.  The global attractor of $P$ in $M_{\partial}$ is the singleton $\left\{\overline{z}\right\}$ by Proposition \ref{lineargas}.  Therefore 
$$\bigcup_{x\in M_{\partial} } \omega(x)=\left\{\overline{z}\right\} .$$ 

On the boundary subset $M_{\partial}$, $P(x)=\left(F(x_S),0,\dots,0,\dots,0\right)$ where $x_S=\left(x_1,\dots,x_N\right)$ and $F$ is defined in Proposition \ref{lineargas}.  Let $x\in\partial X^0\setminus\left\{\overline{z}\right\}$.  Then $\left| P^{-n}x-\overline{z}\right|=\left|F^{-n}(x_S)-F^{-n}(\overline{x})\right|=\left|C^{-n}(x_S-\overline{x})\right|\rightarrow\infty$ since all eigenvalues of $C^{-1}$ are greater than unity (where $\overline{x}$ and $C$ are defined in Proposition \ref{lineargas}).  Thus $\left\{\overline{z}\right\}$ is acyclic.  

We next show that $\left\{\overline{z}\right\}$ is isolated.  Consider the derivative of the Poincar\'{e} map evaluated at $\overline{z}$, $DP(\overline{z})$.  Note that the eigenvalues of $DP(\overline{z})$ are also the Floquet multipliers of the linearized system (\ref{mod}) along the disease-free periodic orbit $z(t)$.  The linearization matrix is block triangular.  It can be seen that $\rho(DP(\overline{z}))=r>1$.  By assumption (A1), the eigenvector corresponding to $r$, which we call $u$, has positive ``infection components''; i.e., $u_i>0$, $N+1\leq i\leq 3N$.  An application of the stable manifold theorem for discrete-time dynamical systems implies that $\left\{\overline{z}\right\}$ is isolated.

Therefore the remaining hypothesis to check is that $W^s(\left\{\overline{z}\right\})\cap X^0 =\emptyset$.   By way of contradiction, suppose that there exists $x\in X^0$ such that $P^n(x)\rightarrow \overline{z}$ as $n\rightarrow\infty$.  Let $\epsilon>0$ be arbitrary.  Then there exists $N(\epsilon_0)>0$ such that $\left| P^n(x) -\overline{z} \right|<\epsilon_0 \ \forall n\geq N(\epsilon_0)$.  In particular, $I_i(n\tau), G_i(n\tau)<\epsilon_0$ for all $n\geq N(\epsilon_0)$.  Notice that the functions $I_i(t)$ and $G_i(t)$ for $t\geq 0$  ($i=1,\dots, N$) are uniformly continuous since their derivatives are bounded for all $t\geq 0$.  By this uniform continuity and the compactness of $[n\tau,(n+1)\tau]$, for any $\epsilon_1>0$, we can choose $\epsilon_0$ sufficiently small so that $I_i(t), G_i(t)<\epsilon_1$ for all $t\geq N(\epsilon_0)\tau$.  Then
\begin{equation}\label{dfeps}
\begin{aligned}
\frac{dS}{dt}&\leq b+(A-\epsilon_2 I )S, \qquad t\neq n\tau+t_{\ell}  \\
x\left(n\tau+t_{\ell}\right)&=D_{\ell} x\left((n\tau+t_{\ell})^-\right), 
\end{aligned}
\end{equation}
where $n\geq N(\epsilon_0)$, $S=\left(S_1,\dots, S_N\right)^T$, $I$ is the $N\times N$ identity matrix, $\epsilon_2=2\epsilon_1 \max_{i,j}\left(\beta_{ij},\epsilon_{ij}\right)$ and $A, b, t_{\ell}, D_{\ell}$ are defined in (\ref{linear}).  Thus, by the standard comparison theorem and (\ref{lineargas}),
\begin{align*}
S(t)&\geq S^{\epsilon_2}(t)>\overline{S}^{\epsilon_2}(t)-\epsilon_3 \quad \forall t\geq \max(T(\epsilon_3),N(\epsilon_0)),
\end{align*}
where $S^{\epsilon_2}(t)$ is the solution to the linear impulsive equation forming the right-hand side of (\ref{dfeps}) and $\overline{S}^{\epsilon_2}(t)$ is the globally stable periodic impulsive orbit in this system.  By continuous dependence on parameters, if $\epsilon_2$ is chosen sufficiently small, then $\overline{S}^{\epsilon_2}(t)>\overline{S}(t)-\epsilon_4$ for all $t\in[0,\tau)$ where $\epsilon_4$ is arbitrary (the argument is presented in the proof of Theorem \ref{pgas}).   Therefore 
\begin{align*}
S(t)&>\overline{S}(t)-\epsilon_4-\epsilon_3 \qquad \text{for} \ t>\max(T(\epsilon_3),N(\epsilon_0)\tau).
\end{align*}
Hence, for $\epsilon_0$ sufficiently small, there exists $N(\epsilon_0)\in\mathbb{N}$ such that 
\begin{align*}
S(t)&>\overline{S}(t)-\frac{\epsilon}{2}-\frac{\epsilon}{2} \qquad \text{for} \ t>\max(T(\epsilon/2),N(\epsilon_0)\tau).
\end{align*}
By the semigroup property and the fact that $X^0$ is forward invariant,  we can assume without loss of generality that
\begin{align*}
S(t)&>\overline{S}(t)-\epsilon \qquad \text{for} \ t\geq 0.
\end{align*}
Then
\begin{equation}\label{rhs2}
\begin{aligned}
\frac{dI_i}{dt}&\geq (\overline{S}_i(t)-\epsilon)\sum_j \beta_{ij}(t)I_j+(\overline{S}_i(t)-\epsilon)\sum_j \epsilon_{ij}(t)G_j-(\mu_i+\gamma_i)I_i  -k_{ii}I_i+\sum_{j\neq i}k_{ij}I_j  \\
\frac{dG_i}{dt}&=\xi_i(t)I_i-\nu_i(t)G_i. 
\end{aligned}
\end{equation}

By the comparison theorem, $y(t)\geq \widetilde{y}(t)$, where $y(t)=(I_1,\dots,I_N(t),G_1(t)\dots,G_N(t))$ and $\widetilde{y}(t)$ is a vector solution to the right-hand side of \eqref{rhs2} with $\widetilde{y}(0)\leq y(0)$.  For $\epsilon>0$ sufficiently small, $r(\epsilon)>1$ (by (\ref{r0g1})) where $r(\epsilon)$ is the dominant Floquet multiplier of the right-hand side of (\ref{rhs2}).  By the Perron--Frobenius Theorem, we find that $r(\epsilon)$ is a simple eigenvalue with strictly positive eigenvector $v$.  Hence there is a vector solution $\widetilde{y}(t)\equiv \Phi(t,\epsilon)v=q(t)e^{\alpha t}$ where $\alpha=\frac{1}{\tau}\ln(r(\epsilon))$ and $q(t)$ is $\tau$-periodic.  Then $c\widetilde{y}(t)$ is also a solution and, for $c>0$ sufficiently small, $c\widetilde{y}(0)<y(0)$.  Notice that $c\widetilde{y}(n\tau)=c\left(r(\epsilon)\right)^n$.  Thus  $\left|c\widetilde{y}(n\tau)\right|\rightarrow\infty$ as $n\rightarrow\infty$ and therefore $\left|y(n\tau)\right|\rightarrow \infty$ as $n\rightarrow\infty$.  This is a contradiction.  This proves that $P$ is uniformly persistent; i.e., 
there exists $\epsilon>0$ such that, for all $x\in X^0$, $\liminf_{n\rightarrow\infty} d(P^n(x),\partial X^0)>\epsilon$.

The next step is to prove that there exists $\delta>0$ such that, for all $x\in X^0$, $\liminf_{t\rightarrow\infty} d(\varphi(t,x),\partial X^0)>\delta$.
From an argument presented in the proof of Proposition \ref{lineargas},  for any solution to the impulsive model (\ref{mod}), $S_i(t)\leq \overline{S}_i(t) +1$ for all $t$ sufficiently large. By way of contradiction, suppose that there exists $\delta_m\downarrow 0$ and $(x_m)\subset X^0$ such that 
$$ \liminf_{t\rightarrow\infty} d(\varphi(t,x_m),\partial X^0)\leq \delta_m .$$
For sufficiently large $t$, any solution satisfies
\begin{equation}\label{rhs3}
\begin{aligned}
\frac{dI_i}{dt}&\leq (\overline{S}_i(t)+1)\sum_j \beta_{ij}(t)I_j+(\overline{S}_i(t)+1)\sum_j \epsilon_{ij}(t)G_j-(\mu_i+\gamma_i)I_i  -k_{ii}I_i+\sum_{j\neq i}k_{ij}I_j  \\
\frac{dG_i}{dt}&=\xi_i(t)I_i-\nu_i(t)G_i. 
\end{aligned}
\end{equation}

By Floquet's theorem \cite{chicone}, the principal fundamental solution of the right-hand side of (\ref{rhs3}) can be represented as $\Phi_1(t)=Q(t)e^{tB}$ where $Q(t)$ is a $\tau$-periodic (possibly complex) matrix and $B$ is a (possibly complex) $2N\times 2N$ matrix.  Let $K=\left\|\Phi_1(\tau)\right\|\leq e^{\tau\left\|B\right\|}<\infty$.  Choose $\delta_m>0$ such that $\delta_m<\frac{\epsilon}{K}$.  Then, for some sufficiently large $n$, $\varphi((n+1)\tau,x_m)> \epsilon$ and $\varphi(n\tau+t^*,x_m)\leq \delta_m$, where $t^*\in(0,\tau)$.   Let $I_i(t)=\varphi_{N+i}(n\tau+t^*+t,x_m)$, $G_i(t)=\varphi_{2N+i}(n\tau+t^*+t,x_m)$ for $i=1,\dots,N$ and $y(t)=(I_1,\dots,I_N(t),G_1(t)\dots,G_N(t))$.  Then, by the comparison theorem, $y(\tau-t^*)\leq \Phi_1(n-t^*)\delta_m$.  Thus $\left|y(\tau-t^*)\right|\leq \left\|\Phi_1(\tau-t^*)\right\| |x_m| \leq \left\|\Phi_1(\tau)\right\| \delta_m< K\cdot \frac{\epsilon}{K} <\epsilon$.  Equivalently, $\varphi((n+1)\tau,x_m)< \epsilon$, which is a contradiction.  Therefore there exists $\delta>0$ such that, for all $x\in X^0$, $\liminf_{t\rightarrow\infty} d(\varphi(t,x),\partial X^0)>\delta$,
which proves the result.
\end{proof}

\section{Two-patch case with application to Poliomyelitis eradication}\label{sec6}
In order to show how our model and analysis can inform the optimal timing of pulse vaccination, along with the effect of other parameters, we consider an example of two coupled patches.  For simplicity, we initially neglect environmental transmission and consider the following special case of the model (\ref{mod}):
\begin{equation}\label{2patch}
\begin{aligned}
\dfrac{dS_1}{dt}&=b_1-\mu_1 S_1-\beta_1(t) S_1\left[(1-f_1)I_1+f_1\cdot I_2\right]-m_1S_1+m_2S_2 & t&\neq n\tau \\ 
\dfrac{dI_1}{dt}&=\beta_1(t) S_1\left[(1-f_1)I_1+f _1\cdot I_2\right]-(\mu_1+\gamma_1)I_1-k_1I_1+k_2I_2  & t&\neq n\tau  \\
\dfrac{dS_2}{dt}&=b_2-\mu_2 S_2- \beta_2(t) S_2\left[f_2\cdot I_1+(1-f_2)I_2\right]-m_1S_1+m_2S_2 &    t&\neq n\tau+\phi \\ 
\dfrac{dI_2}{dt}&=\beta_2(t) S_2\left[f_2\cdot I_1+(1-f_2)I_2\right]-(\mu_2+\gamma_2)I_2-k_1I_1+k_2I_2  &    t&\neq n\tau+\phi \\
S_1\left(n\tau^+\right)&=(1-\psi_1) S_1\left(n\tau^-\right) & t &= n\tau \\
 S_2\left((n\tau+\phi)^+\right)&=(1-\psi_2 )S_2\left((n\tau+\phi)^-\right)& t &= n\tau+\phi, 
\end{aligned}
\end{equation}
where $n\in \mathbb{N}$, $0\leq \phi < 1$ is the phase difference between pulse vaccinations in each patch and $0\leq f_1, f_2 \leq \frac{1}{2}$ is the mass-action coupling factor (fraction of cross transmission) in Patch 1 and Patch 2, respectively.  In the following subsections, we provide deeper analysis and simulations of the model in the case of two patches.  

\subsection{Pulse synchronisation theorem}
A natural question to ask is how does the relative timing of the pulse vaccinations in the two individual patches affect the global dynamics of the disease.  By Theorems \ref{pgas} and \ref{persist}, the effective reproduction number $R_e$ provides a global threshold.  Thus it suffices to determine how the phase difference between pulse vaccinations in the two patches affects the value of $R_e$.  In general, $R_e$ cannot be calculated explicitly.  However, in the special case of no cross-infection or movement of infected individuals --- i.e., $f_1=f_2=0$ and $k_1=k_2=0$ --- $R_e$ can be formulated; furthermore, the effect of $\phi$ on $R_e$ can be quantified.  In particular, we can prove that pulse synchronisation ($\phi=0$) minimizes (locally) the reproduction number $R_e$.  The details are contained in the following theorem.

\begin{theorem} \label{R0theorem}
Consider the two-patch model (\ref{2patch}) with phase difference $\phi\in\mathbb R$ between pulse vaccinations and no cross-infection or movement of infected individuals; i.e., $f_1=f_2=0$ and $k_1=k_2=0$.  Let $\overline{S}_1^{\phi}(t)$ and $\overline{S}_2^{\phi}(t)$ be the disease-free periodic solutions given by Proposition \ref{lineargas} parametrized by the phase difference $\phi$.   Then
\begin{align} R_e(\phi)= \max_{i=1,2}\left\{\frac{1}{(\mu_i+\gamma_i)\tau}\int_0^{\tau} \beta_i(t)\overline{S}_i^{\phi}(t)\,dt\right\}, \label{R0diag}
\end{align}
where the reproduction number is a $\tau$-periodic function of $\phi$ on $\mathbb R$.  
Suppose further that the patches are identical and transmission rates are constant; i.e., $b_1=b_2=b, \mu_1=\mu_2=\mu, \beta_1(t)=\beta_2(t)=\beta, \gamma_1=\gamma_2=\gamma, m_1=m_2=m$ and $\psi_1=\psi_2=\psi$.  Then $R_e(\phi)$ has a local minimum at $\phi=0$ (in-phase pulses) and a critical point at $\phi=\tau/2$ (out-of-phase pulses).  More precisely, $R_e(\phi)$ is continuous on $\mathbb R$, smooth on $\mathbb R\setminus \left\{n\tau:n\in\mathbb Z \right\}$ and 
\begin{align*}
R_e'(0^+)& = \widehat{R}_0\frac{m \psi^2(e^{\mu\tau}-1)^2}{2\tau\left(e^{\mu\tau}-(1-\psi)\right)^2}>0, \qquad
R_e'(0^-) \ \Big(=R_e'(\tau^-)\Big) \ =-R_e'(0^+)<0, \qquad R_e'\left(\frac{\tau}{2}\right)=0,
\end{align*}
where $\widehat{R}_0=\frac{\beta b}{\mu(\mu+\gamma)}$ is the reproduction number of the (identical patch) system in the absence of pulse vaccination.
\end{theorem}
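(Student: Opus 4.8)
The plan is to first collapse the effective reproduction number to a scalar, patch‑by‑patch quantity, and then carry out an explicit computation of the disease‑free periodic orbit in the identical two‑patch case.

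For the formula (\ref{R0diag}) I would note that setting $f_1=f_2=0$ and $k_1=k_2=0$ decouples the linearized infection subsystem (\ref{imod}) into two independent scalar equations $\dot I_i=[\beta_i(t)\overline S_i^{\phi}(t)-(\mu_i+\gamma_i)]I_i$ (there is no $G$ component, since environmental transmission is absent in (\ref{2patch})). Hence $F(t)-V(t)$ is diagonal, the next‑infection operator $L$ splits as $L_1\oplus L_2$, and $R_e=\rho(L)=\max_i\rho(L_i)$. For each scalar block the characterization (\ref{rc}) is explicit: the principal fundamental solution of $\dot w=[-(\mu_i+\gamma_i)+\beta_i(t)\overline S_i^{\phi}(t)/\lambda]w$ over one period equals $\exp(\tfrac1\lambda\int_0^\tau\beta_i\overline S_i^{\phi}\,dt-(\mu_i+\gamma_i)\tau)$, and setting it to $1$ gives $\rho(L_i)=\tfrac1{(\mu_i+\gamma_i)\tau}\int_0^\tau\beta_i(t)\overline S_i^{\phi}(t)\,dt$. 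Taking the maximum yields (\ref{R0diag}). Periodicity in $\phi$ is immediate, since shifting $\phi$ by $\tau$ reproduces the identical pulse schedule and hence the same orbit $\overline S_i^{\phi}$.

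For the identical, constant‑$\beta$ case I would write $R_e(\phi)=\frac{\beta}{(\mu+\gamma)\tau}\max(J_1(\phi),J_2(\phi))$ with $J_i(\phi)=\int_0^\tau\overline S_i^{\phi}(t)\,dt$, and exploit the patch‑exchange symmetry $J_1(\phi)=J_2(\tau-\phi)$, so that $\Sigma:=J_1+J_2$ is even and $D:=J_1-J_2$ is odd about $0$. Using $\max(J_1,J_2)=\tfrac12(\Sigma+|D|)$, the corner structure of $R_e$ comes entirely from $|D|$. The orbit $\overline S^{\phi}=(\overline S_1^\phi,\overline S_2^\phi)$ is the $\tau$‑periodic solution of the $2\times2$ linear impulsive system of Proposition \ref{lineargas}, with one pulse (patch $1$) at $t=0$ and one (patch $2$) at $t=\phi$. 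In the diagonalizing coordinates $u=S_1+S_2$, $v=S_1-S_2$ the flow becomes $\dot u=2b-\mu u$ and $\dot v=-(\mu+2m)v$, while the two impulses mix $u$ and $v$. Integrating each mode over a period and matching the jumps reduces $\Sigma$ and $D$ to the pre‑pulse levels $A(\phi)=\overline S_1^\phi(0^-)$ and $B(\phi)=\overline S_2^\phi(\phi^-)$, which are obtained by solving the explicitly invertible period‑map fixed‑point equations; at $\phi=0$ both collapse to the single‑patch value $P_0=\frac b\mu\frac{e^{\mu\tau}-1}{e^{\mu\tau}-(1-\psi)}$ and $D(0)=0$.

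The derivative information then follows by differentiating $\overline S_i^{\phi}$ in $\phi$ at $\phi=0^+$. Because the orbit depends smoothly on $\phi$ on $(0,\tau)$ and $D(0)=0$, one has $R_e'(0^+)=\frac{\beta}{2(\mu+\gamma)\tau}(\Sigma'(0^+)+|D'(0^+)|)$; computing $A'(0^+),B'(0^+)$ from the linearized period map and substituting the closed form for $P_0$ delivers the stated value of $R_e'(0^+)$, which is manifestly positive, so that $\phi=0$ is a local minimum, while $R_e'(0^-)=-R_e'(0^+)$ is forced by the evenness of $\Sigma$ and oddness of $D$. Finally, the $\tau$‑periodicity together with $J_1(\phi)=J_2(\tau-\phi)$ makes $R_e$ even about $\tau/2$, which identifies $\phi=\tau/2$ as a critical point, the equality $R_e'(\tau/2)=0$ following once $R_e$ is seen to be differentiable there. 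The main obstacle is the explicit periodic‑orbit computation combined with the one‑sided differentiation through $\phi=0$: there the two pulses coalesce, the branch of the maximum changes, and the linear‑in‑$\phi$ contribution of the $v$‑mode (of size $O(\psi P_0)$ on the shrinking interval $(0,\phi)$) is exactly what generates the nonzero $R_e'(0^+)$, so tracking this boundary‑layer term accurately is where the care is needed.
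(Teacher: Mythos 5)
Your reduction to formula (\ref{R0diag}) is correct and is essentially the paper's own argument: with $f_1=f_2=k_1=k_2=0$ the linearised infection subsystem is diagonal, each block is a scalar periodic ODE whose period-map is an explicit exponential, and $R_e$ is the larger of the two scalar spectral radii; $\tau$-periodicity in $\phi$ is immediate. That part needs no changes.

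The second half has a genuine gap, and it is not just the deferred algebra. Your (correct) decomposition $\max(J_1,J_2)=\tfrac{1}{2}(\Sigma+|D|)$ gives $R_e'(0^+)=\frac{\beta}{2(\mu+\gamma)\tau}\bigl(\Sigma'(0^+)+|D'(0^+)|\bigr)$, whereas the theorem's displayed value is exactly the $\Sigma$-contribution alone. Indeed, the paper never differentiates the max: it differentiates the Floquet characteristic equation $\det W=\mathrm{tr}\,W-1$ via Liouville's formula to obtain (\ref{R0pElegant}), which asserts $R_e'(\phi)/R_e(\phi)=\frac{d}{d\phi}\int_0^\tau(\beta_1\overline{S}_1^{\phi}+\beta_2\overline{S}_2^{\phi})\,dt\big/\int_0^\tau(\beta_1\overline{S}_1^{\phi}+\beta_2\overline{S}_2^{\phi})\,dt$, so that only $\Sigma$ enters; it then evaluates $\Sigma'(0^+)$ from $\mu\Sigma=2b\tau-\psi\bigl(\overline{S}_1^{\phi}(0^-)+\overline{S}_2^{\phi}(\phi^-)\bigr)$ and the explicit period-map fixed point. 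Your claim that computing $A'(0^+),B'(0^+)$ ``delivers the stated value'' therefore presupposes $D'(0^+)=0$, a fact you never state, let alone prove. And it appears to be false: by your own reduction, $(\mu+2m)D(\phi)=\psi\bigl(B(\phi)-A(\phi)\bigr)$, and linearising the period map at $\phi=0$ (or simply solving the $2\times2$ fixed-point equations for small $\phi$ with, say, $b=\mu=m=\tau=1$, $\psi=1/2$, which gives $D(0.01)\approx-3.5\times10^{-4}$) shows $D'(0^+)\neq 0$ whenever $m>0$ and $0<\psi<1$: to first order in $\phi$ the patch pulsed second retains a strictly larger period-average of susceptibles, so the two branches of the max separate linearly in $\phi$. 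Consequently, your route, carried out faithfully, yields a one-sided derivative strictly larger than the stated expression; it does establish the qualitative conclusion (a corner local minimum at $\phi=0$, since $\Sigma'(0^+)>0$ and the $|D|$-term only adds), but not the displayed formula, and you cannot honestly assert both.

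The same unexamined assumption reappears at $\phi=\tau/2$: your symmetry gives $D(\tau/2)=0$, but the differentiability of $R_e$ there --- which you invoke to conclude $R_e'(\tau/2)=0$ from evenness --- requires $D'(\tau/2)=0$; if the branches of the max cross transversally at $\tau/2$, then $|D|$ has a corner and $\tau/2$ is a nonsmooth point, not a smooth critical point. So the decisive missing content of the proposal is a proof (or refutation) that $D'=0$ at the branch-crossing points $\phi=0,\tau/2$; equivalently, a reconciliation of your max-decomposition with the paper's sum-based identity (\ref{R0pElegant}), since the two are mutually inconsistent unless those derivatives vanish. As written, the proposal asserts the theorem's formulas while its own machinery, pushed to completion, contradicts them.
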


\begin{proof}
Consider the two-patch model (\ref{2patch}).   Define the following linear system as in (\ref{rcomp}):
\begin{align}
\frac{dw}{dt}&=\left[-V+\frac{F_{\phi}(t)}{\lambda}\right]w, \quad t\in\mathbb{R}, \label{rphi}
\end{align}
where 
$$V=\begin{pmatrix} \mu_1+\gamma_1+k_1 & -k_2 \\ -k_1 & \mu_2+\gamma_2+k_2 \end{pmatrix}, \qquad F_{\phi}(t)=\begin{pmatrix} \beta_1(t) \overline{S}_1^{\phi}(t)(1-f_1) & \beta_1(t) \overline{S}_1^{\phi}(t)f_1 \\ \beta_2(t) \overline{S}_2^{\phi}(t)f_2 & \beta_2(t) \overline{S}_2^{\phi}(t)(1-f_2) \end{pmatrix}.  $$
Define the principle fundamental solution of (\ref{rphi}) as $W_{\phi}(t,\lambda)$.  Then the reproduction number as a function of $\phi$, $R_e(\phi)$, is defined as the unique value of $\lambda$ such that the dominant Floquet multiplier of (\ref{rphi}) is one; i.e., $\rho(W_{\phi}(\tau,\lambda))=1$.  Clearly $R_e(\phi)$ is a $\tau$-periodic function.  If $k_1=k_2=f_1=f_2=0$, then the matrix $B_{\phi}(t,\lambda)\equiv-V+\frac{F_{\phi}(t)}{\lambda}$ is diagonal.  Thus, as noted in \cite{wang} for diagonal systems, the eigenvalues of $W_{\phi}(\tau,\lambda)$ are $r_i=-(\mu_i+\gamma_i)\tau+\frac{1}{\lambda}\int_0^{\tau} \beta_i(t)\overline{S}_i^{\phi}(t)\,dt$, for $i=1,2$.  It follows that 
$$ R_e(\phi)= \max_{i=1,2}\Big(\frac{1}{(\mu_i+\gamma_i)\tau}\int_0^{\tau} \beta_i(t)\overline{S}_i^{\phi}(t)\,dt\Big).$$
In order to compute the derivative of $R_e(\phi)$, let $\phi>0$ and define $R'(\phi)$, and $R'(\phi^+)$ and $R_e'(\phi^-)$ as the derivative and one-sided derivatives of $R_e(\phi)$ respectively.  Consider the eigenvalue $r(\lambda,\phi)=\rho(W_{\phi}(\tau,\lambda))$.  The characteristic equation is 
$$r(\lambda,\phi)^2-{\rm tr}W_{\phi}(\tau,\lambda)r(\lambda,\phi)+\det W_{\phi}(\tau,\lambda)=0.$$
Then, since $r(R_e(\phi),\phi)=1$ for all $\phi$, we obtain:
\begin{align}
\det W_{\phi}(\tau,R_e(\phi))&={\rm tr}W_{\phi}(\tau,R_e(\phi))-1 \notag \\
\Leftrightarrow {\rm exp}\left(\int_0^{\tau}{\rm tr}B_{\phi}(t,R_e(\phi)) \,dt\right)&= {\rm tr}W_{\phi}(\tau,R_e(\phi))-1 \qquad \text{(by Liouville's formula)} \notag \\
\text{Thus} \ \  \frac{\partial}{\partial \phi} {\rm exp}\left(\int_0^{\tau}{\rm tr}B_{\phi}(t,R_e(\phi)) \,dt\right)&= \frac{\partial}{\partial \phi} {\rm tr}W_{\phi}(\tau,R_e(\phi)). \notag 
\end{align}
Calculating the derivative with respect to $\phi$, we obtain:
\begin{align}
{\rm exp}\left(\int_0^{\tau}{\rm tr}B_{\phi}(t,R_e(\phi)) \,dt\right) \frac{1}{R_e(\phi)} \left[  \frac{\partial}{\partial \phi} \int_0^{\tau}{\rm tr}F_{\phi}(t) \,dt   -\frac{R_e'(\phi)}{R_e(\phi)} \int_0^{\tau}{\rm tr}F_{\phi}(t) \,dt \right] = \frac{\partial}{\partial \phi} {\rm tr}W_{\phi}(\tau,R_e(\phi)). \label{R0prime} 
\end{align}
In the case that $B_{\phi}(t,R_e(\phi))$ is diagonal --- i.e., $k_1=k_2=f_1=f_2=0$ --- then $W_{\phi}(\tau,R_e(\phi)) ={\rm exp}\left(\int_0^{\tau}B_{\phi}(t,R_e(\phi)) \,dt\right)$.   Therefore
\begin{align*}
 \frac{\partial}{\partial \phi} {\rm tr}W_{\phi}(\tau,R_e(\phi)) &=\sum_{i=1}^2{\frac{e^{-(\mu_i+\gamma_i)\tau}}{R_e(\phi)}\left[  \frac{\partial}{\partial \phi} \int_0^{\tau}\beta_i(t)\overline{S}_i^{\phi}(t) \,dt   -\frac{R_e'(\phi)}{R_e(\phi)}  \int_0^{\tau}\beta_i(t)\overline{S}_i^{\phi}(t) \,dt   \right]. } 
\end{align*}
At this stage, we assume that $\mu_1=\mu_2=\mu$ and $\gamma_1=\gamma_2=\gamma$.  Inserting the above equation into (\ref{R0prime}), simplifying, and solving for $R_e'(\phi)$, we obtain:  
\begin{align}
R_e'(\phi)&= R_e(\phi)\frac{ \frac{\partial}{\partial \phi}\left[ \int_0^{\tau}\left(\beta_1(t)\overline{S}_1^{\phi}(t)+\beta_2(t)\overline{S}_2^{\phi}(t)\right) \,dt\right]}{  \int_0^{\tau}\left(\beta_1(t)\overline{S}_1^{\phi}(t)+\beta_2(t)\overline{S}_2^{\phi}(t)\right) \,dt}. \label{R0pElegant}  
\end{align}
The above formula shows that, under the prescribed assumptions, the relative change in $R_e$ with respect to the parameter $\phi$ is equal to the relative change (with respect to $\phi$) in total force of infection summed over both patches and averaged over the period $\tau$.  In the case that $\beta_1(t)=\beta_2(t)=\beta$ is constant, $\beta$ cancels in formula (\ref{R0pElegant}) and we only need to consider the effect of $\phi$ on the total susceptible population among the patches averaged over the period.  

Adding the differential equations of the disease-free $\tau$ periodic solutions and integrating over the period $\tau$, we have
\begin{align*}
0= \int_0^{\tau}\frac{d}{dt}\left(\overline{S}_1^{\phi}(t)+\overline{S}_2^{\phi}(t)\right) \,dt&=(b_1+b_2)\tau -\mu\int_0^{\tau}\left(\overline{S}_1^{\phi}(t)+\overline{S}_2^{\phi}(t)\right) \,dt - \int_0^{\tau}\left(\psi_1\delta_0\overline{S}_1^{\phi}(t)+\psi_2\delta_{\phi}\overline{S}_2^{\phi}(t)\right) \,dt,  
\end{align*}
where $\delta_{\phi}$ is the Dirac delta mass centred at $\phi$.  Thus
\begin{align}
\int_0^{\tau}\left(\overline{S}_1^{\phi}(t)+\overline{S}_2^{\phi}(t)\right) \,dt &=\frac{1}{\mu}\left[(b_1+b_2)\tau-\left(\psi_1\overline{S}_1^{\phi}(0^-)+\psi_2\overline{S}_2^{\phi}(\phi^-)\right)\right]. \label{pulse}
\end{align}
Let
$$A=\begin{pmatrix} -\mu-m_1 & m_2 \\ m_1 & -\mu-m_2 \end{pmatrix}. $$
As in the earlier formula (\ref{F1}), define the initial point of the disease-free periodic solution $\overline{x}=\left(\overline{S}_1^{\phi}(0^-), \overline{S}_2^{\phi}(0^-)\right)^T$, along with the matrices $D_1={\rm diag}(1-\psi_1,1)$, $D_2={\rm diag}(1,1-\psi_2)$, $C_2=e^{(\tau-\phi)A}D_2$, $C=C_2e^{\phi A}D_1$.  Also define $\overline{x}_{\phi}=\left(\overline{S}_1^{\phi}(\phi^-), \overline{S}_2^{\phi}(\phi^-)\right)^T$.  Then
$$\overline{x}=(I-C)^{-1}\left(C_2(e^{\phi A}-I)+e^{(\tau-\phi)A}-I\right)A^{-1}b, \qquad \overline{x}_{\phi}=e^{\phi A}D_1\overline{x}+\left(e^{\phi A}-I\right)A^{-1}b. $$
All times are considered to be ${\rm modulo} \ \tau$.  It can be inferred from these equations, along with (\ref{pulse}) and (\ref{R0pElegant}), that $R_e'(\phi)$ is continuous on the set $[0,\tau)$ (where the interval is identified topologically as a circle glued together at the endpoints) and continuously differentiable on the set $(0,\tau)$.  We further assume that $m=m_1=m_2$, $\psi=\psi_1=\psi_2$.  Inserting the above formulas into (\ref{pulse}), utilising  (\ref{R0pElegant}), and evaluating $\partial/\partial\phi$ at $\phi=0^+$, $\phi=0^-$ and $\phi=\tau/2$, we find the following information:
\begin{align}
R_e'(0^+)&=\frac{\psi^2mb\left(e^{\mu\tau}-1\right)^2}{\mu\left(1-\psi-e^{\mu\tau}\right)^2} \cdot \frac{R_e(\phi)}{ \int_0^{\tau}\left(\overline{S}_1^{0}(t)+\overline{S}_2^{0}(t)\right) \,dt} = \widehat{R}_0\frac{m \psi^2(e^{\mu\tau}-1)^2}{2\tau\left(e^{\mu\tau}-(1-\psi)\right)^2}>0 \\
R_e'(0^-)&=R_e'(\tau^-)=\frac{-\psi^2mb\left(e^{\mu\tau}-1\right)^2}{\mu\left(1-\psi-e^{\mu\tau}\right)^2} \cdot \frac{R_e(\phi)}{ \int_0^{\tau}\left(\overline{S}_1^{\phi}(t)+\overline{S}_2^{\phi}(t)\right) \,dt}=-R_e'(0^+)<0  \\
R_e'\left(\frac{\tau}{2}\right)&=0.
\end{align}
\end{proof}

\noindent
{\bf Remarks.} 1. The explicit formula for $R_e'(0^+)$ can tell us which parameters affect the sensitivity of $R_e$ to $\phi$.  In particular, $R_e'(0^+)$ is increasing with respect to the migration rate $m$, the ``natural'' effective reproduction number $\widehat{R}_0$, the death rate $\mu$ and the pulse vaccination proportion $\psi$.  Thus an increase in any of these parameters results in larger increases in $R_e$ when the pulse vaccinations in the two patches are perturbed away from synchrony.  \\
2. While it would be nice to obtain more general results, the mathematical complexity of the system is difficult to overcome.  We suspect that $\phi=0$ is the global minimum, but the formulas for $R_e(\phi)$ and $R_e'(\phi)$ are difficult to analyze at other values of $\phi$.  We note that simulations show that, for small enough values of migration rate $m$, $\phi=\tau/2$ corresponds to a global maximum of $R_e(\phi)$.  However, for large migration rates $m$, $\phi=\tau/2$ may not correspond to a maximum, and simulations show there can be a value $0<\phi^*<\tau/2$ such that $\phi=\phi^*$ and $\phi=\tau-\phi^*$, which are both global maxima of $R_e(\phi)$. See Figure \ref{LargeMigEx}. \\
3. It would be nice to remove the assumption of no cross-infection or migration of infected individuals. However, the linear periodic system cannot generally be solved in this case, since $B_{\phi}(t,\lambda)\equiv-V+\frac{F_{\phi}(t)}{\lambda}$ is not diagonal.  The system can be explicitly solved in this case when $\phi=0$ and the patches are identical (since $B_{0}(t,\lambda)$ commutes with its integral in this case), but a perturbation in $\phi$ is very difficult to analyze, and we leave this as future work.

\begin{figure}
\begin{centering}
\includegraphics[width=9cm]{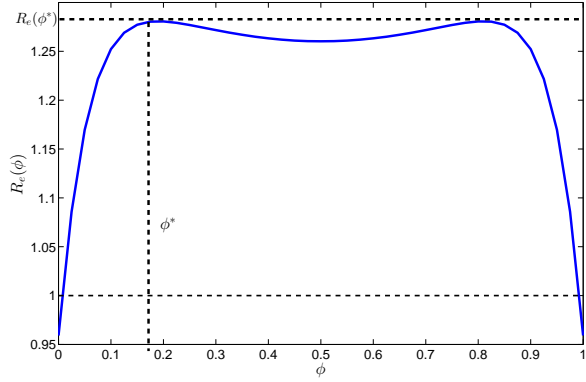}
\caption{Example of large migration rates producing two global maxima of $R_e(\phi)$. Parameters used were $m=10$, $\widehat{R}_0=20$, $\mu=b=1/50$, $\gamma=365/16$, and $\psi=0.8$.}\label{LargeMigEx}
\end{centering}
\end{figure}

\bigskip

\noindent
In the next subsection, we provide simulations that illustrate the theorem and show that the implications hold in the case of cross-infection and migration of infected individuals.  In addition, the pulse synchronisation result is explained in the context of mass-vaccination campaigns against polio and measles. 

\subsection{The SIR model with identical patches and no seasonality}\label{sec61}
For the simulations in this subsection, we choose parameters in line
with poliomyelitis  epidemiology. To isolate the effects of varying phase difference $\phi$ between the pulses, we set $\tau$ to 1 year and suppose that the patches are ``identical'': $b_1=b_2=b, \mu_1=\mu_2=\mu, \beta_1(t)=\beta_2(t)=\beta(t), \gamma_1=\gamma_2=\gamma, f_1=f_2=f, m_1=m_2=m,k_1=k_2=m$ and $\psi_1=\psi_2=\psi$.   Then, with no seasonality (i.e., $\beta(t)=\beta$ constant),
\begin{align*}
\widehat{R}_0\equiv \frac{\beta b}{\mu(\mu+\gamma)}  \tag{17}\label{r0hat}
\end{align*}
is the basic reproduction number of the autonomous (no pulse vaccination) version of model (\ref{2patch}) \cite{liu}.  We remark that, in this case of identical patches, the ``global'' reproduction number of the autonomous model, $\widehat{R}_0$, is equal to the ``patch'' reproduction number.  We also note that the above formula holds for the case $k_1=k_2\neq m$, but we assume that the migration rates for susceptibles and infected individuals are equal.  This assumption may be reasonable for polio, since at least 95\% of cases are asymptomatic \cite{WHOpolio}.

The reproduction number for polio
in an immunologically na\"ive population in low-income areas has been estimated in the range
6--14, although it has been speculated to be as high as 20 for some densely populated regions \cite{Fine}.  The mean infectious period for poliovirus is
approximately 16 days \cite{Fine}.  The average lifespan of individuals in the population is assumed to be 50 years.
Note that we will also vary the lifespan, in order to see its effect on $R_e$.
The total population size can be normalised to be 1 by letting
$b=\mu$.  Note that this can be done without loss of generality by dividing the equations in (\ref{2patch}) by $b/\mu$, which rescales the variables as fractions of carrying capacity (in the absence of migration) $b/\mu$, and rescales the parameters $b$ and $\beta$.  Explicitly, the following parameters will be used in the
following subsection: $b=\mu=1/50 $, $\gamma=365/16$, $\beta=319.655$.  These parameters give a value of $\widehat{R}_0=14$,
close to the upper bound of reproductive potential for poliovirus.
The pulse vaccination proportions, $\psi$, are taken to be $\psi=0.231$, in order to bring the reproduction number close to the threshold value of $1$.   This proportion may seem small, but it should be mentioned that vaccination campaigns only target children; therefore, $23.1\%$ represents a substantial percentage of children to vaccinate.
The coupling parameter values depend on the specific regions of consideration.  In the simulations below, the coupling parameters $m$ and $f$ will be varied and, in some instances, will be chosen relatively large to illustrate the effect of phase difference on $R_e$. Also, a seasonal transmission rate will be introduced in Section \ref{seasSec}.  The aforementioned parameter values are given in Table
\ref{table}, along with units and descriptions.

\begin{table}[t!]
\caption{Parameter values, units and description for case of identical patches} 
\centering 
\begin{tabularx}{\textwidth}{>{} lXXX}
\toprule
Parameter  &  Value & Units & Meaning\\ [0.5ex]
\toprule
\\
$b$ &  1/50 & ${\rm individuals}\times {\rm year}^{-1}$  & birth rate  \\ [0.5ex]	
$\mu$ &  1/50 & ${\rm year}^{-1}$  & death rate  \\ [0.5ex]
$\beta$ & 319.655 & $\left({\rm individuals} \times {\rm year}\right)^{-1}$ & transmission rate\\[0.5 ex] 
$\gamma$ & (1/(16/365)) & ${\rm year}^{-1}$ & recovery rate \\ [0.5 ex]
$m,k$ & varied; $0\leq m,k \leq 2$ &  ${\rm year}^{-1}$  & migration rate of susceptible and infected  \\ [0.5ex]
$f$ & varied; $0\leq f \leq 1/2$ & -- & fraction of cross transmission  \\[0.5 ex] 
$\psi$ & 0.231 & -- & pulse vaccination proportion \\[0.5 ex]
$\tau$ & 1 & years & period \\[0.5 ex]
$\phi$ & varied; $ 0\leq \phi \leq 1$ & ${\rm years}$ & phase difference between pulses \\ [0.5 ex]
$\widehat{R}_0$ & 14 & -- & reproduction number for na\"ive population \\ [0.5 ex] 
\bottomrule
\end{tabularx}
\label{table}
\end{table}

First, consider the case where linear migration is included without seasonality or mass-action coupling of the patches; i.e., $m>0$, $f=0$ and $\beta(t)=\beta=\text{constant}$.  Numerical calculations of $R_e$ as the phase difference between the pulses varies are presented in Figure \ref{fig:mig}.  Notice that $R_e$ is minimised when the pulses are in phase; i.e., when the patches synchronise their vaccination campaigns.  Also, as the migration rate $m$ increases, $R_e$ becomes increasingly sensitive to the phase difference, $\phi$.  These observations are consistent with Theorem \ref{R0theorem}. Also, there is an intuitive explanation for this result.  When regions coupled by migration employ pulse vaccination, we can think of how to best time the vaccination pulses in order to immunise as many migrants as possible.  If the pulses are de-synchronised, then it is possible --- even with 100\% coverage in each patch --- that a migrant (who is born at some time $t_0$ and does not die during the period $[t_0,t_0+\tau]$) can remain unvaccinated by being in Patch 2 when Patch 1 employs pulse vaccination and in Patch 1 when Patch 2 conducts their pulse vaccination.  There is evidence that this effect has led to measles epidemics in the coupled regions of Burkina Faso and C\^ote d'Ivoire in Africa \cite{measles}.   Synchronising the pulses can most effectively reach the migrant population.  Indeed, when the average total susceptible population over the year is plotted with respect to $\phi$, the graph has the same shape as Figure \ref{fig:mig}.   In other words, synchronising the pulses will produce the highest time-averaged coverage for a fixed proportion, $\psi$, of susceptibles that can be vaccinated in each pulse.  This is of course expected from Theorem \ref{R0theorem}, where we prove these statements about synchronisation (locally and in a more restricted setting).  

\begin{figure}[t]
\subfigure[][]{\label{fig:miga}\includegraphics[width=9cm]{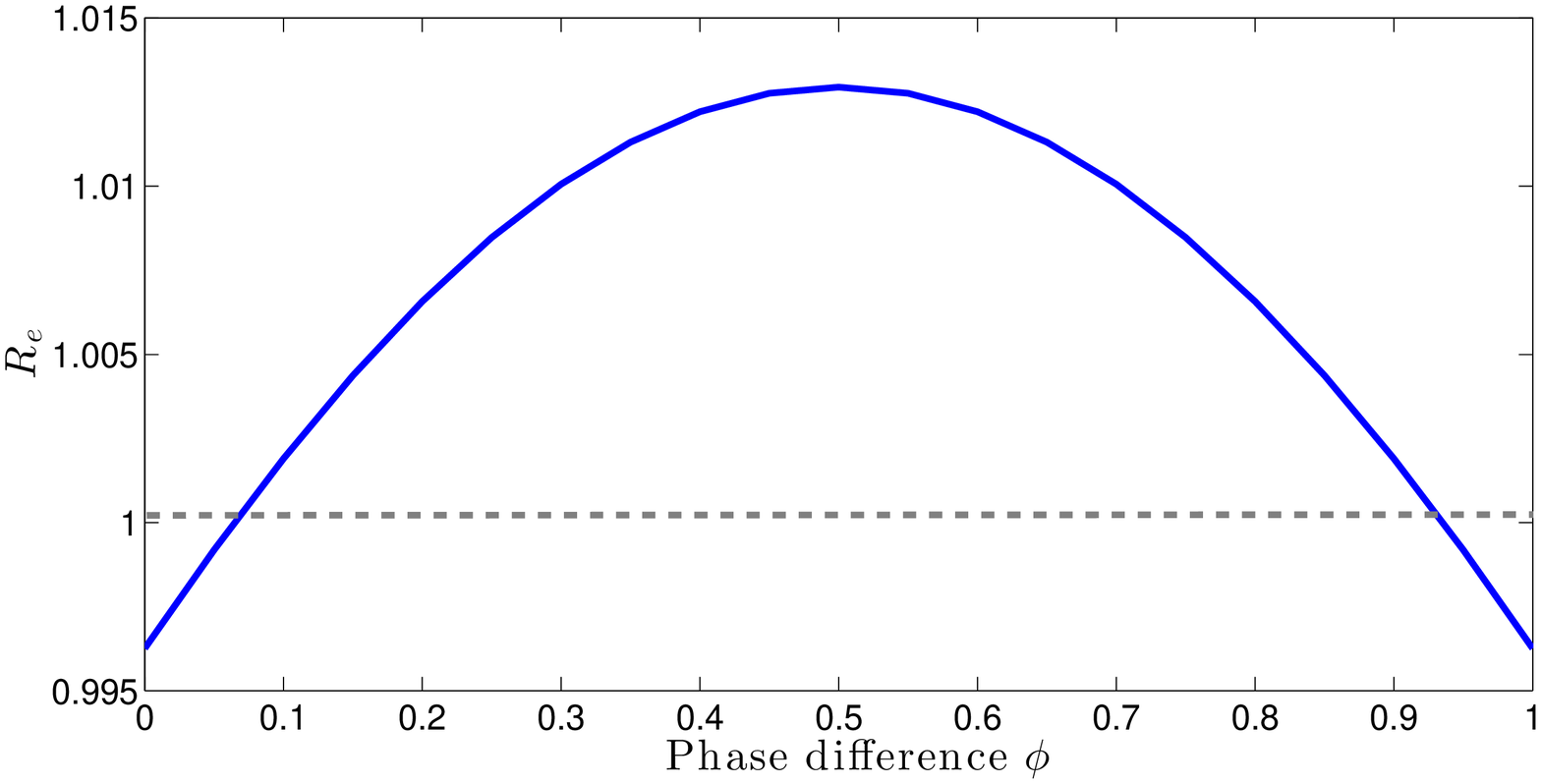}}
\subfigure[][]{\label{fig:migb}\includegraphics[width=9cm]{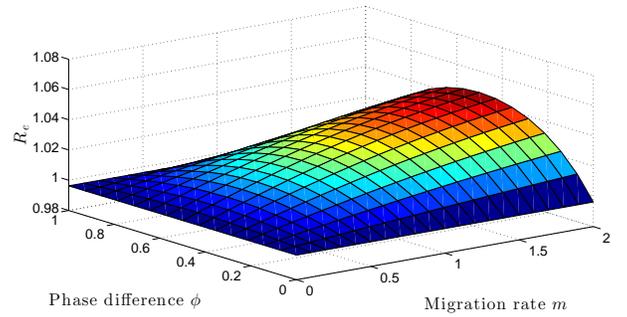}}
  \caption{The effects of phase difference when the transmission rate is constant and there is only migration.  (a) $R_e$ vs phase difference $\phi$, for the case $m=0.5$.  (b) $R_e$ vs phase difference $\phi$ and migration rate $m$.}
    \label{fig:mig}
\end{figure}

Next, suppose that the patches are only coupled through mass-action cross transmission without seasonality; i.e., $0\leq f\leq 1/2$, $m=0$ and $\beta(t)=\beta$.  Figure \ref{fig:cross} displays numerical calculations of $R_e$ versus the phase difference $\phi$ for this case.  Again, $R_e$ is always minimised when the pulses are synchronised; i.e., $\phi=0$.  However, this case is more subtle than the previous one.  When the average total susceptible population over the year is taken as a function of the phase difference $\phi$, it is not hard to see that this will be constant as $\phi$ varies between $0$ and $1$.  Thus the optimality of pulse synchronisation cannot be explained like the previous case where the (averaged) susceptible population was minimised when $\phi=0$, and Theorem \ref{R0theorem} cannot be applied.  Also, observe that the phase difference becomes a non-factor as $f\rightarrow 1/2$ in Figure \ref{fig:crossb}.  In this case, the contribution of cross transmission becomes equal to within-patch transmission when $f\rightarrow 1/2$, causing the infected in a patch to have equal magnitude of correlation with either pulse.  This is likely the reason that the phase difference does not affect $R_e$ when $f=1/2$.

\begin{figure}[t]
\subfigure[][]{\label{fig:crossa}\includegraphics[width=9cm]{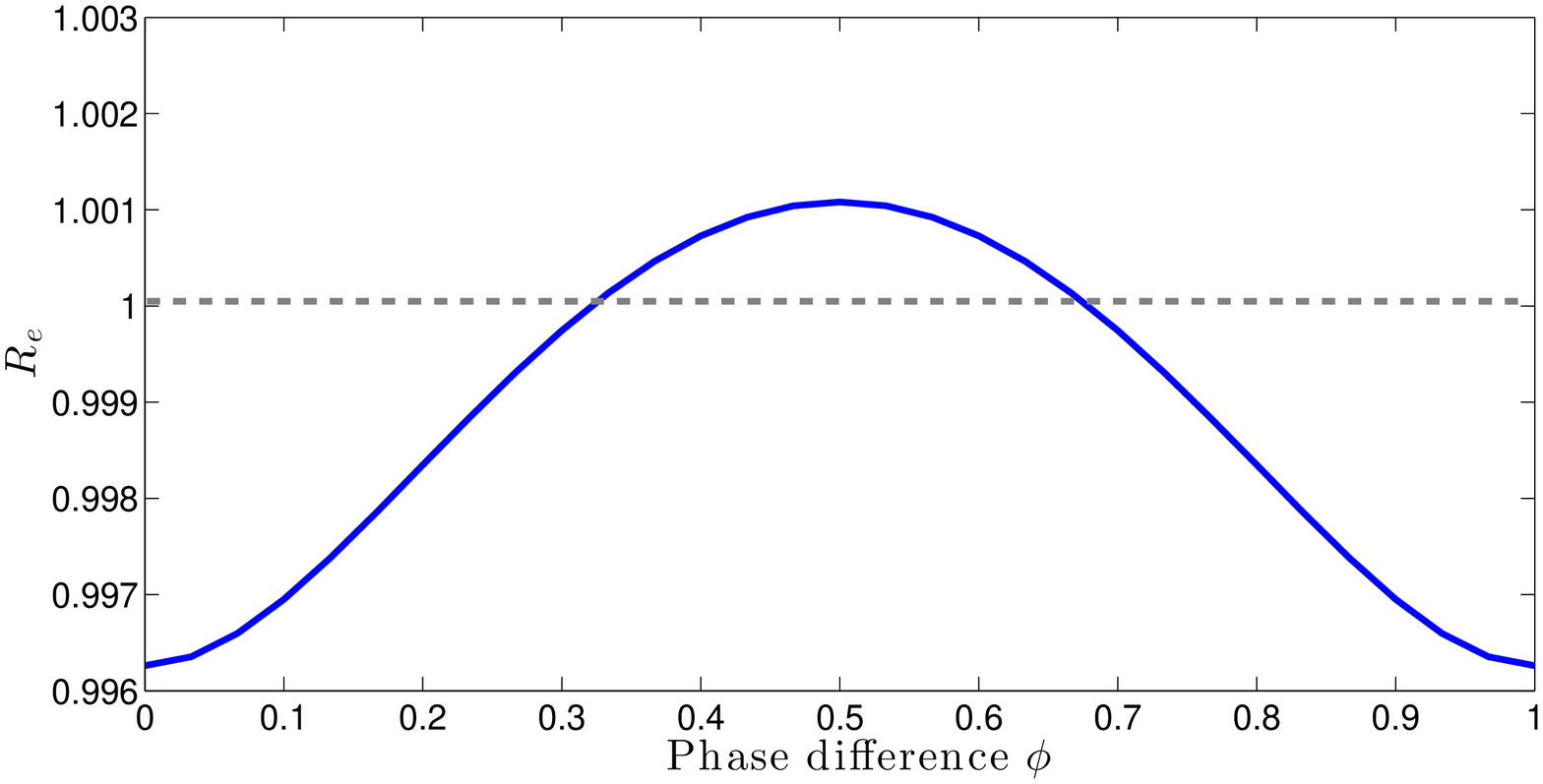}}
\subfigure[][]{\label{fig:crossb}\includegraphics[width=9cm]{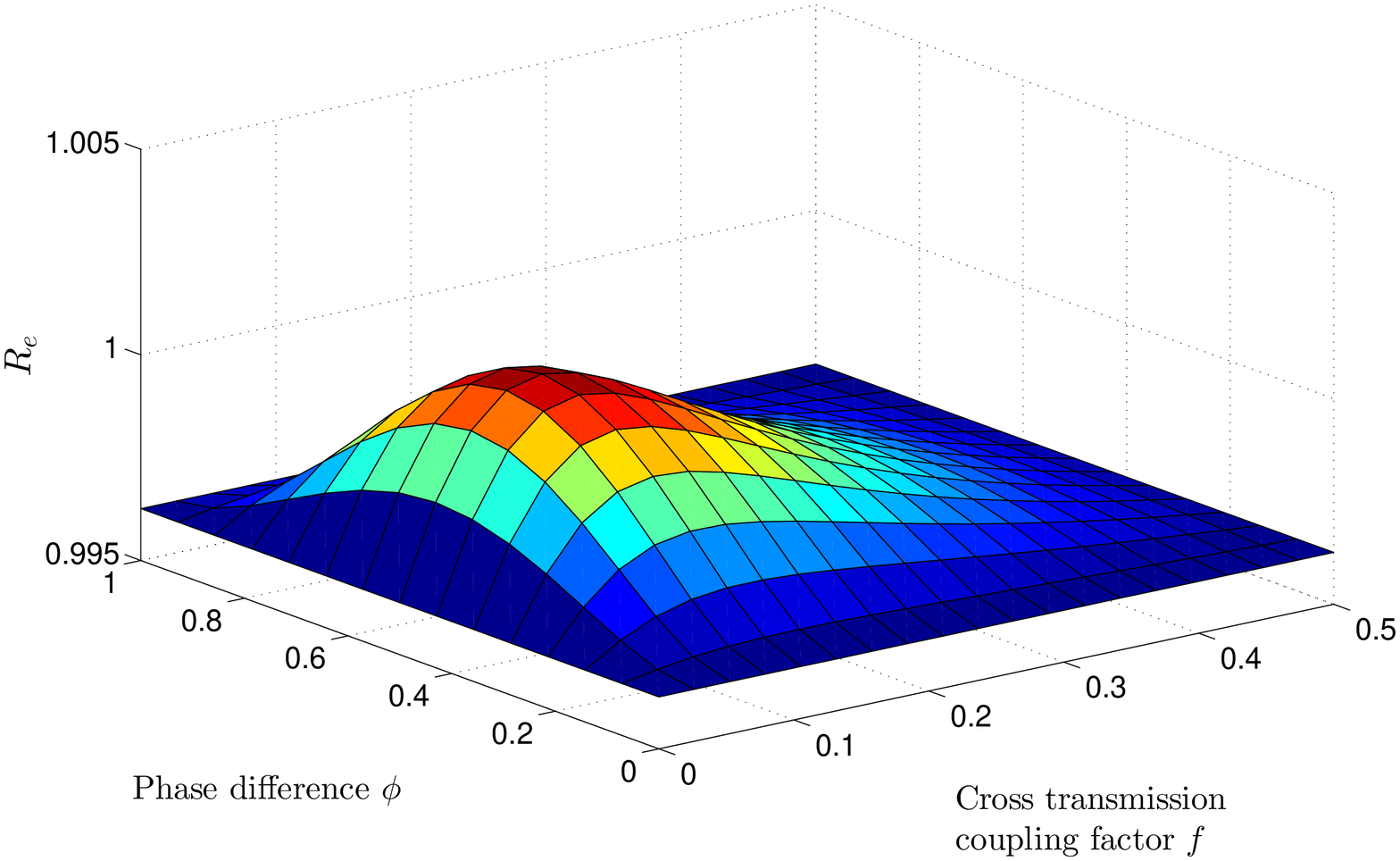}}
  \caption{The case when identical patches are coupled through mass-action cross transmission without seasonality.  (a) $R_e$ vs phase difference $\phi$ for the coupling factor $f=0.1$.  (b) $R_e$ vs phase difference $\phi$ and coupling factor $f$.}
  \label{fig:cross}
  \end{figure}
  
Theorem \ref{R0theorem} implies that $R_e(\phi)$, the reproduction number $R_e$ as a function of phase difference $\phi$, may be most sensitive to $\phi$ when $m$, $\widehat{R}_0$, $\psi$ and $\mu$ are large; this is confirmed in simulations.  From Figures \ref{fig:migb} and \ref{fig:crossb}, it is seen that the migration rate $m$ and coupling factor $f$ strongly affect the amplitude of $R_e(\phi)$.  If the migration rate $m$ is large or if $f$ is close to a certain value (around 0.1 in Figure \ref{fig:crossb}), pulse synchronisation becomes increasingly important, since $R_e$ can vary largely with $\phi$.  In Figure \ref{fig:r0parama}, observe that, as $\mu$ and $\psi$ increases, while keeping $b=\mu$ and fixing the other parameters, the amplitude of $R_e(\phi)$ increases. In Figure \ref{fig:r0paramb}, we plot the pulse vaccination proportion $\psi$ required for $R_e=1$ as a function of $\phi$ for three different wild (before immunisation) reproduction numbers, $\widehat{R}_0$.  As $\widehat{R}_0$ increases, more vaccination is required to bring $R_e$ to unity and the ``phase effect'' increases. For the cases where the amplitude of $R_e(\phi)$ is relatively large, it is vital to synchronise the pulses since the parameter $\phi$ can be the difference between extinction and persistence of the pathogen.  In Figure \ref{fig:deter}, there are simulations of the system (\ref{2patch}) in the case of in-phase pulses ($\phi=0$), resulting in eradication, and out-of-phase pulses ($\phi=0.5$), resulting in disease persistence.

 \begin{figure}[t!]
\subfigure[][]{\label{fig:r0parama}\includegraphics[width=9cm]{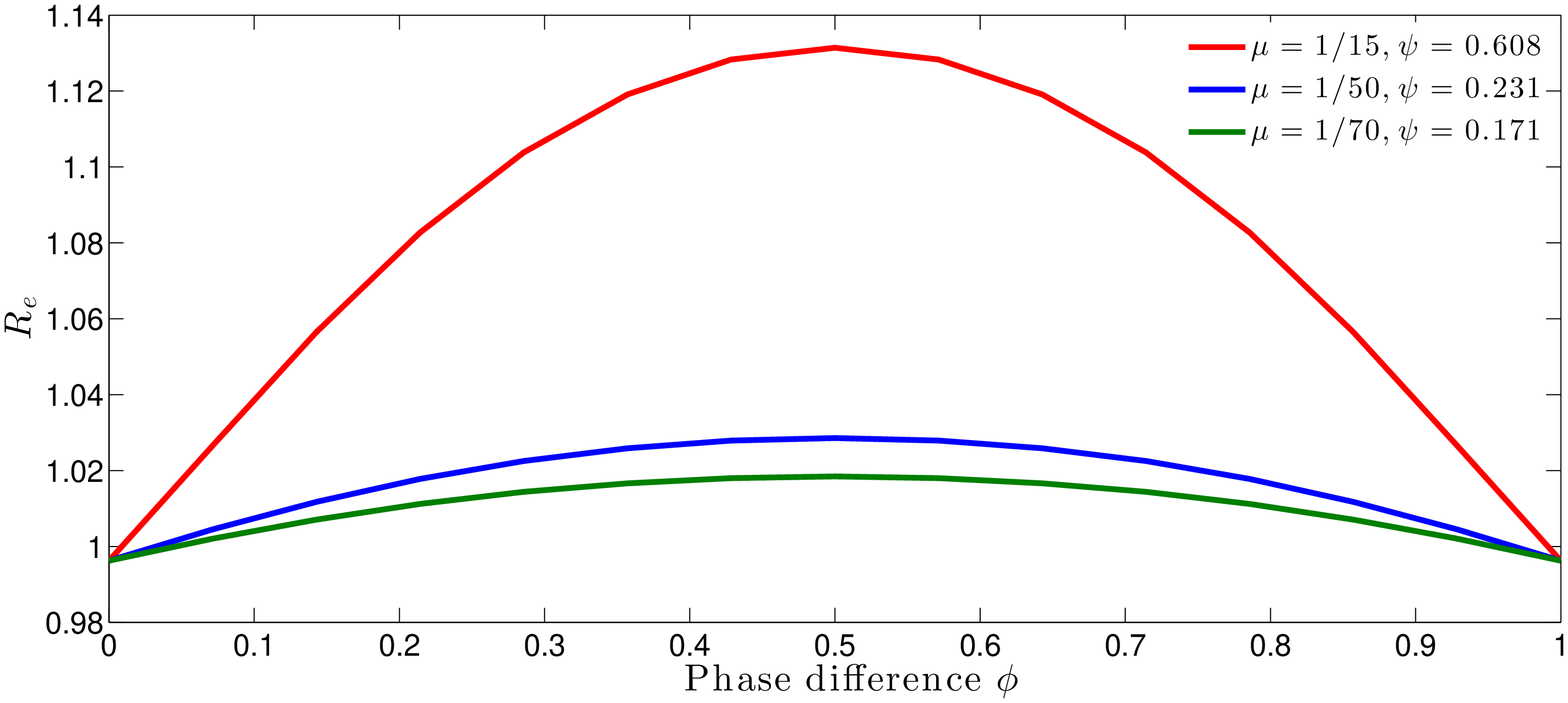}} 
\subfigure[][]{\label{fig:r0paramb}\includegraphics[width=9cm]{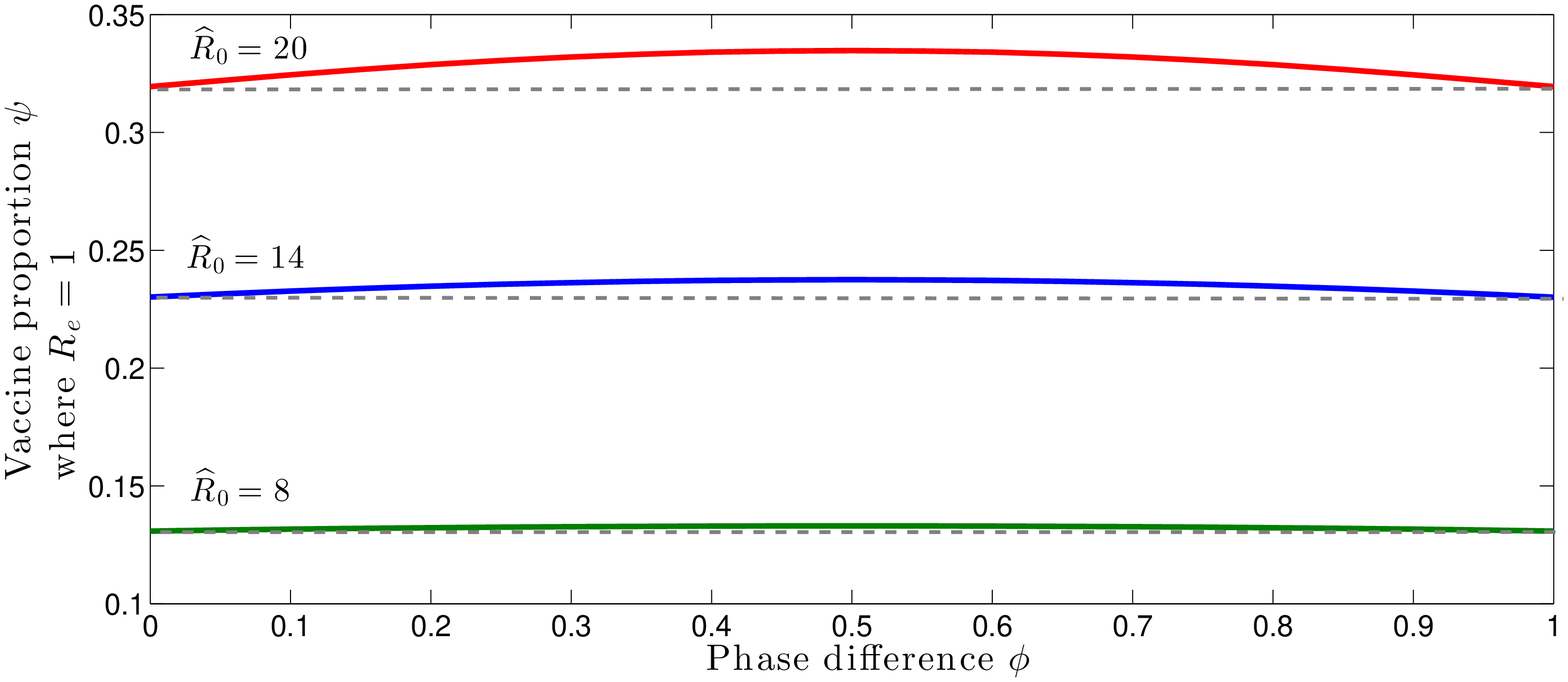}}
\caption{ The effect of $\mu$, $\psi$ and $\widehat{R}_0$ on $R_e(\phi)$.  (a)  $R_e(\phi)$ for three different values of $\mu$ and $\psi$ with $m=1$, $f=1$ and all other parameters as in Table \ref{table}.  (b) The pulse vaccination proportion $\psi$ required for $R_e=1$ as the phase difference $\phi$ varies for three different values of $\widehat{R}_0$ with $m=1$, $f=1$ and all other parameters as in Table \ref{table}.}
  \label{fig:r0param}
  \end{figure}

 \begin{figure}[t!]
\subfigure[][]{\label{fig:detera}\includegraphics[width=9cm]{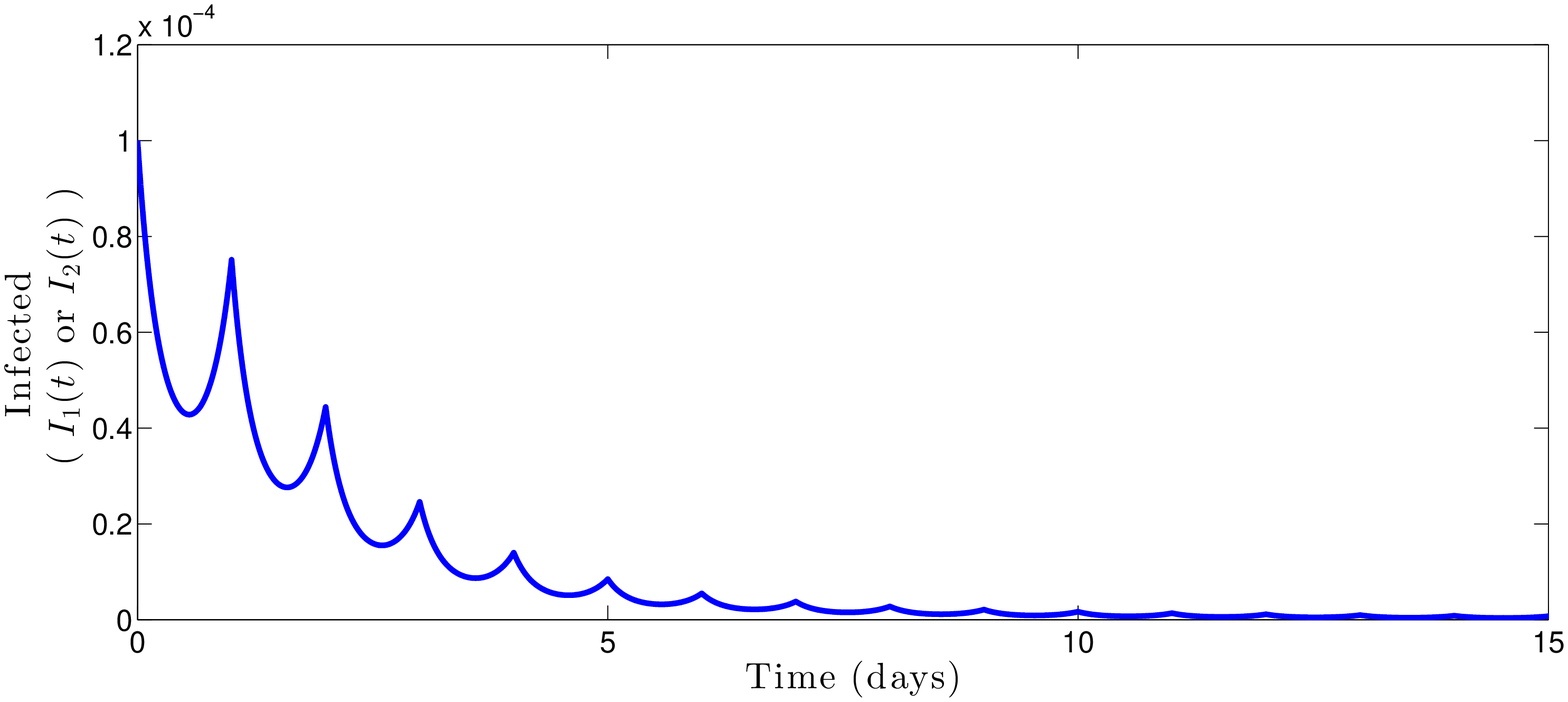}}
\subfigure[][]{\label{fig:deterb}\includegraphics[width=9cm]{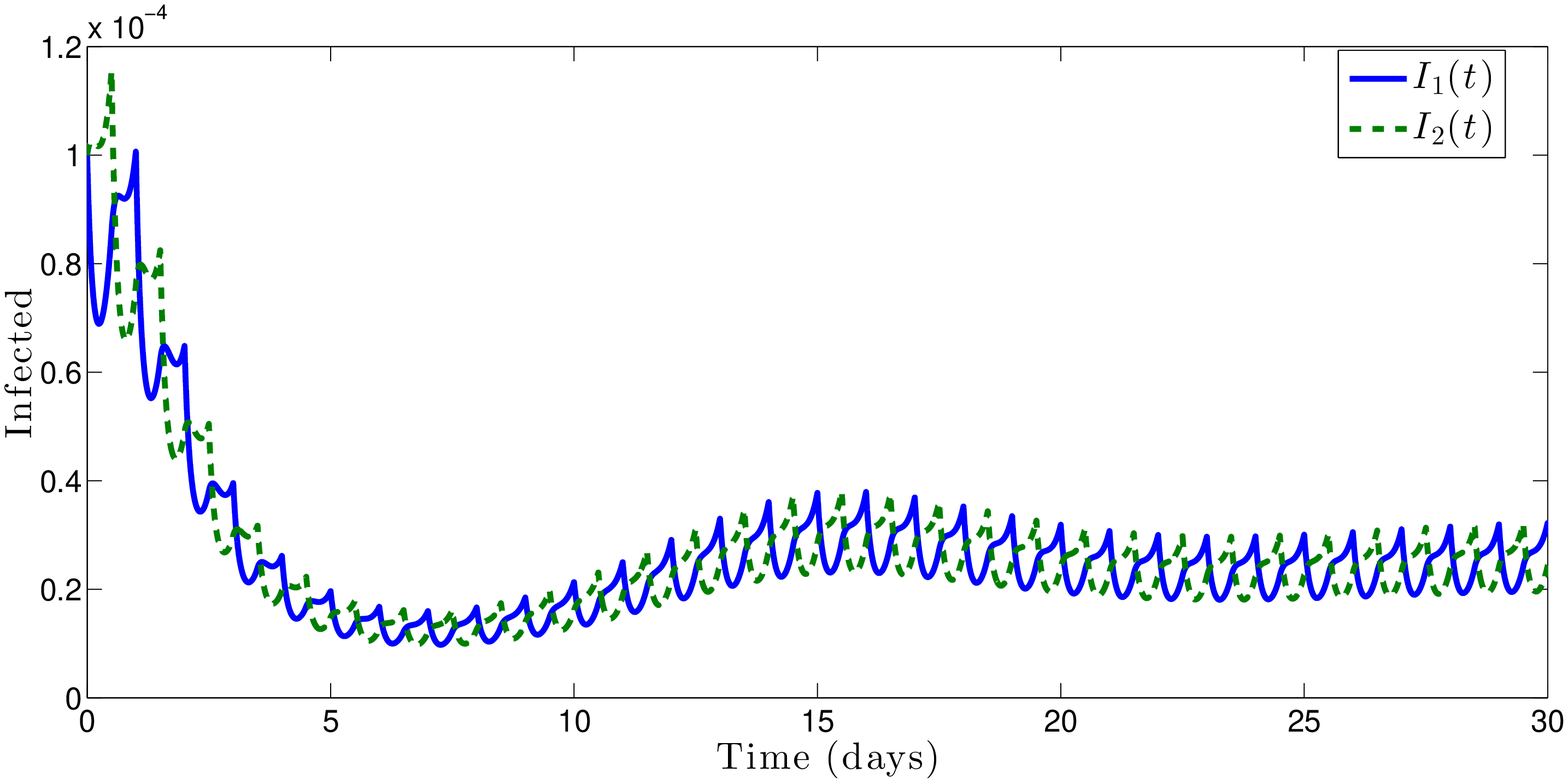}}
\caption{Simulations showing how the timing between the pulse vaccination can determine whether the disease persists. The coupling parameters are taken to be $m=1$ and $f=0.1$. (a) Simulation of the infected in Patches 1 and 2 when the vaccination pulses are synchronised; i.e., $I_i(t)$ when $\phi=0$. (b) Simulation of the infected in Patches 1 and 2 when the pulses are completely desynchronised; i.e., $I_i(t)$ when $\phi=0.5$. Identical patches are considered.}
  \label{fig:deter}
\end{figure}

\subsection{The SIR model with identical patches and seasonality}\label{seasSec}
Now consider identical patches with seasonality, where $\beta(t)=\beta(1+a\sin(2\pi (t-\theta)))$.  Here $\theta$ is a seasonal phase-shift parameter, which allows us to vary the timing of the pulses throughout the year.  In the example simulated in Figure \ref{fig:seas}, it is optimal to synchronise the pulse vaccinations and to execute just before the high-transmission season.  This finding  agrees with results obtained for single patch SIR models \cite{optimalvac}.  The importance of synchronising the pulses increases with migration rate $m$, while the sensitivity to timing the pulses with respect to seasonality increases with the seasonal forcing amplitude $a$.  Also, the sensitivity to timing pulses with respect to each other and seasonality both increase with $\psi$, $\mu$, and $\widehat{R}_0$. 

\begin{figure}[t!]
\centering
\includegraphics[width=9cm]{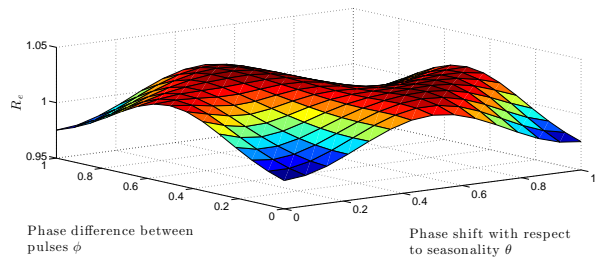}
 \caption{$R_e$ vs phase difference between pulses for the system with seasonality.   The seasonal transmission is of the form $\beta(t)=\beta(1+a\sin(2\pi (t-\theta)))$ where $\theta$ is the seasonal phase shift.  Here $a=0.5$, the migration rate is $m=0.5$, there is no cross-transmission ($f=0$), and the other parameters are as in Table \ref{table}.  In this case, the results show that it is best to synchronise pulse vaccinations and to execute them during the season before the high-transmission season. }
 \label{fig:seas}
 \end{figure}

If the seasonal transmission coefficients for the two patches are not in phase, then optimal timing of pulses with respect to seasonality can be in conflict with synchronising the pulses.  This creates a trade-off between synchronising the pulses and optimally timing the pulse in each patch according to the transmission season.  In the pulse vaccination operation against polio, Operation MECACAR, public health officials had to consider this trade-off \cite{NID}.  In this case, they decided that pulse synchronisation was most important.  Theoretically, the optimal timing of pulse vaccinations should depend on the specific parameters, especially the relative size of migration rate to seasonal forcing amplitude.  To illustrate this phenomenon, we consider transmission rates $\beta_1(t) =\beta(1+a\sin(2\pi (t-\theta)))$ and $\beta_2(t)=\beta(1+a\sin(2\pi (t-\theta-\sigma)))$ for Patch 1 and Patch 2, respectively.  Here $\sigma$ is the phase difference between the seasonal transmission rates of Patch 1 and Patch 2.  In Figure \ref{fig:seasop}, $R_e$ is calculated for the case where the seasonal transmission rates are out of phase; i.e., $\sigma=0.5$.    In Figure \ref{fig:seasopa}, the migration rate $m$ is set to $0.5$ and the mass-action coupling $f$ is $0$.  For this case, the seasonal transmission has a larger effect than the migration, and it is best to desynchronise the pulses so that each pulse occurs in the season before the higher transmission season.  In Figure \ref{fig:seasopb}, the migration rate is assumed to be larger ($m=2$); in this scenario, it is best to synchronise the pulses.

\begin{figure}[t!]
\subfigure[][]{\label{fig:seasopa}\includegraphics[width=9cm]{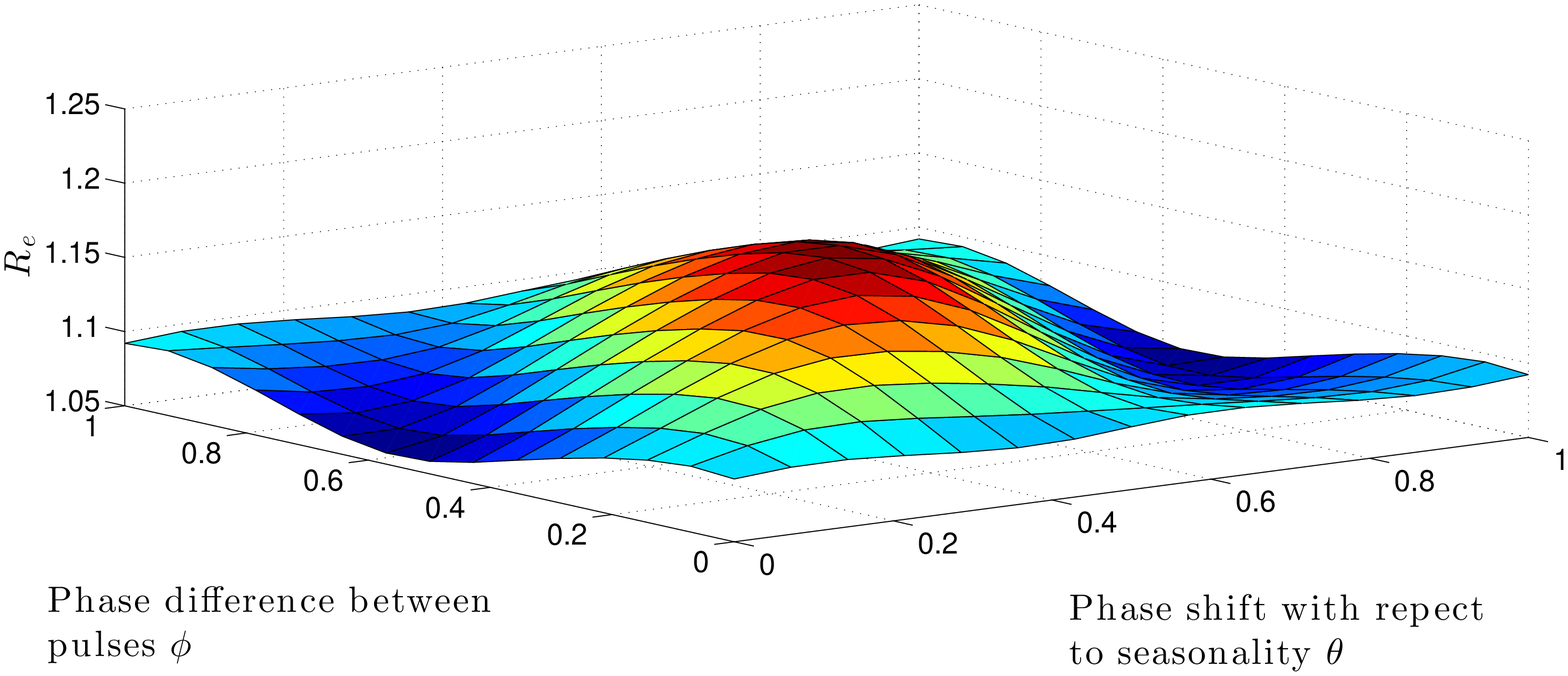}}
\subfigure[][]{\label{fig:seasopb}\includegraphics[width=9cm]{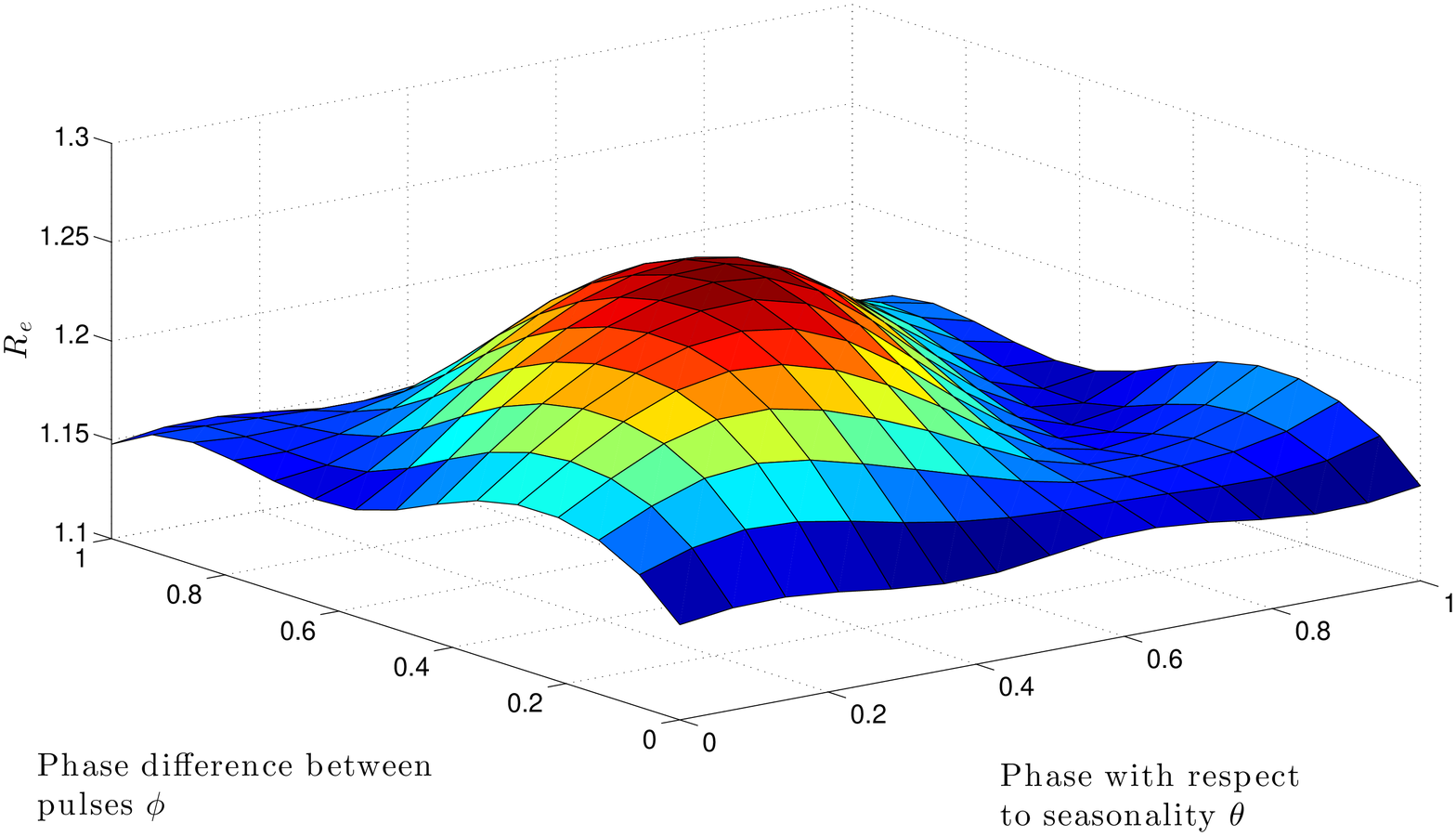}}
 \caption{$R_e$ vs phase difference between pulses for the system with out-of-phase seasonal transmission rates.   The seasonal transmission rates are of the form $\beta_1(t)=\beta(1+0.5\sin(2\pi (t-\theta)))$ and $\beta_2(t)=\beta(1+0.5\sin(2\pi (t-\theta-0.5)))$.   In (a), the migration rate, $m$, is set to $0.5$.  For this case, it is best to desynchronise pulses.  In (b), the migration rate is set to $2$, and it is best to synchronise the pulses.  In both figures, there is no mass-action coupling $(f=0)$ and the other parameters are specified in the text. }
 \label{fig:seasop}
 \end{figure}

 \subsection{Comparison of vaccination strategies and effect of different movement scenarios on optimal vaccine distribution}
An interesting and possibly applicable exercise is to compare a constant-vaccination strategy with the pulse-vaccination strategy.  From a theoretical standpoint, it is important to reconcile results obtained for pulse vaccination with the findings for a smooth, constant vaccination rate.  On the practical side, disease-control authorities may like to know the optimal vaccination strategy based on a simple cost measure.  The basic measure that will be used to quantify the cost of a vaccination strategy is vaccinations per period $\tau$ calculated at the disease-free periodic solution.  From an economic perspective, this cost measure has the appeal of simplicity.  To understand why this definition can also be the dynamically sound way of measuring cost, it is instructive to consider the case of isolated patches or, without loss of generality, a single patch under a general periodic vaccination strategy.  Specifically, consider the following system:
\begin{align*}
\frac{dS}{dt}&=b-\mu S-\beta(t) SI - \zeta(t)S \\
\frac{dI}{dt}&=\beta(t) SI-(\mu+\gamma)I,
\end{align*}
where the $\zeta(t)$ is a $\tau$-periodic vaccination rate and the transmission rate, $\beta(t)$, is $\tau$-periodic.  In the case of constant vaccination, $\zeta(t)\equiv \sigma$, where $\sigma\in \mathbb{R}_+$.  For pulse vaccination, $\zeta(t)=\sum_{n\in\mathbb{N}}\psi \delta_{n\tau}$, where $\delta_t$ is the Dirac delta mass centred at $t$ and $0\leq \psi\leq 1$.  In \cite{optimalvac}, the authors rigorously define the appropriate space of periodic vaccination rates to include the Dirac delta mass and guarantee existence of a unique disease-free susceptible periodic solution, $\overline{S}_{\zeta}(t)$, for any periodic vaccination rate $\zeta(t)$ in this setting.  The cost of vaccination (vaccinations per period calculated at $\overline{S}_{\zeta}(t)$) is 
\begin{align*}
C_{\zeta}\equiv  \int_0^{\tau} \zeta(t)\overline{S}_{\zeta}(t)\,dt.
\end{align*}
Using the next-generation characterisation (\ref{rcomp}), the effective reproduction number, $R_e$, can be explicitly found as
\begin{align*}
R_e=\frac{1}{\mu+\gamma}\frac{1}{\tau} \int_0^{\tau} \beta(t)\overline{S}_{\zeta}(t) \,dt. 
\end{align*}

Onyango and M\"uller studied optimal vaccination strategies in this model in terms of minimising $R_e$ \cite{optimalvac}.  Here we give a simple representation of $R_e$ that can yield insight into comparing vaccination strategies, but do not provide the rigorous construction of the optimal strategy done by Onyango and M\"uller \cite{optimalvac}.  Specifically, we rewrite $R_e$ for a general periodic vaccination strategy $\zeta(t)$ in a form that compares it to the constant-vaccination strategy of equal cost.  First, as noted in \cite{optimalvac}, by integrating the $\dot{S}$ equation over one period, the following can be obtained:
\begin{align*}
C_{\zeta}=\tau b-\mu \int_0^{\tau} \overline{S}_{\zeta}(t)\,dt.
\end{align*}

Define the average transmission rate as $\langle\beta\rangle=\frac{1}{\tau}\int_0^{\tau}\beta(t)\,dt$.  For constant vaccination, $\zeta_1(t)\equiv\sigma$, so we find that $C_{\zeta_1}=\frac{\sigma b \tau}{\mu+\sigma}$ and the effective reproduction number is $R_e^c=\frac{\langle\beta\rangle b}{(\mu+\sigma)(\mu+\gamma)}$.  For the periodic vaccination rate $\zeta(t)$, we rewrite the effective reproduction number $R_e$ by comparing it to a constant-vaccination strategy of equal cost:
\begin{align*}
\langle\beta\rangle\frac{\sigma b \tau}{\mu+\sigma}&=\langle\beta\rangle C_{\zeta_1}=\langle\beta\rangle C_{\zeta}=\langle\beta\rangle   \tau b - \langle\beta\rangle\mu \int_0^{\tau}\overline{S}_{\zeta}(t)\,dt \\
\langle\beta\rangle\frac{\sigma b \tau}{\mu+\sigma}&= \langle\beta\rangle\tau b -\mu \int_0^{\tau}\beta(t)\overline{S}_{\zeta}(t)\,dt+\mu \int_0^{\tau}\overline{S}(t)(\beta(t)-\langle\beta\rangle)\,dt \\
\langle\beta\rangle\frac{\sigma b \tau}{\mu+\sigma}&= \langle\beta\rangle\tau b -\mu(\mu+\gamma)\tau R_e+\mu \int_0^{\tau}\overline{S}_{\zeta}(t)(\beta(t)-\langle\beta\rangle)\,dt \\
 \Leftrightarrow R_e&=R_e^c +\frac{1}{(\mu+\gamma)\tau}\int_0^{\tau} \overline{S}_{\zeta}(t)(\beta(t)-\langle\beta\rangle)\,dt .
\end{align*}

If we normalise $\overline{S}_{\zeta}(t)$ by letting $\bar s_{\zeta}(t)=\frac{\mu}{b}\overline{S}_{\zeta}(t)$ and denote $\widehat{R}_0=\frac{\langle\beta\rangle b}{\mu(\mu+\gamma)}$ (the reproduction number in the absence of vaccination), then the following is obtained:
\begin{align*}
R_e=R_e^c-\widehat{R}_0 \frac{1}{\tau} \int_0^{\tau} \overline{s}_{\zeta}(t)\left(1-\frac{\beta(t)}{\langle\beta\rangle}  \right) \,dt.
\end{align*}

Clearly, if $\beta(t)$ is constant --- i.e., $\beta(t)=\langle\beta\rangle$ --- then all vaccination strategies are equivalent, in particular pulse- and constant-vaccination strategies, and $R_e=R_c$.  This observation provides justification as to why $C_{\zeta}$ is an appropriate cost measure from an epidemiological point of view.  When $\beta(t)$ is not constant, then a different result is obtained.  Define $\alpha(t)=1-\frac{\beta(t)}{\langle\beta\rangle}$ and notice that $\overline{\alpha}=0$.  Then $\overline{s}_{\zeta}(t)$ acts as a weighting function and can be chosen to maximise $\int_0^{\tau}\alpha(t) \overline{s}_{\zeta}(t)$, thereby minimising $R_e$.  Intuitively, a susceptible profile $\overline{s}_{\zeta}(t)$ that is minimal for the range of values where $\alpha(t)<0$ and maximal for $\alpha(t)>0$ would seem to work the best.  The rigorous construction of the optimal vaccination strategy was carried out by Onyango and M\"uller \cite{optimalvac}.  They found that a single, well-timed pulse is the optimal strategy (assuming that the allotted cost can be exhausted by a single pulse; otherwise, the optimal susceptible profile requires $\overline{s}(t)\equiv 0$ on $[t_1,t_2]\subset [0,
\tau)$ and no vaccinations can occur in $[0,\tau)\setminus [t_1,t_2]$) \cite{optimalvac}.  For the pulse-vaccination strategy $\zeta(t)=\sum_{n\in\mathbb{N}}\psi \delta_{n\tau}$, we find that 
$$\overline{s}_{\zeta}(t)=1-e^{-\mu t}\left(1-\frac{(1-\psi)(1-e^{-\mu \tau})}{1-e^{-\mu \tau}(1-\psi)}\right). $$
It can be inferred that the advantage of the optimal pulse vaccination over constant vaccination (in terms of difference in $R_e$) increases with $\widehat{R}_0$, $\mu$ (when $\widehat{R}_0$ remains fixed), $\psi$ and the amplitude of $\alpha(t)$.  

A natural question to ask is whether similar results can be obtained for the two-patch model.  First, for the case of no seasonality, does the equivalence of vaccination strategies hold?  The cost can still be defined as number of vaccinations per period in each patch calculated at the disease-free periodic solution.  The vaccination rate $\zeta(t)=(\zeta_1(t),\zeta_2(t))^T$ has two components.  Consider the diagonal matrix $Z(t)={\rm diag}(\zeta_1(t),\zeta_2(t))$ and the $2\times 1$ disease-free periodic solution vector $\overline{S}_{\zeta}(t)$.  Then $C_{\zeta}= \int_0^{\tau}Z(t)\overline{S}_{\zeta}(t)\,dt$; here $C_{\zeta}$ is a $2\times 1$ vector containing the cost in each patch.  The effective reproduction number, $R_e$, is defined in Section \ref{sec4}, but cannot be explicitly expressed for multiple patches.  Using notation from Section \ref{sec4}, we note that the reproduction number, $R_e$, for constant vaccination in the case of no seasonality (i.e., $F(t)=F,V(t)=V$ are independent of time) is found to be $FV^{-1}$, which agrees with the next-generation matrices for autonomous disease-compartmental models \cite{wang}.  

Under the conditions of Theorem \ref{R0theorem} --- no cross-transmission and no migration of infected --- the equivalence of vaccination strategies of equal cost holds for constant transmission rate by (\ref{R0diag}). Numerical simulations showed that shifting the phase difference between the pulse vaccinations can alter the value of $R_e$ when there is cross transmission and no migration (Figure \ref{fig:cross}), even though the cost $C_{\zeta}$ remains constant.  Thus the equivalence of vaccination strategies of equal cost cannot hold for the general constant transmission case in system (\ref{2patch}).  However, we performed simulations with many different parameters showing that the synchronised pulse vaccinations have values of $R_e$ very close (within the range of numerical error) or identical to the reproduction number for the constant-vaccination strategy of equal cost.  For the case of no cross transmission (but migration of \emph{both} susceptible and infected), simulations produced identical or nearly identical reproduction numbers for constant and pulse vaccinations of equal cost, independent of the phase difference $\phi$.  This is not in contradiction with Figure \ref{fig:mig}, since shifting the phase difference $\phi$ in the migration model alters the cost of vaccination; i.e., synchronised pulses result in more vaccinations per period than desynchronised pulses in model (\ref{2patch}) when $m_1,m_2>0$.

 \begin{figure}[t!]
\subfigure[][]{\label{fig:optdeploya}\includegraphics[width=9cm]{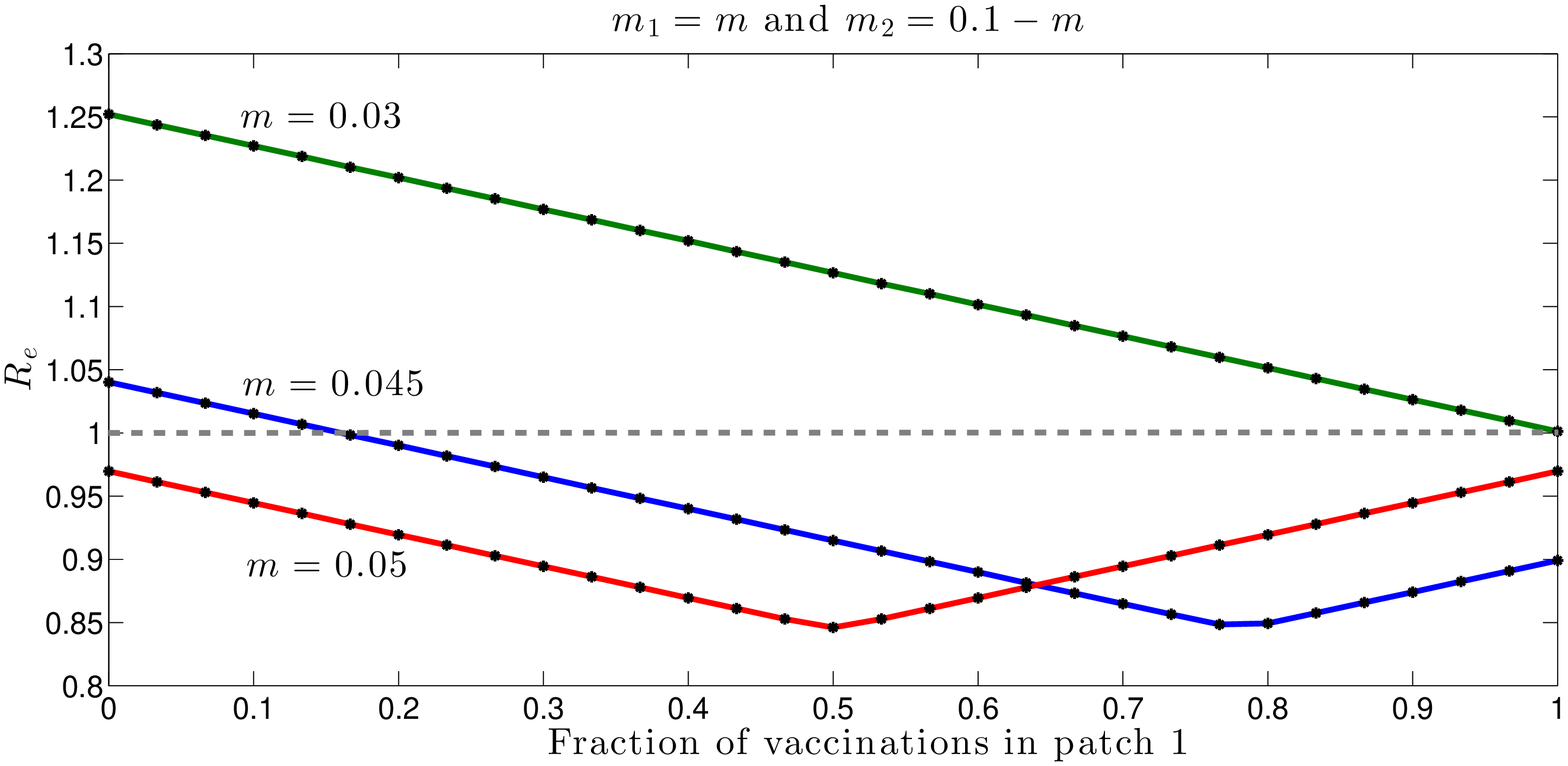}} 
\subfigure[][]{\label{fig:optdeployb}\includegraphics[width=9cm]{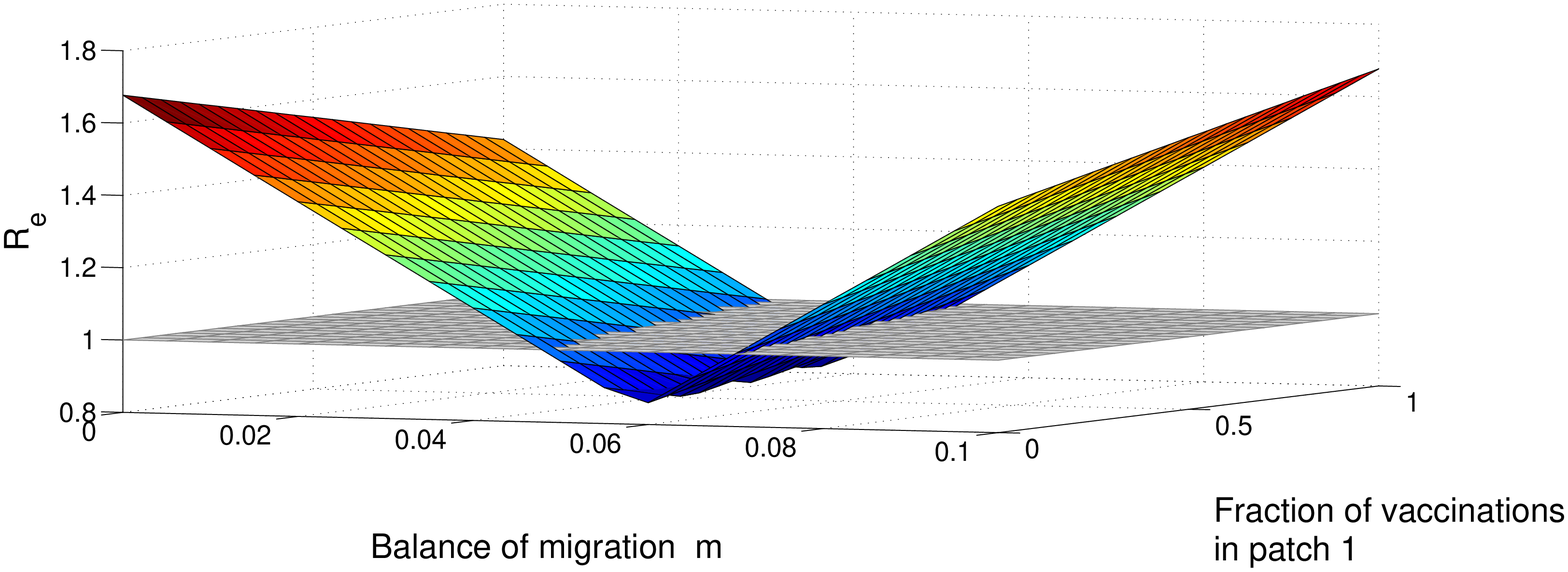}}
\caption{Simulations showing that an imbalance in migration rates can affect $R_e$ and the optimal deployment of the vaccine in otherwise identical patches.  The parameters are $\beta=36.5$, $\phi=0$, $f=0$, and $b,\mu,\gamma$ as in Table \ref{table}, yielding $\widehat{R}_0=1.6$.  The migration rates are $m_1=m$ and $m_2=0.1-m$, where $0\leq m\leq 0.1$.  The total amount of vaccinations per year, $V_{tot}$ (summed over both patches) --- i.e., $V_{tot}=\vec{1}\cdot C_{\zeta}$ --- is fixed at $V_{tot}=0.94\%$ of the total population, while the fraction, $q$, of the total allotted vaccinations distributed in Patch 1 varies.  In other words, $C_{\zeta}=\left(  qV_{tot},  (1-q)V_{tot} \right)^T$, where $0\leq q \leq 1$.  Note that the pulse vaccination proportions $\psi_1$ and $\psi_2$ vary as $q$ and $m$ change.  (a) $R_e$ vs $q$, for three different balance of migration rates: $m=0.03$, $m=0.045$ and $m=0.05$, along with $R_e=1$ (dashed line).  The solid coloured lines represent pulse vaccination, whereas the yellow stars represent constant vaccination of equal cost (notice that they are identical).  Notice also that the ratio $\frac{m_1}{m_2}$ affects $R_e$ for each vaccination scenario and alters the optimal vaccine distribution.  (b) A three-dimensional graph of $R_e$ vs $q$ and $m$ with the plane $R_e=1$.  This shows that the required distribution of vaccine needed to control the disease changes as the balance of migration changes.}
  \label{fig:optdeploy}
  \end{figure}

One implication of the cases where pulse vaccination and constant vaccination of equal cost agree on the value of $R_e$ is that results obtained in prior work on the autonomous multi-patch model can carry over to the impulsive model.  For example, an imbalance in migration rates has been shown to strongly affect $R_e$ in previous work on metapopulation models \cite{xiao}.  The same result is found in the case of pulse vaccination, as shown in Figure \ref{fig:optdeploy}.  For otherwise identical patches, an imbalance in migration rates $(m_1\neq m_2)$ causes the susceptibles and infected to concentrate more heavily in one patch, which increases the overall effective reproduction number.  This affects how the vaccine should be optimally distributed among the two patches, as illustrated in Figure \ref{fig:optdeploy}.

As in the single-patch model, including seasonality induces an advantage of well-timed pulse vaccination over constant vaccination of equal cost.  In Figure \ref{fig:compare}, we see that, as the amplitude of seasonality increases, synchronous pulse vaccinations applied the season before the high-transmission season can become more and more advantageous.  Simulations also show that the migration rate does not affect $R_e$ for the case of identical patches and simultaneous pulses.  The advantage of pulse vaccination over constant vaccination depends on the parameters, as detailed previously.  For the simulations in Figure \ref{fig:compare}, pulse vaccination can offer a substantive advantage over constant vaccination.  Hence the inherent advantage of pulse vaccination in a seasonal model may provide motivation for its employment over constant vaccination, contrary to what is stated by Onyango and M\"uller \cite{optimalvac}.  

\begin{figure}[t!]
\subfigure[][]{\label{fig:comparea}\includegraphics[width=9cm]{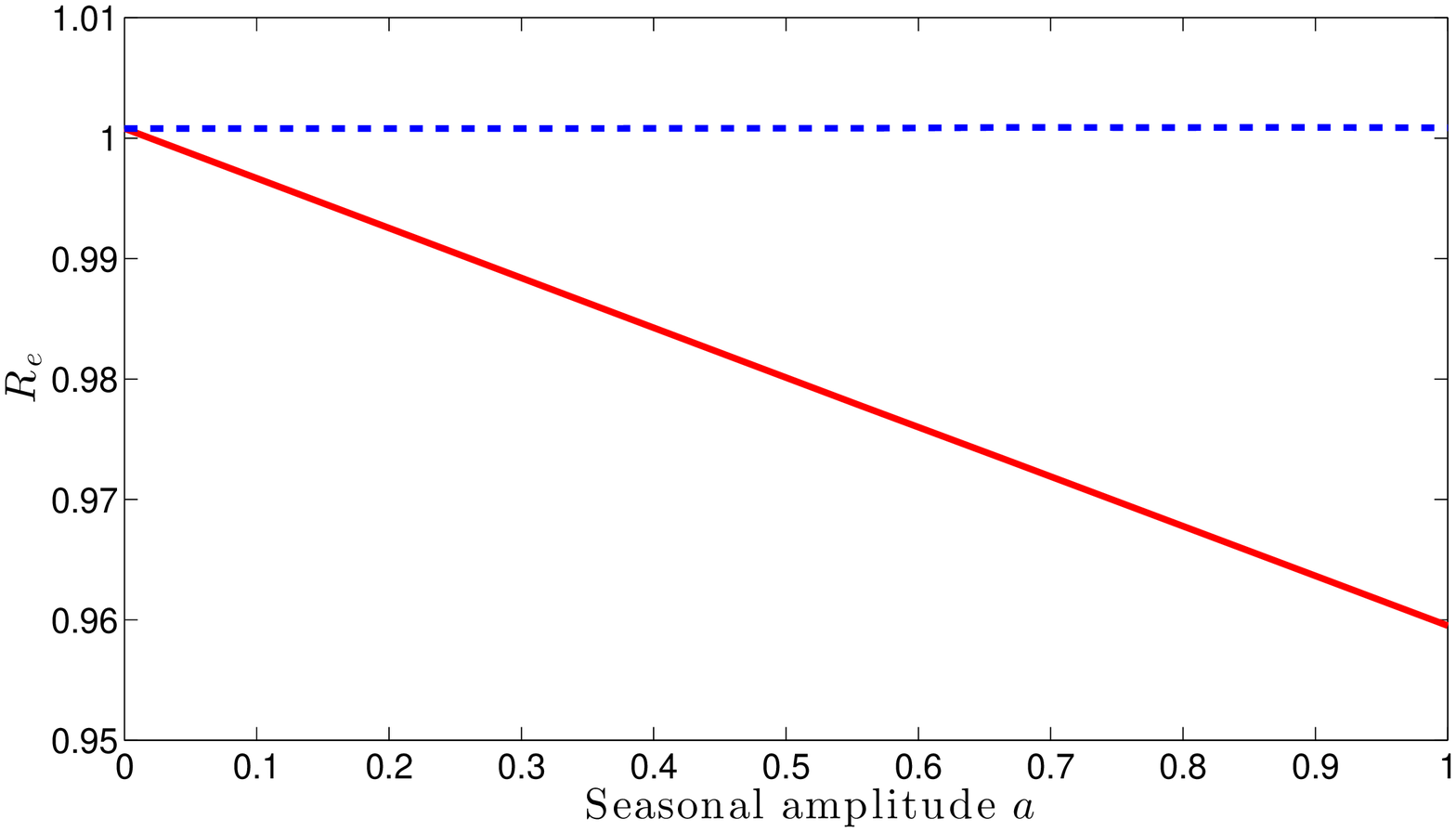} }
\subfigure[][]{\label{fig:compareb}\includegraphics[width=9cm]{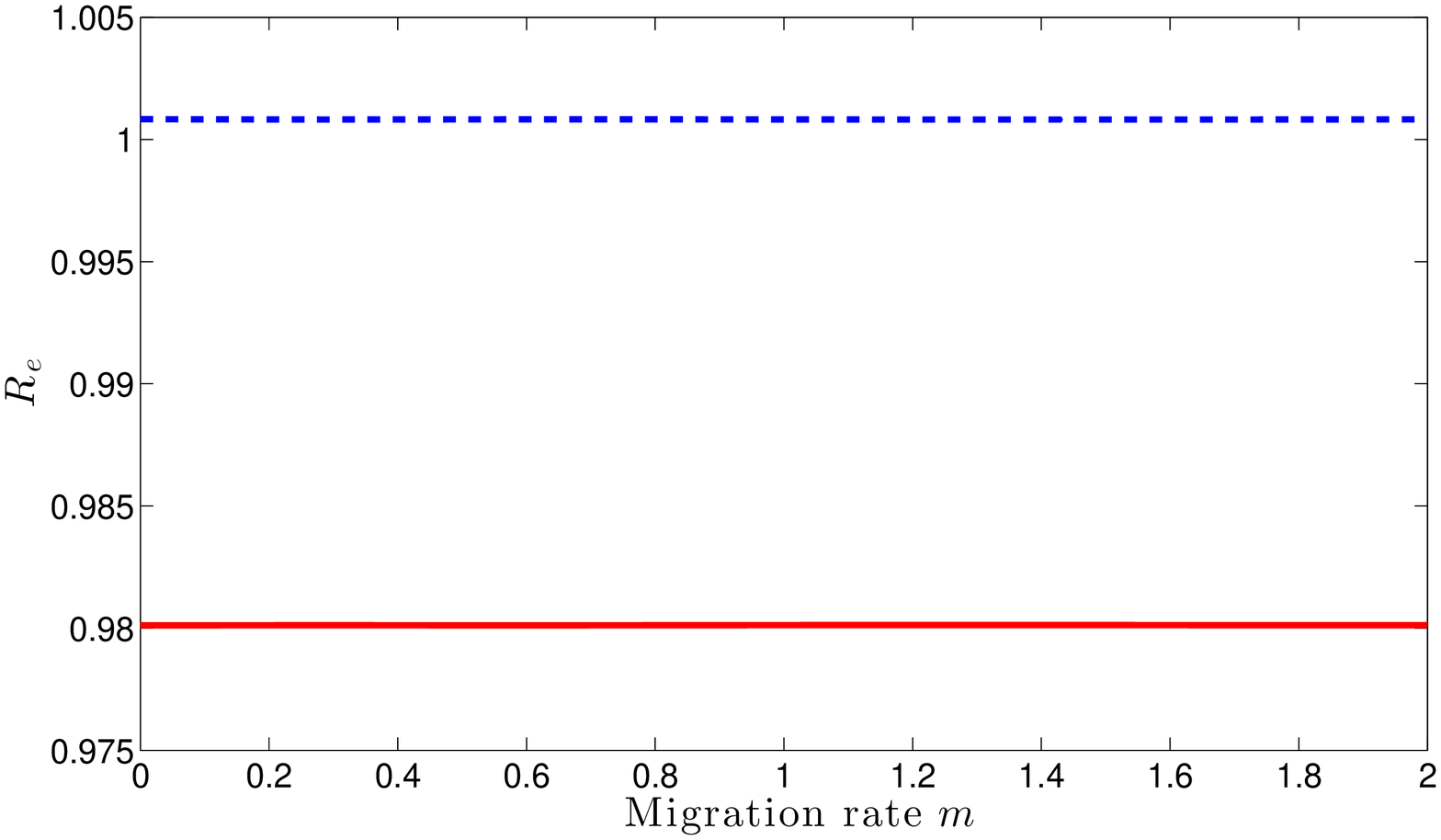}}
\caption{Simulations comparing $R_e$ for pulse vaccination (solid) and constant vaccination (dashed) of equal cost for the seasonal version of model (\ref{2patch}) with $\beta(t)=\beta(1+a\sin(2\pi (t-\theta)))$ and $\phi=0$, so the synchronised pulses occur just  before the high-transmission season.  The other parameters are $f=0$ and $\psi=0.23$, with the remainder as in Table \ref{table}. (a) $R_e$ vs amplitude of seasonality, $a$, with $m=0.0$ (b) $R_e$ vs $m$ with $a=0.5$.}
  \label{fig:compare}
  \end{figure}

%smeg

\subsection{Environmental transmission with identical patches and no seasonality}

Finally, we consider how environmental transmission affects the results.  To begin this section, we state a general theorem about the effective reproduction number for the autonomous (unpulsed) version of the general model (\ref{mod}) with environmental transmission.  The following theorem states that the effective reproduction number for the autonomous version of the general model (\ref{mod}) with environmental transmission is identical to the effective reproduction number of the autonomous model (\ref{mod}) without environmental transmission, but with the redefined direct transmission parameter $\widetilde{\beta}_{ij}=\beta_{ij}+\frac{\xi_j}{\nu_j}\epsilon_{ij}$.  

\begin{figure}[t!]
\subfigure[][]{\label{fig:enva}\includegraphics[width=9cm]{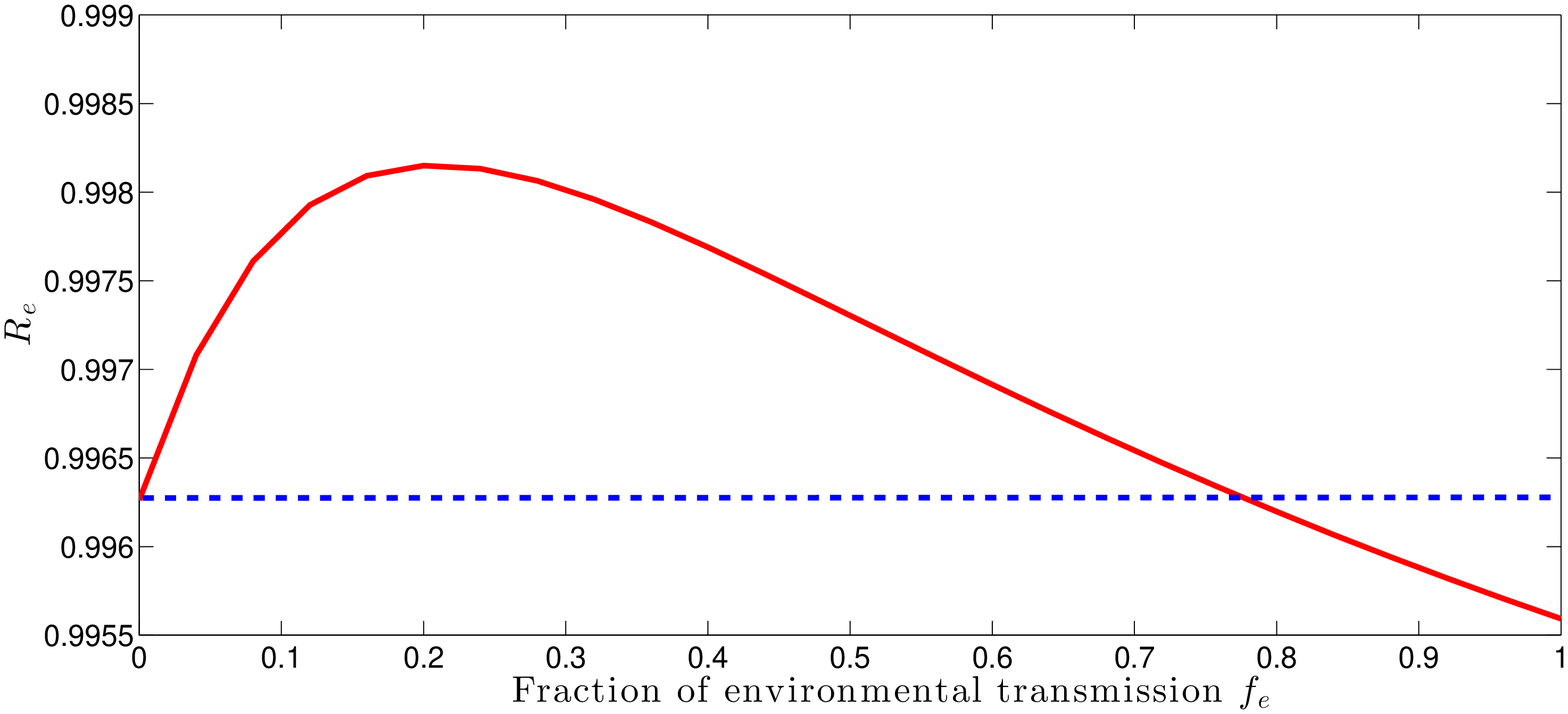}}
\subfigure[][]{\label{fig:envb}\includegraphics[width=9cm]{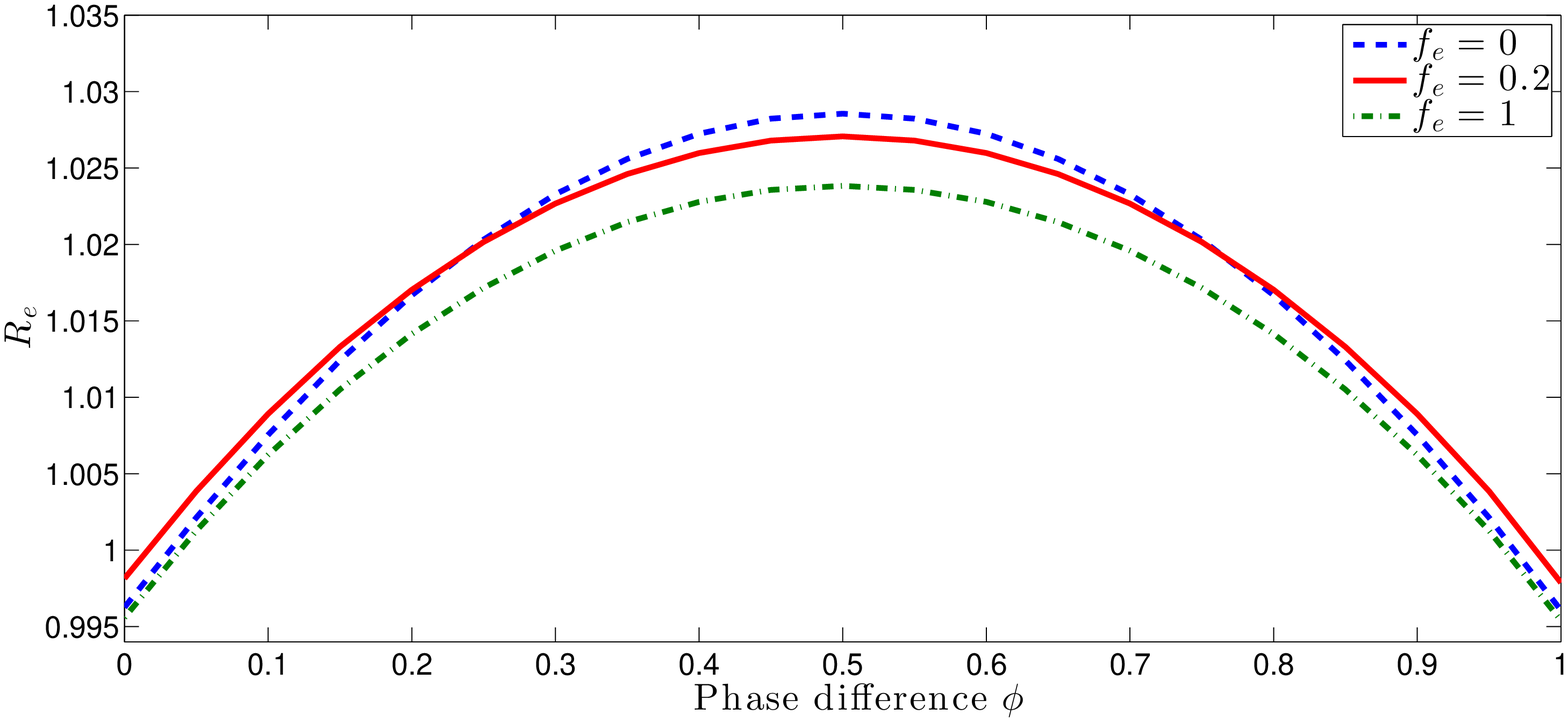} }
\caption{a) Simulations of the system with environmental transmission \eqref{envmodel} comparing $R_e$ for pulse vaccination (solid) and constant vaccination (dashed) of equal cost for the single-patch version of model (\ref{mod}) as the fraction of environmental transmission, $f_e$ increases from 0 to 1.  The parameters are as in Table \ref{table} with $\tilde\beta=319.655$ days$^{-1}$, $\nu=5 \ {\rm year}^{-1}$, $m=f=e_c=0$.  Note that the parameter values for $\xi$ and $\epsilon$ are absorbed into $f_e$ through a rescaling.  b)  $R_e$ vs the phase difference, $\phi$, between the vaccination pulses for three different values of $f_e$ (0, 0.2 and 1), when $m=1,f=c_e=0.1$.  The rest of the parameters are as in the previous simulation.}
  \label{fig:env}
\end{figure}

\begin{theorem}\label{envR0}
Denote $\widehat{R}_e$ as the effective reproduction number of the autonomous version (\ref{mod}).  Let $\widetilde{\beta}_{ij}=\beta_{ij}+\frac{\xi_j}{\nu_j}\epsilon_{ij}$ and $\widetilde{R}_e$ denote the effective reproduction number of the autonomous version of the multi-patch SIR sub-model (no environmental transmission) in (\ref{mod}) with the direct transmission parameter as $\widetilde{\beta}_{ij}$.   Then $\widehat{R}_e=\widetilde{R}_e$.
\end{theorem}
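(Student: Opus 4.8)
The plan is to exploit the block structure of the next-generation matrices. In the autonomous (unpulsed) setting both $F$ and $V$ are constant, and, as recalled in Section \ref{sec4}, the effective reproduction number is $\rho(FV^{-1})$. Moreover, at the disease-free equilibrium one has $I_i=G_i=0$, so the susceptible equilibrium $\overline{S}$ is determined by the demographic and migration terms alone; it is therefore identical for the system with environmental transmission and for the reduced SIR submodel with transmission $\widetilde{\beta}_{ij}$. Hence it suffices to compare the two next-generation matrices built on the same $\overline{S}$.

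First I would order the infected variables as $(I_1,\dots,I_N,G_1,\dots,G_N)$ and read off from (\ref{imod}) the block forms
\[
F=\begin{pmatrix} B & E \\ 0 & 0 \end{pmatrix}, \qquad V=\begin{pmatrix} \widetilde{V} & 0 \\ -\Xi & \mathcal{N} \end{pmatrix},
\]
where $B_{ij}=\overline{S}_i\beta_{ij}$, $E_{ij}=\overline{S}_i\epsilon_{ij}$, $\widetilde{V}=\mathrm{diag}(\mu_i+\gamma_i)-K$ with $K=(k_{ij})$, $\Xi=\mathrm{diag}(\xi_i)$ and $\mathcal{N}=\mathrm{diag}(\nu_i)$. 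The crucial modelling point, which I would state explicitly, is that the shedding term $\xi_i I_i$ is a transition \emph{within} the infected system and therefore belongs to $V$, not to $F$; only direct and environmental contacts with susceptibles count as new infections. Since $K$ has zero column sums and $\mu_i+\gamma_i>0$, the block $\widetilde{V}$ is a nonsingular M-matrix, and $\mathcal{N}$ is a positive diagonal matrix, so $V$ is invertible.

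Next I would invert the block lower-triangular $V$ using the standard formula
\[
V^{-1}=\begin{pmatrix} \widetilde{V}^{-1} & 0 \\ \mathcal{N}^{-1}\Xi\,\widetilde{V}^{-1} & \mathcal{N}^{-1} \end{pmatrix},
\]
and compute
\[
FV^{-1}=\begin{pmatrix} \left(B+E\mathcal{N}^{-1}\Xi\right)\widetilde{V}^{-1} & E\mathcal{N}^{-1} \\ 0 & 0 \end{pmatrix}.
\]
Because this product is block upper-triangular with a vanishing lower block row, its nonzero eigenvalues are precisely those of the top-left block, so $\widehat{R}_e=\rho(FV^{-1})=\rho\big((B+E\mathcal{N}^{-1}\Xi)\widetilde{V}^{-1}\big)$. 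Finally I would identify the entries: since $\mathcal{N}^{-1}\Xi=\mathrm{diag}(\xi_i/\nu_i)$, the $(i,j)$ entry of $B+E\mathcal{N}^{-1}\Xi$ is $\overline{S}_i\big(\beta_{ij}+\tfrac{\xi_j}{\nu_j}\epsilon_{ij}\big)=\overline{S}_i\widetilde{\beta}_{ij}$, which is exactly the new-infection matrix $\widetilde{F}$ of the reduced SIR submodel, while $\widetilde{V}$ is its transition matrix. Thus $\rho\big((B+E\mathcal{N}^{-1}\Xi)\widetilde{V}^{-1}\big)=\rho(\widetilde{F}\,\widetilde{V}^{-1})=\widetilde{R}_e$, giving $\widehat{R}_e=\widetilde{R}_e$.

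The computation is essentially routine linear algebra; the only genuinely delicate point is the choice of the $F$/$V$ splitting, i.e.\ justifying that environmental shedding is a transition rather than a new infection, and confirming that this is the splitting under which the next-generation matrix reproduces the biologically meaningful reproduction number. I expect this modelling justification, together with verifying invertibility of $\widetilde{V}$, to be where the argument requires the most care.
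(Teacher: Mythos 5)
Your proposal is correct and follows essentially the same route as the paper: both write the next-generation matrices in block-triangular form with the ordering $(I_1,\dots,I_N,G_1,\dots,G_N)$, invert $V$ block-wise, and read off $\widehat{R}_e$ as the spectral radius of the top-left block $\left(B+E\mathcal{N}^{-1}\Xi\right)\widetilde{V}^{-1}$, whose entries are exactly $\overline{S}_i\widetilde{\beta}_{ij}$ against the same transition matrix. Your write-up is in fact slightly more careful than the paper's on two points the paper glosses over --- the sign of the shedding block in $V$ (the paper's stated $B=\mathrm{diag}(\xi_i)$ and its inversion formula each carry a sign slip that cancel) and the explicit M-matrix justification that $\widetilde{V}$ is invertible.
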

\begin{proof}
To find the reproduction number, $\widehat{R}_e$, for the autonomous version of (\ref{mod}), we utilize the standard next-generation approach \cite{vanD}.  Then the infection component linearization at the disease-free equilibrium is $\dot{x}=(F-V)x$, where the $2N\times 2N$ matrices $F$ and $V$ can be written in the block-triangular form:
\begin{align*}
 F&=\begin{pmatrix} D & E \\ 0 & 0 \end{pmatrix}, \quad V=\begin{pmatrix} A & 0 \\ B & C \end{pmatrix},
 \end{align*}
 in which $D,E,A,B,C$ are $N\times N$ matrices.  Here $B$ and $C$ are diagonal matrices with $\xi_i$ and $\nu_i$ ($i=1,\dots n$) as the respective diagonal entries. The entries of matrices $D$, $E$ and $A$ are as follows:  $D_{i,j}=\beta_{ij}\overline{S}_i, \ E_{i,j}=\epsilon_{ij}\overline{S}_i$ and $ A_{i,j}=(\mu_i+\gamma_i+k_{ii})\delta_{ij} + (1-\delta_{ij})(-k_{ij})$, where $\delta_{ij}$ is the Kronecker delta function and $\overline{S}=\left(\overline{S}_1,\dots,\overline{S}_N\right)$ is the disease-free  equilibrium. 
Then
$$ V^{-1}=\begin{pmatrix} A^{-1} & 0 \\ C^{-1}BA^{-1} & C^{-1} \end{pmatrix} \quad \text{and so} \ \ FV^{-1}=\begin{pmatrix}  DA^{-1} +EC^{-1}BA^{-1} & *  \\ 0 & 0 \end{pmatrix}. $$
$\widehat{R}_e$ is the spectral radius of $FV^{-1}$, so  $\widehat{R}_e=\rho(FV^{-1})= \rho\left( (D+EC^{-1}B)A^{-1} \right)$.  Now define $\widetilde{\beta}_{ij}=\beta_{ij}+\frac{\xi_j}{\nu_j}\epsilon_{ij}$ and consider the effective reproduction number, $\widetilde{R}_e$, of the autonomous version of (\ref{mod}) with no environmental transmission, but with direct transmission rate $\widetilde{\beta}_{ij}$.  It is not hard to see that $\widetilde{R}_e=\rho\left( (D+EC^{-1}B)A^{-1} \right)$.  Thus $\widehat{R}_e=\widetilde{R}_e$, and the result is obtained.
\end{proof}

Thus, for the autonomous case, the addition of environmental transmission to an SIR metapopulation model, by considering the system (\ref{mod}), does not qualitatively affect the effective reproduction number.  We should note that environmental transmission can result in a substantive delay in epidemic onset and its duration of first peak when compared to the analogous regime of direct transmission \cite{BTW11}, so the nature of the transient dynamics is affected by environmental transmission.

For simulations, we include the environmental parameters in the two-patch model (\ref{2patch}) and suppose the the patches are identical.  Then, using the notation from the previous identical patch case, we can write the infected-component equations as: 
\begin{align*}
\frac{dI_i}{dt}&=\tilde\beta S_i((1-f_e)((1-f)I_i+fI_j)+f_e((1-c_e)G_i+c_eG_j))-(\mu+\gamma)I_i-mI_i+mI_j  \tag{18} \label{envmodel} \\
\frac{dG_i}{dt}&=\xi I_i -\nu G_i,  \qquad i\neq j,\ i,j=1,2 ,
\end{align*}
where $\tilde{\beta}\equiv \beta+\frac{\xi}{\nu}\epsilon$ is the total transmission rate, $\beta\equiv \beta_{11}+\beta_{12}$, $\epsilon\equiv \epsilon_{11}+\epsilon_{12}$, $f_e\equiv \frac{\xi\epsilon}{\nu\tilde\beta}$ is the fraction of environmental transmission, $f\equiv \frac{\beta_{12}}{\beta}$ and $c_e\equiv \frac{\epsilon_{12}}{\epsilon}$ is fraction of cross-transmission for direct and environmental transmission, respectively.  Then, by Theorem \ref{envR0}, the effective reproduction number for the autonomous model with constant per capita vaccination rate of susceptibles, $\sigma$, is 
\begin{align*}
\widehat{R}_e=\frac{\widetilde{\beta} b}{(\mu+\sigma)(\mu+\gamma)}.
\end{align*}
Clearly, adding environmental transmission to the identical two-patch model (\ref{2patch}) does not alter the autonomous (without pulse vaccination) patch reproduction number $\widehat{R}_e$ if we re-define the transmission rate in (\ref{r0hat}) to be $\widetilde{\beta}$.

However, when pulse vaccination is introduced into the model, we find
that the fraction of environmental transmission, $f_e$, affects $R_e$.
In Figure \ref{fig:enva}, synchronous pulse vaccinations are compared
to constant vaccination as $f_e$ is varied.  The reproduction
number, $R_e$, under the pulse vaccination shows non-monotone behaviour
with respect to $f_e$, with a maximum occurring around $f_e=0.2$ and
the minimum occurring at $f_e=1$ (where all of the transmission is due
to the environment).  We remark that this graph looks the same for
many different values that we utilised for the migration rate and
cross transmission, in particular for the case of isolated patches.
The other parameters are as in Table \ref{table} with
$\tilde\beta=319.655$ days$^{-1}$ and $\nu=5 \ {\rm year}^{-1}$.  Note that the
parameter values for $\xi$ and $\epsilon$ are absorbed into $f_e$
through a rescaling.  Of course, we know from before that when
$f_e=0$, the vaccination strategies yield identical $R_e$ (proven in
the isolated patch case without seasonality).  In contrast, even in
the single patch case, environmental transmission can cause
disagreement in the reproduction numbers for pulse- and constant-vaccination strategies of equal cost.  This result can be viewed as an
impulsive analogue to that showing that sinusoidal transmission alters
the reproduction number for an SEIR model, but leaves the reproduction
number for the SIR model the same as with constant transmission
\cite{nicolas, browne}. Indeed, an SEIR model can be seen as a special
case of the no-impulse environmental transmission model.  In Figure
\ref{fig:envb}, we vary the phase difference, $\phi$, between the
vaccination pulses for three different values of $f_e$ (0, 0.2 and
1), when $m=1,f=c_e=0.1$.  The remaining parameters are as in Table \ref{table}.

  \section{Discussion} \label{sec7}
  
We have studied pulse vaccination in metapopulations, using poliovirus vaccination as our focus.  By allowing each patch to have distinct, periodic pulse-vaccination schedules connected with a common period --- along with considering seasonality, environmental transmission and two types of mobility --- we add more generality and complexity to prior models.  The effective reproduction number, $R_e$, is defined for the model, system (\ref{mod}), and found to be a global threshold.  If $R_e<1$, then the disease dies out; on the other hand, when $R_e>1$, the disease uniformly persists.
  
  Through theoretical analysis and numerical simulations, we were able to gain insights into optimising vaccination strategies in the metapopulation setting.  Theorem \ref{R0theorem} and the supporting numerical simulations show that synchronising vaccination pulses among connected patches is key in minimising the effective reproduction number.  An open problem is to analytically prove that synchronising the pulses minimises $R_e$ under more general conditions than are assumed in Theorem \ref{R0theorem}.  
  
Evidence from the epidemiological data suggests that pulse synchronisation 
 at different spatial scales  influences the effectiveness of a vaccination campaign.  Based on field studies, the WHO recommends that the duration of the vaccination campaign, in the form of national immunisation days, be as short as possible (1--2 days)  \cite{NID}.  The importance of administering the vaccine across a whole country in 1--2 days, as opposed to taking a longer period of time, may be in part due to the higher levels of synchronisation for the shorter duration vaccination campaign.  Increased seroconversion rates also seem to play a role in the optimality of mass vaccinations with short duration \cite{NID}.  On an international scale, the effectiveness of Operation MECACAR \cite{cdc} and a study of the effect of migration on measles incidence after mass vaccination in Burkina Faso \cite{measles} point to the importance of pulse synchronisation.  Our study highlights the critical role that the WHO and national governments can play in optimising disease control by synchronising mass vaccination campaigns among countries and regions.
  
  Another important problem is comparing the effectiveness of periodic mass (pulse) vaccination versus routine (constant) vaccination.  Disease-control authorities must consider certain logistical aspects, which may affect the cost of implementing a particular strategy.  From a mathematical perspective, the fundamental starting point for comparison is to consider $R_e$ for strategies of equal vaccinations per period.  For the case of no seasonality and environmental transmission, we find some  cases where the strategies are equivalent in terms of $R_e$.  When seasonality is included, a well-timed pulse-vaccination strategy (simultaneous pulses administered during the season before the high-transmission season) is optimal (assuming the patches have synchronous seasons), similar to results for the single-patch SIR model \cite{optimalvac}.  Future work will consider comparing pulse vaccination and constant-vaccination strategies in a stochastic model, which yields some insights not seen in the deterministic setting.

More work needs to be done in the case of environmental transmission.  When indirect transmission was considered to be a major mode of transmission in other studies,  a delay in epidemic onset  and its extension when compared to the analogous regime of direct transmission were observed \cite{BTW11}. This could be explained by the persistence of the virus in the environment leading to new infections generated over a longer duration than that of the direct contact. Such a two-step mechanism, with a human-to-environment segment and an environment-to-human segment, could lead to delay and extension of the effective infectious period when compared to that of direct human-to-human transmission \cite{BTW11}.  Interestingly, we found that varying the fraction of environmental transmission in the system alters the effective reproduction number $R_e$ under pulse vaccination, contrary to results for the case of constant-rate vaccination.  Further consideration of the interaction of environment-induced delay with the influence of seasonality on environmental transmission and pulse vaccination is the subject of our ongoing work. 

  Finally, we mention the importance of incorporating mobility and spatial structure into disease models.  In addition to our findings about the how mobility induces an advantage to synchronise pulse vaccination, population movement has other implications for disease control.  As found in previous work on autonomous models \cite{Smith?Li}, imbalance in migration rates among the patches can have a large effect on the overall reproduction number, which may alter the optimal vaccine distribution among patches or may influence disease-control strategies related to movement restriction.  The combination of population movement with complexities of control strategies and disease transmission presents many problems for which mathematical modelling may yield valuable insight.

\section*{Acknowledgements}

We are grateful to two anonymous reviewers whose comments greatly improved the manuscript. RJS?\ is supported by an NSERC Discovery Grant. For citation purposes, note that the question mark in ``Smith?" is part of his name.

\FloatBarrier

\end{document}